\newcommand\labelwithproof[1]{\label{#1}\marginnote{\footnotesize{See the proof of \Cref{#1} at page~\pageref{proof-#1}.}}}
\newrobustcmd\pto{\rightharpoonup}
\newrobustcmd\Nats{\mathbb{N}}
\newrobustcmd\Zed{\mathbb{Z}}
\newrobustcmd\Ints{\mathbb{Z}}
\newrobustcmd\powerset{\mathcal P}
\knowledgenewrobustcmd\down{\cmdkl{\downarrow}}
\newrobustcmd\flatten{\mathop{\kl[\flatten]{\textnormal{flat}}}}
\knowledge\flatten{notion}
\newrobustcmd\evalexp{{\kl[\evalexp]{\textnormal{eval}}_{\lettermap}}}
\knowledge\evalexp{notion}
\newrobustcmd\unfoldexp{{\kl[\unfoldexp]{\textnormal{unfold}}}}
\knowledge\unfoldexp{notion}
\newrobustcmd\alphabet{\kl[\alphabet]{\Sigma}}
\knowledge\alphabet{notion}
\newrobustcmd\dom{\mathop{\kl[\dom]{\textnormal{dom}}}}
\knowledge\dom{notion}
\newrobustcmd\finite{\mathop{\kl[\finite]{\textnormal{finite}}}}
\knowledge\finite{notion}
\newrobustcmd\isSuccessor{\mathop{\kl[\isSuccessor]{\textnormal{isSuccessor}}}}
\knowledge\isSuccessor{notion}
\newrobustcmd\isLimit{\mathop{\kl[\isLimit]{\textnormal{isLimit}}}}
\knowledge\isLimit{notion}
\newrobustcmd\last{\mathop{\kl[\last]{\textnormal{last}}}}
\knowledge\last{notion}
\newrobustcmd\ord{^{\kl[\ord]{\textnormal{ord}}}}
\knowledge\ord{notion}
\newrobustcmd\ordp{^{\kl[\ordplus]{\textnormal{ord{+}}}}}
\knowledge\ordplus{notion}
\newrobustcmd\Womega{^{\kl[\Womega]{\omega}}}
\knowledge\Womega{notion}
\newrobustcmd\FOkeq{\mathrel{\kl[\FOkeq]{\equiv_{\FOk}}}}
\knowledge\FOkeq{notion}
\newrobustcmd\unit{\kl[\unit]{1}}
\knowledge\unit{notion}
\newrobustcmd\condensation{\mathrel{\kl[\condensation]{\sim}}}
\knowledge\condensation{notion}
\newrobustcmd\fincondensation[1]{\mathrel{\kl[\fincondensation]{\sim_{\textnormal{fin},#1}}}}
\knowledge\fincondensation{notion}
\newrobustcmd\pluslinord{\mathrel{\kl[\pluslinord]+}}
\knowledge\pluslinord{notion}
\newrobustcmd\timeslinord{\mathrel{\kl[\timeslinord]\cdot}}
\knowledge\timeslinord{notion}
\newrobustcmd\plusord{\pluslinord}
\knowledge\plusord{notion}
\newrobustcmd\timesord{\timeslinord}
\knowledge\timesord{notion}
\newrobustcmd\leqord{\mathrel{\kl[\leqord]\preccurlyeq}}
\newrobustcmd\geqord{\mathrel{\kl[\leqord]\succcurlyeq}}
\knowledge\leqord{notion}
\newrobustcmd\lessord{\mathrel{\kl[\lessord]\prec}}
\newrobustcmd\greaterord{\mathrel{\kl[\lessord]\succ}}
\knowledge\lessord{notion}
\newrobustcmd\uncountord{\kl[\uncountord]\omega_1}
\knowledge\uncountord{notion}
\newrobustcmd\expord[2]{\kl[\expord]{{#1}^{#2}}}
\knowledge\expord{notion}
\newrobustcmd\lub{\kl[\lub]\bigvee}
\knowledge\lub{notion}
\newrobustcmd\factp[1][]{^{\kl[\factp]{n!#1}}}
\knowledge\factp{notion}
\newrobustcmd\monoid{\kl[\monoid]{\mathcal{M}}}
\knowledge\monoid{}
\newrobustcmd\monoidmerge{\kl[\monoidmerge]{\mathcal{M}}}
\knowledge\monoidmerge{notion}
\newrobustcmd\ordsemigroup{\kl[\ordsemigroup]{\mathcal{S}}}
\knowledge\ordsemigroup{notion}
\newrobustcmd\ordmonoid{\kl[\ordmonoid]{\mathcal{M}}}
\knowledge\ordmonoid{notion}
\newrobustcmd\Aomega{^{\kl[\Aomega]{\underline{\omega}}}}
\knowledge\Aomega{notion}
\newrobustcmd\Acdot{\mathbin{\kl[\Acdot]{\underline{\cdot}}}}
\knowledge\Acdot{notion}
\newrobustcmd\Aunit{\kl[\Aunit]{\underline{1}}}
\knowledge\Aunit{notion}
\newrobustcmd\product{\kl[\product]{\pi}}
\knowledge\product{notion}
\newrobustcmd\idem[1][]{^{\kl[\idem]{\textnormal{idem}#1}}}
\knowledge\idem{notion}
\newrobustcmd\Pset{\mathop{\kl[\Pset]{\mathcal{P}}}}
\knowledge\Pset{notion}
\newrobustcmd\Pcdot{\mathrel{\kl[\Pcdot]{\underline\cdot}}}
\knowledge\Pcdot{notion}
\newrobustcmd\Pproduct{\kl[\Pproduct]{\pi}}
\knowledge\Pproduct{notion}
\newrobustcmd\Pomega{^{\kl[\Pomega]{\underline\omega}}}
\knowledge\Pomega{notion}
\newrobustcmd\Punit{\kl[\Punit]{\underline1}}
\knowledge\Punit{notion}
\newrobustcmd\PFOproduct{\kl[\PFOproduct]{\rho}} 
\knowledge\PFOproduct{notion}
\knowledgenewcommand\Closgo[1]{\cmdkl{\langle}#1\cmdkl{\rangle}^{\cmdkl{\textnormal{grp}{,}\omega}}}
\knowledgenewcommand\Closgord[1]{\cmdkl\langle#1\cmdkl\rangle^{\cmdkl{\textnormal{grp{,}ord}}}}
\knowledgenewcommand\Closgordp[1]{\cmdkl\langle#1\cmdkl\rangle^{\cmdkl{\textnormal{grp{,}ord+}}}}
\knowledgenewcommand\Closgp[1]{\cmdkl\langle#1\cmdkl\rangle^{\cmdkl{\textnormal{grp{+}}}}}
\knowledgenewcommand\Closgs[1]{\cmdkl\langle#1\cmdkl\rangle^{\cmdkl{\textnormal{grp{*}}}}}
\knowledgenewcommand\Closp[1]{\cmdkl\langle#1\cmdkl\rangle^{\cmdkl{+}}}
\knowledgenewcommand\Closs[1]{\cmdkl\langle#1\cmdkl\rangle^{\cmdkl{*}}}
\knowledgenewcommand\Closo[1]{\cmdkl\langle#1\cmdkl\rangle^{\cmdkl{\omega}}}
\knowledgenewcommand\Closord[1]{\cmdkl\langle#1\cmdkl\rangle^{\cmdkl{\textnormal{ord}}}}
\knowledgenewcommand\Closordp[1]{\cmdkl\langle#1\cmdkl\rangle^{\cmdkl{\textnormal{ord{+}}}}}
\newrobustcmd\Var{\kl[\Var]{\textnormal{Var}}}
\knowledge\Var{notion}
\newrobustcmd\valuation{\kl[\valuation]{\nu}}
\knowledge\valuation{notion}
\newrobustcmd\FO{\ensuremath{\kl[\FO-formula]{\textnormal{FO}}}}
\newrobustcmd\FOk{\ensuremath{\textnormal{FO}_k}}
\newrobustcmd\closureFOk[1]{\kl[\closureFOk]{[#1]_{\FOk}}}
\knowledge\closureFOk{notion}
\knowledgenewcommand\Sat{\cmdkl{\textnormal{Sat}}}
\newrobustcmd\cupfun{\mathrel{\kl[\cupfun]\cup}}
\knowledge\cupfun{notion}
\newrobustcmd\concatfun{\mathrel{\kl[\concatfun]{\odot}}}
\knowledge\concatfun{notion}
\newrobustcmd\focomp[1]{\mathrel{\kl[\focomp]{\circ_{#1}}}}
\knowledge\focomp{notion}
\newrobustcmd\PL{\kl[\PL]{\textnormal{Pl}}_{\FO}}
\knowledge\PL{notion}
\newrobustcmd\PLp{\kl[\PLp]{\textnormal{Pl^{+}}}}
\knowledge\PLp{notion}
\newrobustcmd\PLs{\kl[\PLs]{\textnormal{Pl^{*}}}}
\knowledge\PLs{notion}
\newrobustcmd\PLo{\kl[\PLo]{\textnormal{Pl^{\omega}}}}
\knowledge\PLo{notion}
\newrobustcmd\PLord{\kl[\PLord]{\textnormal{Pl^{\textnormal{ord}}}}}
\knowledge\PLord{notion}
\newrobustcmd\PLordp{\kl[\PLord]{\textnormal{Pl^{\textnormal{ord{+}}}}}}
\knowledge\PLordp{notion}
\newrobustcmd\Pmerge{^{\kl[\Pmerge]{\textnormal{grp}}}}
\knowledge\Pmerge{notion}
\newrobustcmd\Amerge{^{\kl[\Amerge]{\textnormal{grp}}}}
\knowledge\Amerge{notion}
\newrobustcmd\greenJ{\mathcal{J}}
\newrobustcmd\Jeq{\mathrel{\kl[\Jeq]{\greenJ}}}
\knowledge\Jeq{notion}
\newrobustcmd\Jleq{\mathrel{\kl[\Jleq]\leqslant_{\kl[\Jleq]{\greenJ}}}}
\knowledge\Jleq{notion}
\newrobustcmd\Jgeq{\mathrel{\kl[\Jgeq]\geqslant_{\kl[\Jgeq]{\greenJ}}}}
\knowledge\Jgeq{notion}
\newrobustcmd\Jl{\mathrel{\kl[\Jl]<_{\kl[\Jl]{\greenJ}}}}
\knowledge\Jl{notion}
\newrobustcmd\Jg{\mathrel{\kl[\Jg]>_{\kl[\Jg]{\greenJ}}}}
\knowledge\Jg{notion}
\newrobustcmd\greenL{\mathcal{L}}
\newrobustcmd\Leq{\mathrel{\kl[\Leq]{\greenL}}}
\knowledge\Leq{notion}
\newrobustcmd\Lleq{\mathrel{\kl[\Lleq]\leqslant_{\kl[\Lleq]{\greenL}}}}
\knowledge\Lleq{notion}
\newrobustcmd\Lgeq{\mathrel{\kl[\Lgeq]\geqslant_{\kl[\Lgeq]{\greenL}}}}
\knowledge\Lgeq{notion}
\newrobustcmd\Ll{\mathrel{\kl[\Ll]<_{\kl[\Ll]{\greenL}}}}
\knowledge\Ll{notion}
\newrobustcmd\Lg{\mathrel{\kl[\Lg]>_{\kl[\Lg]{\greenL}}}}
\knowledge\Lg{notion}
\newrobustcmd\greenR{\mathcal{R}}
\newrobustcmd\Req{\mathrel{\kl[\Req]{\greenR}}}
\knowledge\Req{notion}
\newrobustcmd\Rleq{\mathrel{\kl[\Rleq]\leqslant_{\kl[\Rleq]{\greenR}}}}
\knowledge\Rleq{notion}
\newrobustcmd\Rgeq{\mathrel{\kl[\Rleq]\geqslant_{\kl[\Rleq]{\greenR}}}}
\newrobustcmd\Rl{\mathrel{\kl[\Rl]<_{\kl[\Rl]{\greenR}}}}
\knowledge\Rl{notion}
\newrobustcmd\Rg{\mathrel{\kl[\Rl]>_{\kl[\Rl]{\greenR}}}}
\newrobustcmd\greenD{\mathcal{D}}
\newrobustcmd\Deq{\mathrel{\kl[\Deq]{\greenD}}}
\knowledge\Deq{notion}
\newrobustcmd\greenH{\mathcal{H}}
\newrobustcmd\Heq{\mathrel{\kl[\Heq]{\greenH}}}
\knowledge\Heq{notion}
\newrobustcmd\lettermap{\kl[\lettermap]{\sigma}}
\knowledge\lettermap{notion}
\newrobustcmd\ordmap{\kl[\ordmap]{\sigma}^{\kl[\ordmap]{\textnormal{ord}}}}
\knowledge\ordmap{notion}
\newrobustcmd\singordmap{\kl[\singordmap]{\tilde{\sigma}^{\textnormal{ord}}}}
\knowledge\singordmap{notion}
\newrobustcmd\exJ{\kl[\exJ]{J}}
\knowledge\exJ{notion}
\newrobustcmd\exK{\kl[\exK]{K}}
\knowledge\exK{notion}
\newrobustcmd\exL{\kl[\exL]{L}}
\knowledge\exL{notion}
\newcommand{\ie}{{\it i.e.}}
\title{First-order separation over countable ordinals%
\thanks{Paper accepted at \href{https://etaps.org/2022/fossacs}{FoSSaCS 2022}, and licensed under \href{http://creativecommons.org/licenses/by/4.0/}{CC BY 4.0}.\\ This work was supported by the European Research Council (ERC) under the 
European Union’s Horizon 2020 research and innovation programme (\href{https://cordis.europa.eu/project/id/670624}{ERC 
DuaLL, grant agreement No. 670624}), and by the DeLTA ANR project 
(\href{https://anr.fr/Projet-ANR-16-CE40-0007}{ANR-16-CE40-0007}).}}
\author[$\dag$]{\href{https://orcid.org/0000-0001-6529-6963}{Thomas Colcombet}}
\author[$\dag$]{\href{https://orcid.org/0000-0002-6360-6363}{Sam van Gool}}
\author[$\ddag$]{\href{https://orcid.org/0000-0002-1418-3405}{Rémi Morvan}}
\affil[$\dag$]{IRIF, Université de Paris \textit{\&} CNRS}
\affil[$\ddag$]{École Normale Supérieure Paris-Saclay}
\date{\today}
\newrobustcmd\remi[1]{\textcolor{green}{R: #1}}
\newrobustcmd\sam[1]{\textcolor{purple}{S: #1}}
\newrobustcmd\thomas[1]{\textcolor{red}{T: #1}}
\begin{document}

\maketitle

\begin{abstract}
  We show that the existence of a "first-order formula" "separating" two monadic second order formulas over "countable ordinal words" is decidable.
  This extends the work of Henckell and Almeida on "finite words", and of Place and Zeitoun on "$\omega$-words".
  For this, we develop the algebraic concept of "monoid" (resp. $\omega$-semigroup, resp. "ordinal monoid") with "aperiodic merge", an extension of "monoids" (resp. $\omega$-semigroup, resp. "ordinal monoid")  that explicitly includes a new operation capturing the loss of precision induced by first-order indistinguishability.
  We also show the computability of "\FO-pointlike sets", and the decidability of the "covering problem" for "first-order logic" on "countable ordinal words".

  \emph{Keywords:} regular languages, separation, pointlike sets,
  countable ordinals, first-order logic, monadic second-order logic.

  \alert{This document contains internal hyperlinks, and is best read on an electronic device.}
\end{abstract}

\section{Introduction}
\label{section:introduction}

In this paper, we establish the decidability of "\FO-separability" over "countable ordinal words":
\begin{theorem}\AP\label{theorem:main}\phantomintro{separation problem}
	There is an algorithm which, given two "regular languages of countable ordinal words"~$K,L$, either:
	\begin{itemize}
	\item answers `""yes""', and outputs an ""\FO-separator"" which is an "\FO-formula"~$\varphi$
	which ""separates""~$K$ from~$L$, 
	"ie" such that~$u\models\varphi$ for all~$u\in K$, and~$v\models\neg\varphi$ for all~$v\in L$, or
	\item answers `""no""', and outputs a ""witness function"", i.e., a  computable function
	taking as input an "\FO-sentence" $\varphi$
	and returning a pair of words $(u,v) \in K \times L$
	such that $u \models \varphi$ if and only if
	$v \models \varphi$.
	\end{itemize}
\end{theorem}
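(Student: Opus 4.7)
The plan is to follow the algebraic strategy pioneered by Henckell and Almeida for finite words and adapted by Place and Zeitoun for $\omega$-words, now lifted to the setting of countable ordinal words. The first step is to reduce the separation problem to a purely algebraic question. Given the two regular languages $K$ and $L$, compute a finite ordinal monoid $\ordmonoid$ together with a recognising morphism $h$ such that $K = h^{-1}(P)$ and $L = h^{-1}(Q)$ for some subsets $P,Q \subseteq \ordmonoid$. Then $K$ and $L$ are \FO-separable if and only if there is no \FO-pointlike set $X \subseteq \ordmonoid$ meeting both $P$ and $Q$. Hence the whole problem reduces to effectively computing, or at least testing membership in, the family of \FO-pointlikes of $\ordmonoid$.

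Next, I would set up the algebraic machinery of ordinal monoids with aperiodic merge. The intuition is that \FO{} cannot distinguish between a product of elements indexed by an ordinal $\alpha$ and a product indexed by $\alpha' \geqord \alpha$ that is ``large enough'', provided they contain the same idempotent tails: the aperiodic merge operator is exactly the algebraic shadow of this loss of precision, living inside the power ordinal monoid $\Pset(\ordmonoid)$. The candidate description of \FO-pointlikes is then the closure of the singletons $\{\{m\} : m \in \ordmonoid\}$ inside $\Pset(\ordmonoid)$ under the induced ordinal product and the merge. Because $\ordmonoid$ is finite and $\Pset(\ordmonoid)$ is finite, this closure is computable in finitely many steps; deciding the existence of a problematic $X$ meeting $P$ and $Q$ then becomes a finite search.

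The main obstacle, as usual for this kind of result, is the correctness of this algebraic description, and in particular its completeness: every set of elements that \FO{} fails to distinguish must already arise from the closure under product and merge. Soundness (every $X$ in the closure is \FO-pointlike) should follow by a straightforward induction on the construction of $X$, using Ehrenfeucht-Fra\"iss\'e games to glue together words realising each element of $X$ while staying \FOk-equivalent up to the appropriate quantifier depth. Completeness is the delicate direction: given words $u_1,\dots,u_n$ of possibly transfinite length that are pairwise \FOk-equivalent, I would decompose them along a common \FO-condensation — essentially a partition of the ordinal domain into \FOk-indistinguishable blocks, itself of ordinal type — and prove by transfinite induction on this condensation that the tuple of their images $(h(u_1),\dots,h(u_n))$ can be produced from singletons by ordinal products and merges. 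The role of the aperiodic merge is precisely to absorb the behaviour of limit blocks, where the condensations of the $u_i$ need not be isomorphic but only yield the same idempotent power under $h$.

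Finally, once the algebraic characterisation is in place, both outputs of the \kl{algorithm} fall out of the same analysis. If no \FO-pointlike set meets both $P$ and $Q$, the construction of the closure yields a finite certificate that can be unwound into an explicit \FO-formula $\varphi$ separating $K$ from $L$, by turning each application of product and merge into a nested quantifier block of appropriate depth. If, on the contrary, some \FO-pointlike $X$ meets both $P$ and $Q$, pick $p \in X \cap P$ and $q \in X \cap Q$: for each input \FO-sentence $\varphi$ of quantifier depth $k$, use the completeness proof to effectively produce countable ordinal words $u \in h^{-1}(p) \subseteq K$ and $v \in h^{-1}(q) \subseteq L$ that are \FOk-equivalent, which is exactly the required witness function.
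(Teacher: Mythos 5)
Your overall architecture matches the paper: reduce to a finite ordinal monoid recognising both languages, compute the closure of the letter singletons inside the power ordinal monoid under product, $\omega$-iteration and merge, answer according to whether some closure element meets both accepting sets, and obtain the witness function for the negative answer by a structural induction that glues together $\FOk$-equivalent realising words via Ehrenfeucht--Fra\"iss\'e compositionality. That half of your proposal is sound and is essentially \Cref{lemma:saturation-sets-are-pointlike} and \Cref{proposition:no}.

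The gap is in the completeness direction, which is the heart of the matter. You propose to take pairwise $\FOk$-equivalent words $u_1,\dots,u_n$, decompose them along a common condensation into $\FOk$-indistinguishable blocks, and argue by transfinite induction that the tuple of their images is generated by products and merges. Two problems. First, there is no canonical common condensation for a tuple of pairwise $\FOk$-equivalent ordinal words: their domains can be non-isomorphic ordinals, the blocks witnessing equivalence in an Ehrenfeucht--Fra\"iss\'e game do not line up position by position, and you give no well-founded measure for the transfinite induction. Second, and more importantly, this direction of argument does not by itself produce the separator: the claim that each application of product and merge can be turned into a nested quantifier block is precisely what needs proof. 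The paper instead establishes the existence of an \FO-approximant, namely an \FO-definable map $\rho\colon M\ord\to M$ with $\product(u)\leqslant\rho(u)$ for all $u$, constructed by induction on the size of the saturation closure $\Closgordp{A}$ and guided by a trichotomy principle resting on Green's relations (in particular, that $\omega$-stable $\Jeq$-classes are $\Heq$-trivial) and, for the one-letter case, on Cantor normal form. The separator is then simply the preimage under $\rho$ of the family of sets meeting $F_K$, and the pointlike characterisation you take as a starting point is a corollary of the approximant's existence rather than an ingredient of it. Without an argument of this kind, or a genuinely worked-out substitute for your condensation induction, the positive branch of your algorithm is not justified. Note also that the paper's saturation is deliberately not closed under subsets, a detail it identifies as essential for the induction over ordinals to go through.
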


The decidability of "\FO-separability" was previously only known for finite words \cite{henckell1988,almeida1999some,zeitoun2016separating,gs2019merge}
and for words of length~$\omega$ \cite{zeitoun2016separating}.
"Countable ordinal words" are sequences of letters that are indexed by a countable total well-ordering, \ie, up to isomorphism, by a countable ordinal. There is a natural notion of
"regular languages@@COW" over these objects which can be equivalently described in terms of logic (either monadic second-order logic or weak monadic second-order logic), automata (Büchi introduced a notion of
automata for "countable ordinal words" \cite{buchi1973monadic},
which was studied in more detail by Wojciechowski
\cite{wojciechowski1984classes} and
which generalises Choueka's automata \cite{choueka1978finite}
for words of length at most \(\omega^n\)---the fact that Choueka's 
automata can be seen as a restriction of Büchi's automata
for "countable ordinals" was proven by Bedon \cite{bedon1996finite}), rational expressions (introduced by Wojciechowski \cite{wojciechowski1985finite}), or algebra (recognisable by finite "ordinal monoids"---introduced by Bedon and Carton
\cite{bedon1998eilenberg}). A detailed survey of the equivalence
between all these notions can be found in \cite{bedon1998these}.

Our algorithm follows the approach initiated by Henckell, and constructs the "\FO-pointlike sets" in an "ordinal monoid" that recognises the two input languages simultaneously.  
"\FO-pointlike sets" are subsets of a monoid whose elements are inherently indistinguishable by first-order logic. Our completeness proof for the algorithm follows a scheme similar to the one followed by Place and Zeitoun in the context of finite and $\omega$-words \cite{zeitoun2016separating}, which was inspired by Wilke's characterisation of "\FO-definable@@lang" languages \cite{wilke1999classifying}. We had to make several substantial changes to this approach for the proofs to generalize from finite and $\omega$-words to the setting of "countable ordinal words". A seemingly slight modification of the notion of "saturation" (\Cref{definition:sat}) allows for a careful redesign of several of the core lemmas in the proof of completeness, and in particular the construction of an "\FO-approximant" in \Cref{section:yes} below.

\paragraph*{Related work}

This work lies in a line of research that aims to obtain a decidable  understanding of the expressive power of subclasses of the class of regular languages.
The seminal work in this area is the Schützenberger-McNaughton-Papert theorem \cite{schutzenberger1965finite,mcnaughton1971counter} which effectively characterizes the languages of finite words definable in first-order logic as the ones which have an "aperiodic" "syntactic monoid". This theorem was at the origin of a large body of work that studies classes of languages through the corresponding classes of monoids, including for instance Simon's result characterising piecewise-testable languages via "\(\Jeq\)-trivial" monoids \cite{simon1975piecewise}.
"\FO-pointlike sets" are also known in the literature as "aperiodic pointlike sets", and were first studied and shown to be computable by Henckell \cite{henckell1988}, in the context of the Krohn-Rhodes semigroup complexity problem. The computability of pointlike sets was shown to be equivalent to the decidability of the "covering problem" by Almeida \cite{almeida1999some}. Alternative proofs of separation and covering problems for {\FO} were given recently in \cite{zeitoun2016separating,gs2019merge}, and, ever since Henckell's work, the computability of "\FO-pointlike sets" was also extended to pointlike sets for other varieties---for example
\cite{ash1991inevitable} for the variety of finite groups, 
\cite{az1997pseudovariety} for the variety of \(\Jeq\)-trivial finite semigroups
and \cite{gs2019groups} for varieties of finite semigroups determined by 
a variety of finite groups; also see \cite{gs2019groups} for further references. Place and Zeitoun recently used pointlike sets, in the form of "covering problems" \cite{pz2018covering}, to resolve long-standing open membership problems for the lower levels of the dot-depth and of the Straubing-Thérien hierarchies \cite{place2018complexity,pz2019all,pz2021separation}.

Another, orthogonal, line of research consists in the extension of the notions of regularity (logic/automata/rational expressions/algebra) to models beyond finite words. This is the case for  finite or infinite trees \cite{rabin69}. In this paper, we are concerned with words that go beyond finite, such as words of length~$\omega$ \cite{buchi62,wilke1993algebraic,perrinpinpin}, of countable ordinal length \cite{bedon1998these,bedon1996finite}, of countable scattered\footnote{\AP A linear ordering is ""scattered@@linord"" if it does not contain a dense subordering.} length \cite{rispal2004thesis,rispal2005complementation}, or of general countable length \cite{rabin69,shelah75,carton2018algebraic}. 

These two branches have also been studied jointly, and first-order logic was characterised on words of length~$\omega$ \cite{perrin1984recent}, of countable ordinal length \cite{bedon2001logic},
of countable scattered length \cite{bes2011algebraic} (and in \cite{bedon2012regular} for first-order augmented with quantifiers over Dedekind cuts), and for words of countable length \cite{colcombet2015limited} (as well as other logics \cite{colcombet2015limited,ManuelSreejith16,adsul2021firstorder}). Prior to the current work, the questions of computing the "\FO-pointlike sets" and deciding "\FO-separation" for languages of infinite words had only been investigated for words of length~$\omega$  \cite{zeitoun2016separating}.

\paragraph*{Structure of the document}
In \Cref{section:preliminaries}, we introduce important definitions for manipulating infinite words in algebraic terms ("ordinal monoids" and their powerset), and in logical terms ("first-order logic" and "first-order definable maps").
In \Cref{section:algorithm}, we describe the "algorithm", and in particular its core, a "saturation" construction.
The correctness of the algorithm is then proved in \Cref{section:no}, and the completeness in \Cref{section:yes}.
In \Cref{section:related}, we show two stronger results that arise from the same technique: the decidability of the "covering problem" and the computability of "pointlikes".
\Cref{section:conclusion} concludes.

\section{Preliminaries}
\label{section:preliminaries}

\subsection{Ordinals}
\AP A ""linear ordering"" is a set equipped with a total order.
It is ""countable@@linord"" (resp. ""finite@@linord"") if the underlying set is countable (resp. finite).
Let $\alpha$ and $\beta$ be two "linear orderings".
A ""morphism@@linord"" from $\alpha$ to $\beta$ is a monotonic function,
\AP
and an ""isomorphism@@linord"" between $\alpha$ and $\beta$
is a bijective "morphism@@linord". 
\AP
The (ordered) ""sum@@linord"" of two linear orders $\alpha$ and $\beta$ is denoted by $\alpha \intro*\pluslinord \beta$ and is defined, as usual, on the disjoint union of the linear orders $\alpha$ and $\beta$, by further postulating that every element of $\alpha$ is below every element of $\beta$. The ""product@@linord"" of two linear orders is denoted by $\alpha \intro*\timeslinord \beta$ and is defined to be the right-to-left lexicographic ordering on the Cartesian product of the two orders, "ie", $(x,y) \leqslant (x', y')$ iff $y < y'$ or $y = y'$ and $x \leqslant x'$. The $n$-fold product of $\alpha$ with itself is denoted by $\alpha^n$.
%
\AP A "linear ordering" is ""well-founded""
when it does not contain an infinite strictly decreasing sequence.
\AP An ""ordinal"" is a "well-founded" "linear ordering", considered only up to
"isomorphism@@linord" of linear orderings.
\AP The empty "linear ordering",
the "linear ordering" with a single element and
the "linear ordering" of natural numbers are all "ordinals",
and are denoted $0$, $1$ and $\omega$, respectively.
The class of all "ordinals" is itself totally ordered by the ""embedding@@linord""
relation: $\alpha \intro*\leqord \beta$ means that there exists an injective monotonic function from $\alpha$ to $\beta$.
The relation $\intro*\lessord$ denotes the strict ordering associated
with $\leqord$. \AP An "ordinal" is a ""successor ordinal""  if it has a maximum,
and a ""limit ordinal"" otherwise.

%

\subsection{Ordinal words}

\AP Given a set $X$, a ""word""~$w$ over $X$ is a map from some "linear ordering" to $X$.
The "linear ordering" is called the ""domain"" of~$w$, and denoted $\intro*\dom (w)$.
\AP
A "word" is ""countable@@word"" (resp. ""finite@@word"", resp. ""scattered@@word"", resp. ""$\omega$-word""),
if its "domain" is~"countable@@linord" (resp. "finite@@linord", resp. "scattered@@linord", resp. $\omega$).
\AP In this paper, a ""countable ordinal word"" is a "word" that has a "countable@@linord" and "ordinal@@linord" domain
(hence, the countability assumption in silently assumed throughout the paper).
\AP The set of all "finite words" over $X$ is 
denoted by $X^\ast$, and 
the collection of all "countable ordinal words" over $X$ is denoted by $X\ord$. Similarly, the set of finite non-empty words is denoted by $X^+$ and the collection of non-empty countable ordinal words is denoted by 
$X\intro*\ordp$.
\AP The concatenation of two "countable ordinal words" $u$ and $v$
over $X$ is the word $u\cdot v: \dom(u) \plusord \dom(v) \to X$
over $X$ defined by $(u\cdot v)_\iota := u_\iota$
if $\iota \in \dom(u)$ and $(u\cdot v)_\iota := v_\iota$
if $\iota \in \dom(v)$. 
\AP If $w$ is a "countable ordinal word",
we define its ""omega iteration"", denoted by $w\intro*\Womega$, as
the word with domain $\dom(w)\timesord\omega$ defined by $(w^\omega)_{(\iota, n)} := w_{\iota}$ for every $\iota \in \dom(w)$ and $n \in \omega$.
For example, if $a,b\in X$, then the "omega iteration" $(ab)\Womega$
of the two-letter word $ab$ is the word $ababab\cdots$ with domain $2 \timesord \omega = \omega$.



\subsection{Ordinal monoids}

\AP A ""semigroup"" is a set $S$ equipped with an associative binary product, denoted by~$\cdot$. 
A ""monoid"" is a "semigroup" with a
distinguished neutral element for the product, denoted as $\intro*\unit$.
\AP 
An element $x \in S$ is called ""idempotent"" if $x^2 = x$. 
\AP In a finite finite semigroup~$S$, every element $x \in S$ has a unique ""idempotent
power"", denoted by\footnote{The standard notation is~$x^\omega$, but this notation conflicts with the linear ordering~$\omega$.
	It is sometimes denoted $x^\pi$ or $x^!$ when in the context
	of infinite words. We find the notation~$x\idem$ more self-explanatory.}
  $x\intro*\idem$, which we recall is the limit of the ultimately constant series $n\mapsto x^{n!}$. 
 We also denote~$x\reintro*\idem[+k]$, for $k$~integer,
the limit of the ultimately constant series $n\mapsto x^{n!+k}$. 
Note that $x\idem$ is the identity element of the unique maximal group inside the subsemigroup generated by~$x$.
\AP A finite semigroup is ""aperiodic"" (we equivalently write ""group-trivial"") if~$a\idem=a\idem[+1]$ for all of its elements~$a$.

\AP We now extend the notion of "monoid" to obtain an algebraic structure
in which one can evaluate a product indexed by any "countable ordinal". Let $\alphabet$ be any set, and $\alpha$ a "countable ordinal". 
\AP For any "word" $(w_\iota)_{\iota < \alpha}$ over the set $\alphabet\ord$ of "countable ordinal words"---i.e. $(w_\iota)_{\iota < \alpha}$ is a "word"
whose letters are "words" over $\alphabet$---
we define $\intro*\flatten(w_\iota \mid \iota < \alpha)$ to be the "word" over $\alphabet$
with "domain" $\sum_{\iota < \alpha} \dom(w_\iota)$,
which has the letter $(w_\iota)_{\kappa} \in \alphabet$ at position $(\iota,\kappa)$, for every $\iota \in \alpha$ and $\kappa \in \dom(w_\iota)$.

\begin{definition}\AP
	An ""ordinal monoid""\footnote{The object should probably be called a `countable ordinal monoid' since its intent is to model countable ordinal words. However the naming becomes clumsy for `finite countable ordinal monoids'...}
is a pair $\intro*\ordmonoid = (M,\product)$
where $M$ is a set and
$\intro*\product: M\ord \to M$ is a function,
called ""generalised product"", such that: 
\begin{itemize} 
	\item $\product(x) = x$ for every $x \in M$, and
	\item $\product((\product(u_\iota))_{\iota<\alpha})
	= \product(\flatten((u_\iota)_{\iota<\alpha}))$
	for every word $(u_\iota)_{\iota<\alpha} \in (M\ord)\ord$.
\end{itemize}
\end{definition}

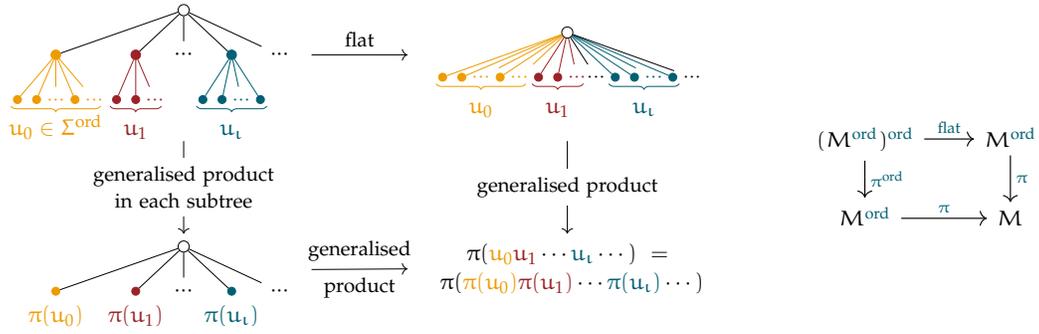
\begin{figure}%
	\begin{center}%
	\begin{tabular}{cc}%
		\hspace{-.8cm}
		\begin{tabular}{c}%
			\scalebox{.85}{\tikzset{
    c/.style = {align=center, inner sep=1.75pt, draw=black, circle},
    ca/.style = {
        align=center, inner sep=1.25pt, draw=colora, fill=colora, circle
    },
    cb/.style = {
        align=center, inner sep=1.25pt, draw=colorb, fill=colorb, circle
    },
    cc/.style = {
        align=center, inner sep=1.25pt, draw=colorc, fill=colorc, circle
    },
}
\newrobustcmd\smallcdots{{\cdot}{\cdot}{\cdot}}
\definecolor{colora}{HTML}{ee9b00}
\definecolor{colorb}{HTML}{9b2226}
\definecolor{colorc}{HTML}{005f73}
\begin{tikzpicture}
    \begin{scope}
    \tikzset{
        level 1/.style = {sibling distance = 1cm},
        level 2/.style = {sibling distance = .3cm},
        level distance = .7cm
    }
    \node[c, fill=white] at (0,0) {}
        child { node[ca, inner sep=1.5pt] {}
            child { node[ca] (al) {} edge from parent[colora] }
            child { node[ca] {} edge from parent[colora] }
            child {
                node[font=\scriptsize, colora] {$\smallcdots$}
                edge from parent[colora]
            }
            child { node[ca] {} edge from parent[colora] }
            child {
                node[font=\scriptsize, colora] (ar) {$\smallcdots$}
                edge from parent[colora]
            }
        }
        child[sibling distance = .75cm] { node[cb, inner sep=1.5pt] {}
            child { node[cb] (bl) {} edge from parent[colorb] }
            child { node[cb] {} edge from parent[colorb] }
            child {
                node[font=\scriptsize, colorb] (br) {$\smallcdots$}
                edge from parent[colorb]
            }
        }
        child[sibling distance = .75cm] {
            node[font=\small] {$\smallcdots$}
        }
        child[sibling distance = .75cm] { node[cc, inner sep=1.5pt] {}
            child { node[cc] (cl) {} edge from parent[colorc] }
            child { node[cc] {} edge from parent[colorc] }
            child {
                node[font=\scriptsize, colorc] {$\smallcdots$}
                edge from parent[colorc]
            }
            child { node[cc] (cr) {} edge from parent[colorc] }
        }
        child[sibling distance = .75cm] {
            node[font=\small] (right-tree-ul) {$\smallcdots$}
        };
        \draw[decorate, decoration={brace, mirror}, colora] ($(al)+(-.1,-.1)$) -- ($(ar)+(.1,-.1)$) 
        node[font=\small, below, midway, colora] {
            $u_0 \in \kl[\alphabet]{\color{colora}\Sigma}^{\kl[\ord]{\color{colora}\mathrm{ord}}}$
        };
        \draw[decorate, decoration={brace, mirror}, colorb] ($(bl)+(-.1,-.1)$) -- ($(br)+(.1,-.1)$) 
        node[font=\small, below, midway, colorb] {$u_1 \vphantom{\alphabet\ord}$};
        \draw[decorate, decoration={brace, mirror}, colorc] ($(cl)+(-.1,-.1)$) -- ($(cr)+(.1,-.1)$) 
        node[font=\small, below, midway, colorc] {$u_\iota \vphantom{\alphabet\ord}$};
    \end{scope}

    \begin{scope}
    \tikzset{
        level distance = .7cm,
        level 1/.style = {sibling distance = .3cm}
    }
    \node[c] at (6,-0.35) {}
        child { node[ca] (a2l) {} edge from parent[colora] }
        child { node[ca] {} edge from parent[colora] }
        child { node[font=\scriptsize, colora] {$\smallcdots$} edge from parent[colora] }
        child { node[ca] {} edge from parent[colora] }
        child { node[font=\scriptsize, colora] (a2r) {$\smallcdots$} edge from parent[colora] }
        child { node[cb] (b2l) {} edge from parent[colorb] }
        child { node[cb] {} edge from parent[colorb] }
        child { node[font=\scriptsize, colorb] (b2r) {$\smallcdots$} edge from parent[colorb] }
        child { node[font=\scriptsize] {$\smallcdots$} }
        child { node[cc] (c2l) {} edge from parent[colorc] }
        child { node[cc] {} edge from parent[colorc] }
        child { node[font=\scriptsize, colorc] {$\smallcdots$} edge from parent[colorc] }
        child { node[cc] (c2r) {} edge from parent[colorc] }
        child { node[font=\scriptsize] {$\smallcdots$} };
        \draw[decorate, decoration={brace, mirror}, colora] ($(a2l)+(-.1,-.1)$) -- ($(a2r)+(.1,-.1)$) 
        node[font=\small, below, midway, colora] {$u_0 \vphantom{\alphabet\ord}$};
        \draw[decorate, decoration={brace, mirror}, colorb] ($(b2l)+(-.1,-.1)$) -- ($(b2r)+(.1,-.1)$) 
        node[font=\small, below, midway, colorb] {$u_1 \vphantom{\alphabet\ord}$};
        \draw[decorate, decoration={brace, mirror}, colorc] ($(c2l)+(-.1,-.1)$) -- ($(c2r)+(.1,-.1)$) 
        node[font=\small, below, midway, colorc] {$u_\iota \vphantom{\alphabet\ord}$};
    \end{scope}

    \begin{scope}
        \tikzset{
            level distance = .7cm,
            level 1/.style = {sibling distance = 1cm}
        }
        \node[c] at (0,-3.7) {}
            child { node[ca] (a3) {} }
            child[sibling distance = .75cm] { node[cb] (b3) {} }
            child[sibling distance = .75cm] { node[font=\small] {$\smallcdots$} }
            child[sibling distance = .75cm] { node[cc] (c3) {} }
            child[sibling distance = .75cm] { node[font=\small] {$\smallcdots$} };
        \node[font=\small, colora, below = -.5mm of a3] {
            $\kl[\product]{\color{colora}\pi}(u_0) \vphantom{\alphabet\ord}$
        };
        \node[font=\small, colorb, below = -.5mm of b3] {
            $\kl[\product]{\color{colorb}\pi}(u_1) \vphantom{\alphabet\ord}$
        };
        \node[font=\small, colorc, below = -.5mm of c3] {
            $\kl[\product]{\color{colorc}\pi}(u_\iota) \vphantom{\alphabet\ord}$
        };
    \end{scope}

    \draw (2,-.7) edge[->] node[midway, above, font=\footnotesize] {\kl[\flatten]{\color{black}flat}} (3.5,-.7);
    \draw (0,-2.05) edge[->] node[midway, fill=white, font=\footnotesize, text width=3cm,align=center] {\kl[generalised product]{\color{black}generalised product} in each subtree} (0,-3.5);
    \draw (6,-2.05) edge[->] node[midway, fill=white, font=\footnotesize, text width=3cm,align=center] {\kl[generalised product]{\color{black}generalised product}} (6,-3.5);
    \draw (2,-4.06) edge[->] node[midway, above, font=\footnotesize] {\kl[generalised product]{\color{black}generalised}} node[midway, below, font=\footnotesize] {\kl[generalised product]{\color{black}product}} (3.5,-4.05);

    \node[font=\small, text width=5cm, align=center] at (6,-4.05) {
        $\kl[\product]{\color{black}\pi}(\textcolor{colora}{u_0} \textcolor{colorb}{u_1} \cdots \textcolor{colorc}{u_\iota} \cdots) 
        = \kl[\product]{\color{black}\pi}(\textcolor{colora}{\kl[\product]{\color{colora}\pi}(u_0)} \textcolor{colorb}{\kl[\product]{\color{colorb}\pi}(u_1)} \cdots \textcolor{colorc}{\kl[\product]{\color{colorc}\pi}(u_\iota)} \cdots)$
    };
\end{tikzpicture}}
		\end{tabular}
		&
		\begin{tabular}{c}\scriptsize
		\begin{tikzcd}
			(M\ord)\ord \rar["\flatten"] \dar["\product\ord"] & M\ord \dar["\product"] \\
			M\ord \rar["\product"] & M
		\end{tikzcd}
		\end{tabular}
	\end{tabular}
	\end{center}
	\caption{\label{fig:generalised-associativity}"Generalised associativity", pictorially (left) and diagrammatically (right).}
\end{figure}

\AP The second axiom is called ""generalised associativity"",
and is illustrated in \Cref{fig:generalised-associativity}. 
\AP
An ""ordinal monoid morphism"" is a map between
"ordinal monoids" preserving the "generalised product".
\AP An "ordinal monoid" is ""ordered@ordered ordinal monoid"" if it is equipped with
an order~$\leqslant$ that makes~$\product$ monotonic, "ie" such that~$u\leqslant v$ implies~$\product(u)\leqslant\product(v)$,
in which $\leqslant$ is extended letter-by-letter to words in~$M\ord$.

\AP Given a set~$\intro*\alphabet$ (the ""alphabet""), an "ordinal monoid"~$\monoid=(M,\product)$, a letter-to-letter map $\intro*\lettermap\colon\alphabet\to M$ extended to $\intro*\ordmap\colon\alphabet\ord\to M\ord$, and~$F\subseteq M$, the language~$L\subseteq\alphabet\ord$ ""recognised@@OM"" by  $(\monoid,\lettermap,F)$ is
\begin{align*}
	L = \{u \in \alphabet\ord \ \colon \ \product(\ordmap(u))\in F \},
\end{align*}
and a language~$L\subseteq\alphabet\ord$ is called \reintro*"recognisable@@OM" if it is recognised by some such tuple $(\monoid,\lettermap,F)$.
\AP We recall that "recognisable@@OM" languages of ordinal words coincide with the ones definable in monadic second-order logic, or definable by suitable automata. 
These languages are called ""regular@@cow"". \Cref{example:saturation} below will illustrate this concept.


\AP We now recall a finite presentation of finite "ordinal monoids" (originally for ordinal semigroups), first given by
Bedon \cite{bedon1998these} by extending a similar
result established by Perrin and Pin \cite[prop II.5.2]{perrinpinpin}
for $\omega$-semigroups\footnote{The finitary reprensation of
$\omega$-semigroups is usually called a Wilke algebra, which is the
algebraic structure introduced by Wilke in \cite{wilke1993algebraic} to
recognise regular $\omega$-languages.}.
Let $(S,\product)$ be an
"ordinal monoid".
\AP
We define the constant~$\intro*\Aunit$ and two functions
	$\intro*\Acdot: S\times S \to S$ and $\intro*-\Aomega: S \to S$ by
\begin{align*}
	\reintro*\Aunit&:=\product(\varepsilon)&
	x \reintro*\Acdot y &:= \product(xy) &
	\text{and}\qquad  x\reintro*\Aomega &:= \product(x\Womega) = \product(\overbrace{xxx\cdots}^{\text{$\omega$ times}})\ .
\end{align*}
\AP The following proposition lets us interchangeably regard an "ordinal monoid"
$\ordmonoid$ as either a pair $(M,\product)$ or as a quadruple
$(M,\Aunit,\Acdot,-\Aomega)$, that we refer to as its ""presentation@@OM"".
\begin{proposition}[{\cite[Thm.~3.5.6]{bedon1998these}}, originally for ordinal semigroups]
	In a finite "ordinal monoid"
	the "generalised product" is uniquely determined 
	by the operations $\Aunit,\Acdot$ and $-\Aomega$.
\end{proposition}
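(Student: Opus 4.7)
The claim is that once the values $\underline{1} = \pi(\varepsilon)$, $x \mathbin{\underline{\cdot}} y = \pi(xy)$ and $x^{\underline{\omega}} = \pi(x^\omega)$ are fixed, $\pi(w)$ is determined for every $w \in M^{\mathrm{ord}}$. The natural plan is a transfinite induction on the countable ordinal $\alpha = \dom(w)$. The cases $\alpha = 0$ and $\alpha = 1$ are immediate from the first axiom and the definition of $\underline{1}$. If $\alpha = \beta + 1$, write $w = w' \, a$ with $\dom(w') = \beta$; applying generalised associativity to the factorisation into $(w', a)$ gives $\pi(w) = \pi(w') \mathbin{\underline{\cdot}} a$, which is determined by the inductive hypothesis and~$\underline{\cdot}$.

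The real content is the limit case, and this is where I expect the main difficulty to lie. Let $\alpha$ be a countable limit ordinal. Being countable, $\alpha$ has cofinality exactly $\omega$, so one can fix a strictly increasing sequence $0 = \iota_0 < \iota_1 < \iota_2 < \cdots$ cofinal in $\alpha$. Define a coloring $c$ of the pairs $\{m < n\}$ of natural numbers by
\[
  c(m,n) \;:=\; \pi\bigl(w_{[\iota_m,\iota_n)}\bigr) \in M,
\]
using that each interval has ordinal length $< \alpha$, so the inductive hypothesis applies. Since $M$ is finite, Ramsey's theorem yields an infinite subsequence along which $c$ is constant, with value some $e \in M$. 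Relabelling, we may assume $\pi(w_{[\iota_m,\iota_n)}) = e$ for all $m < n$. Taking $m=0,\, n=1$ versus $m=0,\, n=2$ and applying generalised associativity on the factorisation $[\iota_0,\iota_2) = [\iota_0,\iota_1) \sqcup [\iota_1,\iota_2)$ gives $e = e \mathbin{\underline{\cdot}} e$, so $e$ is idempotent.

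Now decompose $w = w_{[0,\iota_1)} \cdot w_{[\iota_1,\alpha)}$. By generalised associativity applied to the factorisation $[\iota_1,\alpha) = \bigsqcup_{n \geq 1} [\iota_n, \iota_{n+1})$, whose image in $M$ is the constant $\omega$-word $eee\cdots$,
\[
  \pi\bigl(w_{[\iota_1,\alpha)}\bigr) \;=\; \pi(e\, e\, e \cdots) \;=\; \pi(e^{\omega}) \;=\; e^{\underline{\omega}}.
\]
Hence, using generalised associativity one last time on the pair $(w_{[0,\iota_1)}, w_{[\iota_1,\alpha)})$,
\[
  \pi(w) \;=\; \pi\bigl(w_{[0,\iota_1)}\bigr) \mathbin{\underline{\cdot}} e^{\underline{\omega}},
\]
which is determined by the inductive hypothesis and by $\underline{\cdot}$ and $-^{\underline{\omega}}$.

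The two subtleties to check carefully are: (i) that the cofinality-$\omega$ remark genuinely applies to every countable limit ordinal, so that Ramsey can be invoked; and (ii) that the several invocations of generalised associativity are stated in a form general enough to handle the infinite factorisation at step $\pi(w_{[\iota_1,\alpha)}) = e^{\underline{\omega}}$ -- this is precisely the content of the second axiom, and is the step most likely to require a short, explicit verification.
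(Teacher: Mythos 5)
The paper does not give its own proof of this proposition---it cites Bedon's thesis---so there is no in-paper argument to compare against; your proof is correct and is essentially the standard one: transfinite induction on the domain, with the surjectivity-of-cofinality-$\omega$ observation and an infinite-Ramsey extraction of an idempotent block value $e$ at countable limit ordinals, so that the tail evaluates to $e^{\underline{\omega}}$. This is the same technique the paper itself alludes to when justifying the formula for $X^{\underline{\omega}}$ in the power ordinal monoid, and all your invocations of generalised associativity are covered by the axiom as stated, since it applies to arbitrary ordinal-indexed factorisations.
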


\AP An important construction on which our proof relies is the
""power ordinal monoid"": given an "ordinal monoid" 
$(M,\product)$, we equip the powerset $\intro*\Pset(M)$ of $M$
with a "generalised product" $\intro*\Pproduct: \Pset(M)\ord
\to \Pset(M)$ defined by
\begin{multline*}
	\Pproduct((X_\iota)_{\iota<\kappa}) :=
	\left\{
		\product((x_\iota)_{\iota<\kappa}) \mid
		x_{\iota} \in X_\iota \text{ for all } \iota < \kappa
	\right\}\\
	\text{for all words $(X_\iota)_{\iota<\kappa} \in (\Pset(M))\ord$.}
\end{multline*}

Observe that if $M$ is a finite "ordinal monoid", then so is $\Pset(M)$.
We can compute a finite representation
of the "power ordinal monoid" $\Pset(M)$ of $M$
from a finite representation of $M$.
\AP Indeed,  \phantomintro\Pcdot\phantomintro\Pomega\phantomintro\Punit
\begin{align*}
	\reintro*\Punit &= \{\Aunit\},&
	X \reintro*\Pcdot Y &= \{x \cdot y \mid x\in X,\, y \in Y\}, &\text{and}\qquad
	X\reintro*\Pomega &= \{u \cdot v\Aomega \mid u,v \in X^+\}
\end{align*}
for all $X,Y \in \Pset(M)$.
The two first properties are trivial while the third one
can be proven using the infinite Ramsey's theorem---this is a classical 
argument used to give finite representation of infinite structures, see e.g. \cite[Theorem II.2.1]{perrinpinpin}.
\AP Note that this "power ordinal monoid" is indeed an "ordinal monoid". It is even an "ordered ordinal monoid" when equipped with the inclusion ordering.

\subsection{First-order logic}
%
\phantomintro{first-order logic}%
\AP Over a fixed (finite) alphabet $\alphabet$, we define the
set of ""first-order logic formul\ae{}@\FO-formula"" or
\reintro*"\FO-formul\ae{}@\FO-formula" for short, by the grammar:
\begin{align*}
	\varphi ::= \ \exists x.\,\varphi
	\ \ \mid\ \ \forall x.\,\varphi
	\ \ \mid\ \ \varphi \land \varphi
	\ \ \mid\ \ \varphi \lor \varphi
	\ \ \mid\ \ \neg \varphi
	\ \ \mid\ \ x < y
	\ \ \mid\ \ a(x)
\end{align*}
where $x,y$ range over some fixed infinite set of variables, and $a$ over
$\alphabet$. ""Free variables"" are defined as usual,
and an ""\FO-sentence"" is a formula with no "free variables".
\AP
In our setting, a model is a
"countable ordinal word", and a ""valuation"" over this model is a total map
from variables to the domain of the "word@@ord".
\AP
We define, for any word $w$ and any valuation $\nu$, the semantic relation $w, \nu \models \varphi$
of "first-order logic" on "countable ordinal words"
by structural induction on the "\FO-formula" $\varphi$,
by interpreting variables
as positions in the "word@@ord" and propositions of the form
$a(x)$ as ``the letter at position $x$ is an $a$''.
If $\varphi$ is an "\FO-sentence", then the semantics of
$\varphi$ over a word $w$ does not depend
on the valuation, and thus we write
$w \models \varphi$ or $w \not\models \varphi$. When $w \models \varphi$ we say that $w$ ""satisfies"" $\varphi$, or also that $\varphi$ ""accepts"" $w$.

\AP A language $L \subseteq \alphabet\ord$ is said to be
""\FO-definable@@lang"" if $L=\{w\in \alphabet\ord\mid w \models \varphi\}$ for some "\FO-sentence"
$\varphi$.
For example, the language of "words@@ord" over the alphabet $\{a,b,c\}$
such that every `$a$' is at a finite distance from a `$b$' is
defined by the "\FO-sentence" $\forall x. a(x) \rightarrow \exists y. b(y)\land \finite(x,y)$, where:%
\phantomintro\isSuccessor
\phantomintro\finite
\begin{align*}
	\reintro*\isSuccessor(z) &::= \exists y. y<z\wedge
		\left(\forall x.\, x<z \rightarrow x\leqslant y\right)\\
	\reintro*\finite(x,y) &::= \forall z. (x< z\leqslant y \vee y< z\leqslant x )\rightarrow \isSuccessor(z)\ .
\end{align*}

Bedon \cite{bedon2001logic} extended the 
Schützenberger-McNaughton-Papert theorem
\cite{schutzenberger1965finite,mcnaughton1971counter} to
"countable ordinal words".

\begin{proposition}[{""Bedon's theorem"" \cite[Theorem 3.4]{bedon2001logic}}]
	\label{prop:bedon-thm}
	\AP A language of "countable ordinal words" is "\FO-definable@@lang"
	if and only if it is "recognised@@OM" by
	a finite "aperiodic" "ordinal monoid".
\end{proposition}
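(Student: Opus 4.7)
The plan is to prove the two directions separately. The logic-to-algebra direction can be handled by a standard Ehrenfeucht--Fra\"{\i}ss\'e argument adapted from the corresponding proofs for finite and $\omega$-words, while the converse direction is the more substantial one.

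For the logic-to-algebra direction, fix an $\FO$-sentence $\varphi$ of quantifier depth $k$, and define the equivalence $\FOkeq$ on $\alphabet\ord$ by $u \FOkeq v$ iff $u$ and $v$ satisfy the same $\FO$-sentences of quantifier depth at most $k$. The steps are: (i)~show that $\FOkeq$ has finitely many classes, since modulo logical equivalence there are only finitely many $\FO$-sentences of quantifier depth $k$ over the finite alphabet $\alphabet$; (ii)~show, via a standard Ehrenfeucht--Fra\"{\i}ss\'e composition lemma, that $u_1 \FOkeq u_2$ and $v_1 \FOkeq v_2$ imply $u_1 v_1 \FOkeq u_2 v_2$; (iii)~extend this by transfinite induction to the full generalised product $\product$ on $\alphabet\ord$, using an infinite Ramsey argument on factor types within a long product to reduce an arbitrary ordinal decomposition to one that a $k$-round Spoiler can match; (iv)~check aperiodicity $[u]\idem = [u]\idem[+1]$, which reduces to showing that a $k$-round Spoiler cannot distinguish $n!$ consecutive copies of $u$ from $n!{+}1$ copies once $n! \geq 2^k$. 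Taking $F$ to be the collection of $\FOkeq$-classes of words in $L(\varphi)$ then yields the required recognising tuple over the quotient.

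For the converse direction, fix a recognising tuple $(\ordmonoid,\lettermap,F)$ with $\ordmonoid=(M,\product)$ finite and aperiodic. The goal is to construct, for each $m\in M$, an $\FO$-sentence $\varphi_m$ defining $\{w\in\alphabet\ord \mid \product(\ordmap(w))=m\}$; the disjunction $\bigvee_{m\in F}\varphi_m$ then defines $L$. I would proceed by induction on the position of $m$ in the $\Jleq$-preorder. For a given $m$, one analyses the factorisations $w = u_1 u_2 \cdots$ whose iterated product witnesses $m$: by aperiodicity, once the intermediate products stabilise in the $\Jeq$-class of $m$, the precise number of consecutive same-class factors becomes irrelevant modulo $\FO$, and limit segments contribute only through the $\omega$-iteration operator. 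Using the $\FO$-definable predicates $\isSuccessor$ and $\finite$ displayed above, one can quantify over the maximal $\Jeq$-class-preserving segments of $w$; the inductive hypothesis then applies to each such segment since its product lies strictly $\Jl$-above $m$.

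The main obstacle, compared to the Schützenberger--McNaughton--Papert theorem for finite words and its $\omega$-word extension, is the interaction of $-\Aomega$ with longer ordinal products: in the ordinal setting an element of the form $x\Aomega$ may itself be nested inside a larger limit product, giving rise to $\omega^\alpha$-indexed blocks for arbitrary countable $\alpha$. Handling this uniformly within a finite monoid is precisely where the finite presentation $(\Aunit,\Acdot,-\Aomega)$ and the identity $X\Pomega = \{u \cdot v\Aomega \mid u,v \in X^+\}$ for the power monoid become essential: together they reduce all limit behavior to a finite combinatorial datum that $\FO$ can extract from the successor/limit structure of positions.
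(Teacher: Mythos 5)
Your first direction matches what the paper indicates (it follows from the compositionality facts in \Cref{prop:compositionality-FO}, whose third item yields aperiodicity of the $\FOkeq$-quotient), and your plan for the converse is the classical Sch\"utzenberger-style induction on $\Jeq$-classes, essentially Bedon's original argument. Note, however, that the paper does not prove this statement directly: it cites Bedon, and in the remark following \Cref{corollary:yes} it derives the hard direction as a corollary of its main technical result, \Cref{lemma:completude-core} --- when $\ordmonoid$ is aperiodic the saturation $\Closgord{\{\{\lettermap(a)\}\mid a\in\alphabet\}}$ consists only of singletons, so any \FO-approximant computes $\product$ exactly, hence $\product\circ\ordmap$ is an \FO-definable map and $L$ is \FO-definable. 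That route gets the theorem for free once the separation machinery is in place; yours aims at a self-contained proof.

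The genuine gap is in your treatment of limit segments in the converse direction. The identity $X\Pomega=\{u\cdot v\Aomega\mid u,v\in X^+\}$ only normalises products indexed by $\omega$; it says nothing about products indexed by $\omega^2,\omega^3,\dots$, which is exactly where the difficulty of the ordinal case lies. What makes a \emph{single} first-order sentence possible is that in a finite ordinal monoid the iterated $\Aomega$ operation stabilises: there is an $\ell$, computable from $|M|$, such that applying $-\Aomega$ to $x$ an $\ell$-th time changes nothing (this is \Cref{lemma:omegas-stabilise} in the appendix, proved via Green's relations and the $\Heq$-triviality of $\omega$-stable $\Jeq$-classes). Only with this bound can one cap the nesting depth of the finite-condensation construction and extract, in \FO, the Cantor-normal-form data that determines the product (cf.\ \Cref{lemma:fo-def-languages,lemma:fo-approximant-aord}). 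Without it your induction does not terminate into a formula of bounded quantifier depth: each additional level of limit nesting would demand one more condensation. A secondary point: to make the quotient in your first direction an \emph{ordinal monoid} you need $\FOkeq$ to be a congruence for the full generalised product, i.e.\ an ordinal-indexed composition theorem; this is true, but it is strictly stronger than the binary and $\omega$-indexed cases of \Cref{prop:compositionality-FO} that you invoke, and deserves an explicit argument.
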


\AP Let $L \subseteq \alphabet\ord$.
A function $f: L \to X$ whose codomain $X$
is a finite set is said to be ""\FO-definable@@map"" when
every preimage $f^{-1}[x]$, with $x\in X$, is an
"\FO-definable language".
Note that if $f$ is
"\FO-definable@@map", then its domain $L$
is necessarily an "\FO-definable language".

\AP For example, the function $\alphabet^*\to\Zed/2\Zed$,
sending a word $w \in \alphabet^*$ to its length modulo 2,
is not "\FO-definable@@map". On the other hand, for a  fixed
letter $a\in \alphabet$, the total function sending a word $w \in
\alphabet\ordp$ to $\top$ if $w$ contains the letter `$a$' and
to $\bot$ otherwise is "\FO-definable@@map".


\AP A useful tool to manipulate words is the notion of
"condensation"---see, e.g.,
\cite[\S 4]{rosenstein1982linear} for an introduction to the subject.
A ""condensation"" of a
"countable ordinal" $\alpha$ is an equivalence
relation $\intro*\condensation$ over $\alpha$ whose equivalence
classes are convex. Note that the quotient of an "countable ordinal"
by a "condensation" is still a
"countable ordinal".

\AP A ""condensation formula""~$\varphi(x,y)$ is a formula which is interpreted as a condensation of the domain over all "countable ordinal words",
"ie" for every word $w\in\alphabet\ord$, the relation
defined on $\dom(w)$ by $\iota \condensation_{\varphi} \kappa$
if and only if $w, [x \mapsto \iota,y \mapsto \kappa] \models \varphi(x,y)$
is a condensation.
A "condensation formula"~$\varphi(x,y)$ induces a map:
\begin{align*}
	\hat\varphi\colon \alphabet\ord \to (\alphabet\ordp)\ord
\end{align*}
where for every $u\in \alphabet\ord$,
$\hat\varphi(u)$ is a word whose domain is
$\dom(w)/{\condensation_{\varphi}}$, and such that for every
class $I \in \dom(w)/{\condensation_{\varphi}}$,
the $I$-th letter of $\hat\varphi(u)$ is the word
$(u_\iota)_{\iota \in I}$---hence
$\flatten(\hat\varphi(u)) = u$.

\AP For example, the formula $\finite(x,y)$ is a "condensation formula",
called ""finite condensation"". The function
$\hat{\varphi}_{\finite}\colon
\alphabet\ord \to (\alphabet\ord)\ord$ that it induces
sends the word
$ababab\cdots cdcdcd\cdots abc \in \alphabet\ord$
of length $\omega \timesord 2 \plusord 3$ to the 3-letter word
\[(ababab\cdots)(cdcdcd\cdots)(abc).\]
Observe that for every "word@@ord" $w\in \alphabet\ord$,
every letter of $\hat{\varphi}_{\finite}(w)$ is
a word of length $\omega$, except possibly for the last letter
(if the word has one), which can be finite.

Given two "\FO-definable functions"---one that describes ``local transformations'' and another that described how to glue these local transformations together---the following lemma allows us
to build a new "\FO-definable function". It is one of the key ingredients
in our proof of \Cref{theorem:main}, and is illustrated in
\Cref{fig:composition-fo-condensation}.

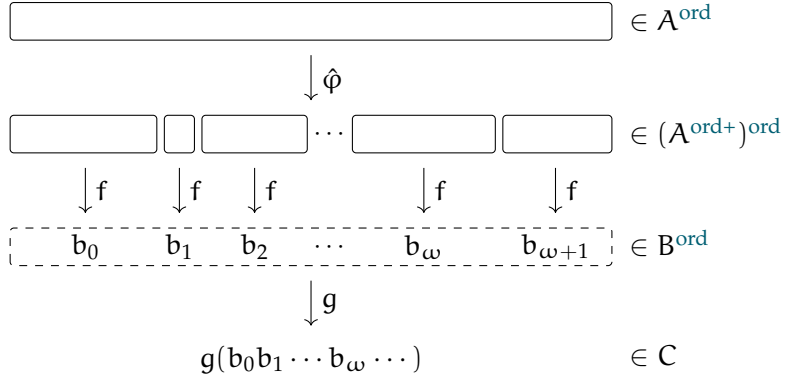
\begin{figure}
	\centering
	\begin{tikzpicture}
    \draw[rounded corners=.5mm] (0,.5) rectangle (8,1);
    \node[right] at (8.1,.8) {$\in A\ord$};

    \draw (4,.3) edge[->] node[midway, right] {$\hat\varphi$} (4,-.3);

    \draw[rounded corners=.5mm] (0,-1) rectangle (1.95,-.5);
    \draw[rounded corners=.5mm] (2.05,-1) rectangle (2.45,-.5);
    \draw[rounded corners=.5mm] (2.55,-1) rectangle (3.95,-.5);
    \node at (4.27,-.75) {\small $\cdots$};
    \draw[rounded corners=.5mm] (4.55,-1) rectangle (6.45,-.5);
    \draw[rounded corners=.5mm] (6.55,-1) rectangle (8,-.5);

    \node[right] at (8.1,-.75) {$\in (A\ordp)\ord$};

    \draw[dashed,rounded corners=.5mm] (0,-2.5) rectangle (8,-2);
    \node at (1,-2.25) {$b_0$};
    \draw (1,-1.2) edge[->] node[midway, right] {$f$} (1,-1.8);
    \node at (2.25,-2.25) {$b_1$};
    \draw (2.25,-1.2) edge[->] node[midway, right] {$f$} (2.25,-1.8);
    \node at (3.25,-2.25) {$b_2$};
    \draw (3.25,-1.2) edge[->] node[midway, right] {$f$} (3.25,-1.8);
    \node at (4.27,-2.25) {\small $\cdots$};
    \node at (5.5,-2.25) {$b_\omega$};
    \draw (5.5,-1.2) edge[->] node[midway, right] {$f$} (5.5,-1.8);
    \node at (7.25,-2.25) {$b_{\omega+1}$};
    \draw (7.25,-1.2) edge[->] node[midway, right] {$f$} (7.25,-1.8);

    \node[right] at (8.1,-2.2) {$\in B\ord$};

    \draw (4,-2.7) edge[->] node[midway, right] {$g$} (4,-3.3);
    \node at (4,-3.75) {$g(b_0b_1\cdots b_\omega \cdots)$};	
    \node[right] at (8.1,-3.7) {$\in C$};
\end{tikzpicture}
	\caption{The map $g \focomp{\varphi} f$ defined in
	\Cref{lemma:composition-fo-condensation} consists in applying $g$ globally
	after applying $f$ locally to the word induced by the condensation 
	$\varphi$.}
	\label{fig:composition-fo-condensation}
\end{figure}

\begin{lemma}
	\labelwithproof{lemma:composition-fo-condensation}
	Let $A,B,C$ be finite sets.
	Let~$\varphi(x,y)$ be a "condensation \FO-formula" over~$A$,
	let $f\colon A\ordp\rightarrow B$ and $g\colon B\ord \rightarrow C$
	be "\FO-definable functions".
	Then, the map\phantomintro\focomp
	\begin{align*}
		g \reintro*\focomp{\varphi} f \colon A\ord &\to \qquad C\\
			u&\mapsto g\left(\prod\limits_{i\in dom(\hat\varphi(u))} f(\hat\varphi(u)_i)\right)
	\end{align*}
	is "\FO-definable@@map".
\end{lemma}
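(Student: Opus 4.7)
The plan is to reduce the statement to a standard model-theoretic translation argument. Since $f$ and $g$ are "\FO-definable functions", fix for each $b \in B$ an "\FO-sentence" $\chi_b$ over $A$ defining $f^{-1}[b] \subseteq A\ordp$, and for each $c \in C$ an "\FO-sentence" $\psi_c$ over $B$ defining $g^{-1}[c] \subseteq B\ord$. To show that $g \focomp{\varphi} f$ is "\FO-definable@@map", it suffices to construct for each $c \in C$ an "\FO-sentence" $\widetilde{\psi_c}$ over $A$ such that $u \models \widetilde{\psi_c}$ iff $(g \focomp{\varphi} f)(u) = c$.

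I would define $\widetilde{\psi_c}$ as an instance of a general translation $\psi \mapsto \widetilde{\psi}$ sending every "\FO-formula" $\psi$ over $B$ with free variables $\bar z$ to an "\FO-formula" $\widetilde{\psi}$ over $A$ with the same free variables, designed so that for every $u \in A\ord$ and every valuation mapping each free variable to the minimum of its $\condensation_\varphi$-class, $u \models \widetilde{\psi}$ corresponds to satisfaction of $\psi$ on $\hat\varphi(u)$ with $f$ applied letterwise. Writing $\mathrm{minCl}(z) := \forall y.\, \varphi(y,z) \to z \leqslant y$ for the formula expressing that $z$ is the minimum of its $\condensation_\varphi$-class (which exists uniquely since classes are non-empty convex subsets of the well-ordered $\dom(u)$), the inductive clauses would be: $\widetilde{z<z'} := z<z'$; Boolean connectives distribute; $\widetilde{\exists z.\, \eta} := \exists z.\, \mathrm{minCl}(z) \wedge \widetilde{\eta}$ and $\widetilde{\forall z.\, \eta} := \forall z.\, \mathrm{minCl}(z) \to \widetilde{\eta}$; finally, $\widetilde{b(z)} := (\chi_b)^{[z]}$, where $\theta^{[z]}$ denotes the syntactic relativization of an "\FO-formula" $\theta$ over $A$ to the $\condensation_\varphi$-class of $z$, obtained by recursively replacing every quantifier $\exists y.\, \eta$ (resp.\ $\forall y.\, \eta$) by $\exists y.\, \varphi(y,z) \wedge \eta^{[z]}$ (resp.\ $\forall y.\, \varphi(y,z) \to \eta^{[z]}$).

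Correctness of the translation is established by induction on $\psi$, and rests on two subsidiary facts. First, for class-minima $z \neq z'$, convexity of $\condensation_\varphi$-classes forces $z$ and $z'$ to lie in distinct classes, so the atom $z < z'$ captures the induced order of the quotient ordering on $\hat\varphi(u)$. Second, a relativization lemma: for every "\FO-sentence" $\theta$ over $A$ and every position $z$ lying in a $\condensation_\varphi$-class $I \subseteq \dom(u)$, the formula $\theta^{[z]}$ holds at $u$ iff the sub-word $(u_\iota)_{\iota \in I} \in A\ordp$ satisfies $\theta$. This relativization lemma, invoked in the letter-atom case $\widetilde{b(z)}$, is the main obstacle: it calls for a careful inner induction on $\theta$, verifying that letter predicates and the linear order inside the sub-word are inherited from $u$, and that the guard $\varphi(y,z)$ correctly restricts quantifiers to $I$. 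Granted this, applying the translation to each $\psi_c$ yields the desired "\FO-sentence" $\widetilde{\psi_c}$ and hence establishes that $g \focomp{\varphi} f$ is "\FO-definable@@map".
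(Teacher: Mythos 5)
Your proof is correct and follows essentially the same route as the paper's: a syntactic translation of the formulas $\psi_c$ defining $g$ into formulas over $A$, replacing each quantifier over a position of $\hat\varphi(u)$ by a guarded quantifier ranging over representatives of condensation classes in $\dom(u)$, and each letter atom $b(z)$ by a relativization of the first-order definition of $f^{-1}[b]$ to the class of $z$. The only difference is in the encoding of a class --- the paper uses a pair of endpoint variables $(x_\ell,x_r)$ for the interval $[x_\ell,x_r)$, while you use the class minimum and guard inner quantifiers with $\varphi(y,z)$ --- which is a cosmetic variation (and your version conveniently avoids having to treat separately a final class that is not bounded above by a position of the domain).
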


\section{The algorithm}
\label{section:algorithm}

In this section we describe the "algorithm" behind \Cref{theorem:main}.
We first introduce the key notion of "saturation" in \Cref{subsection:saturation}, and formalise the "algorithm" in \Cref{subsection:algorithm}.

\subsection{The saturation construction}
\label{subsection:saturation}
Until the end of \Cref{subsection:saturation}, we fix a finite "ordinal monoid"~$\ordmonoid=(M,\Acdot,\Aunit,-\Aomega)$.

The "saturation" construction is at the heart of the "algorithm", both in this paper, and in previous work. 
We introduce the necessary definitions. Note however that in our case, we do not close the definition under subsets as is usually done.
This change, which may look minor, is in fact key for our proof to go through in the case of "countable ordinals", and we find it also simplifies some points in the setting of "finite words".
We first recall an essential operation on $\Pset(M)$ that we denote $-\Pmerge$. Applied to a set~$X\subseteq M$, it computes the union of all the elements that belong to the maximal group in the subsemigroup of $\Pset(M)$ generated by~$X$.
\begin{definition}\AP\phantomintro{\Pmerge}%
	Let~$X\subseteq M$. Define
	\begin{align*}
		X\reintro*\Pmerge=\bigcup\limits_{k\in\mathbb{N}} X\idem[+k] \mathbin{{=}^\star}
		\bigcap_{n\in\Nats} \bigcup_{m \geq n} X^{m}.
	\end{align*}
\end{definition}
\noindent Note that the $\star$ equality holds: Left to right inclusion comes from the fact that
$X\idem[+k] = X^m$ holds for infinitely many values of~$m$, while the other inclusion stems from the fact that~$X^m$ can be written as $X\idem[+k]$ for some~$k$ whenever $m$ is sufficiently large.

Some important properties of this operation are the following.
\begin{lemma}\labelwithproof{lemma:properties-Pmerge}
	The operation~$-\Pmerge$ is monotonic, and for all~$A,B\subseteq M$, and all integers~$k$,
	\begin{align*}
		A\idem[+k]&\subseteq A\Pmerge, &
		(A\Pcdot B)\Pmerge &= A\Pcdot (B\Pcdot A)\Pmerge\Pcdot B\ ,\\
		\text{and}\qquad
		A\Pmerge\Pcdot A\Pmerge&=(A\Pmerge)\Pmerge=A\Pmerge.
	\end{align*}
\end{lemma}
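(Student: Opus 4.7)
The plan is to exploit the finiteness of the semigroup $\Pset(M)$, which ensures both that the sequence $(A^m)_{m \geq 1}$ takes only finitely many values and that the decreasing sequence of unions $(\bigcup_{m \geq n} A^m)_{n \geq 1}$ is eventually constant, equal to $A\Pmerge$. Monotonicity and the inclusion $A\idem[+k] \subseteq A\Pmerge$ are then straightforward: if $A \subseteq B$ then $A^m \subseteq B^m$ for every $m$, so the intersection-of-unions characterization of $\Pmerge$ is visibly monotonic; and the inclusion $A\idem[+k] \subseteq A\Pmerge$ is immediate from the first characterization as a union over $k$.

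For the conjugation identity $(A\Pcdot B)\Pmerge = A \Pcdot (B\Pcdot A)\Pmerge \Pcdot B$, I would begin from the elementary identity $(A\Pcdot B)^m = A \Pcdot (B\Pcdot A)^{m-1} \Pcdot B$ for $m \geq 1$, and use that $\Pcdot$ distributes over arbitrary unions on both sides. Applying this inside $\bigcup_{m \geq n}(A\Pcdot B)^m$ gives $A \Pcdot \bigl(\bigcup_{m' \geq n-1}(B\Pcdot A)^{m'}\bigr) \Pcdot B$. By finiteness of $\Pset(M)$, for $n$ large enough the inner and outer unions stabilize to $(B\Pcdot A)\Pmerge$ and $(A\Pcdot B)\Pmerge$ respectively, which yields the desired equality.

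For the idempotence identities $A\Pmerge \Pcdot A\Pmerge = (A\Pmerge)\Pmerge = A\Pmerge$, the key observation is that the set $\{A\idem[+k] \mid k \geq 0\}$ forms a finite cyclic group inside $\Pset(M)$, with identity $A\idem$: indeed $A\idem$ is idempotent and a routine argument in the finite semigroup $\Pset(M)$ gives $A\idem[+k] \Pcdot A\idem[+l] = A\idem[+k+l]$. Consequently the union $A\Pmerge = \bigcup_k A\idem[+k]$ is itself closed under $\Pcdot$, which directly gives $A\Pmerge \Pcdot A\Pmerge = A\Pmerge$ and in particular shows that $A\Pmerge$ is idempotent in $\Pset(M)$. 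This idempotency then forces $(A\Pmerge)\idem[+k] = A\Pmerge$ for every $k$, hence $(A\Pmerge)\Pmerge = A\Pmerge$.

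I expect the conjugation identity to be the main obstacle, since it is the only point that genuinely combines both characterizations of $\Pmerge$ and relies on the distributivity of $\Pcdot$ over arbitrary unions together with the eventual stabilization in the finite powerset; the idempotence identities then fall out essentially for free from the cyclic-group structure inside $\Pset(M)$, and monotonicity is immediate.
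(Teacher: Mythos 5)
Your proposal is correct and follows essentially the same route as the paper: both arguments rest on the finiteness of $\Pset(M)$, the stabilisation of the tail unions $\bigcup_{m\geq n}A^m$, and the elementary commutation of $A$ past powers of $B\Pcdot A$ to get the conjugation identity (the paper uses $(A\Pcdot B)\Pmerge\Pcdot A=A\Pcdot(B\Pcdot A)\Pmerge$ where you use $(A\Pcdot B)^m=A\Pcdot(B\Pcdot A)^{m-1}\Pcdot B$, an immaterial difference). The only cosmetic remark is that for $A\Pmerge\Pcdot A\Pmerge=A\Pmerge$ closure under $\Pcdot$ alone gives just one inclusion, but the reverse one is immediate from the group law $A\idem[+k]\Pcdot A\idem[+l]=A\idem[+k+l]$ (with $l=0$) that you state, so nothing is missing.
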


\AP The core of the "algorithm" computes the closure under $-\Pmerge$ and all the operations of the algebra of the images of the letters.
\begin{definition}\label{definition:sat}\label{definition:merge}%
	Let $A\subseteq \powerset(M)$. The set $\intro*\Closgord A\subseteq \powerset(M)$ is defined to be 
	the least set containing~$A$, $\{\Aunit\}$, and closed under $\Pcdot$, $\Pmerge$ and $\Pomega$.\footnote{Recall that we showed that
	in a "power ordinal monoid", the operation $-\Pomega$ is computable.}
\end{definition}
\AP This definition is close in spirit to what is called  "saturation" in previous works, with the difference that we do not take the downward closure, and that we close under the operation $-\Pomega$. Despite this difference, we sometimes call $\Closgord A$ the ""saturation"".

Observe that the "ordinal monoid" $\ordmonoid$ is "aperiodic"
if and only if
\[
	\Closgord{\{\{x\} \mid x \in \ordmonoid\}} =
	\{\{x\} \mid x \in \ordmonoid\}\ .
\]

\subsection{The algorithm}
\label{subsection:algorithm}%

\AP We are now ready to describe the core of the algorithm that is claimed to exist in \Cref{theorem:main}. Let~$K$ and $L$ be two "regular languages of countable ordinal words" over the alphabet~$\alphabet$.
The ""algorithm"" is:
\begin{enumerate}
\itemAP Let~$\ordmonoid$, $\lettermap$, $F_K$, $F_L$ be such that $K$ is "recognised@@OM" by $(\ordmonoid,\lettermap,F_K)$
and $L$ by $(\ordmonoid,\lettermap,F_L)$.
\item Compute $\intro*\Sat:=\Closgord{\{\{\lettermap(a)\}\mid a\in\alphabet\}}$ (inside~$\Pset(\ordmonoid)$).
\item If $F_K\cap X\neq\varnothing$ and $F_L\cap X\neq\varnothing$ for some~$X\in \Sat$, answer `"no"'. Otherwise answer `"yes"'.
\end{enumerate}\smallskip
\begin{figure}
	\begin{center}
		\begin{minipage}{.3\linewidth}
			\scalebox{.8}{

\begin{tikzpicture}
    \draw[rounded corners=1pt] (-1,0) rectangle (2,1);  
    \draw (0,0) -- (0,1);
    \draw (1,0) -- (1,1);
    \draw[rounded corners=1pt] (0,1.75) rectangle (1,2.75);
    \draw[rounded corners=1pt] (0,3.5) rectangle (1,4.5);

    \node[outer sep=.6cm] (j-omega) at (.5,.5) {};
    \node[outer sep=.5cm, font=\footnotesize] (awa) at (-.5,.5) {$a^\omega aa_\ast$};
    \node[outer sep=.5cm, font=\footnotesize] (aw) at (.5,.5) {$a^\omega a_\ast$};
    \node[outer sep=.5cm, font=\footnotesize] (aw) at (1.5,.5) {$a^\omega_\ast$};
    \node[outer sep=.4cm] (grp) at (.5,2.25) {$\substack{%
    a,\\ \hphantom{{}_\ast}aa_\ast}$};
    \node[outer sep=.4cm] (1) at (.5,4) {\(\hphantom{{}_\ast}1_\ast\)};
  
    \node[above right = 0cm and 1cm of grp, text width=1cm,font=\scriptsize]
    {\textit{the group $\mathbb{Z}/2\mathbb{Z}$}}
    edge[->, dashed, bend left=20] (grp);

    \draw (j-omega) -- (grp);
    \draw (grp) -- (1);




  \end{tikzpicture}
  }
		\end{minipage}
		\begin{minipage}{.6\linewidth}
		\small
		\begin{tabular}{cc}
			  \setlength{\arraycolsep}{5pt}
			$\begin{array}{c|cccccc}
			\Acdot & 1 & a & aa & a^\omega & a^\omega a & a^\omega aa\\
			\hline
			1 & 1 & a & aa & a^\omega & a^\omega a & a^\omega aa\\
			a & a & aa & a & a^\omega &a^\omega a & a^\omega aa \\
			aa & aa & a & aa &a^\omega & a^\omega a & a^\omega aa \\
			a^\omega & a^\omega & a^\omega a & a^\omega aa & a^\omega &a^\omega a & a^\omega aa\\
			a^\omega a & a^\omega a & a^\omega aa & a^\omega a&a^\omega&a^\omega a&a^\omega aa\\
			a^\omega aa & a^\omega aa &a^\omega a& a^\omega aa&a^\omega&a^\omega a&a^\omega aa\\
			\hline
			-\Aomega&1&a^\omega&a^\omega&a^\omega&a^\omega&a^\omega
			\end{array}$
		\end{tabular}
		\end{minipage}\vspace{.5cm}
		\begin{center}\(
			\Closgord{\{\{a\}\}}=\{\{1\},\{a\},\{aa\},\{a,aa\},\{a^\omega\},\{a^\omega a\},\{a^\omega aa\},\{a^\omega a,a^\omega aa\}\}
		\)\end{center}
		\end{center}
	\caption{\label{fig:tiptop-monoid}%
	Egg-box diagram of a finite "ordinal monoid" $\ordmonoid$
	recognising~$\exJ$, $\exK$ and $\exL$ (left),
	multiplication table and "\(\omega\)-iteration" of \(\ordmonoid\)
	(right) and "saturation" (bottom).}
\end{figure}
\begin{example}
	\label{example:saturation}
	\AP We illustrate the saturation construction and the algorithm
	on the following three languages over the singleton alphabet $\{a\}$:
	\phantomintro\exJ\phantomintro\exK\phantomintro\exL
	\begin{align*}
		\reintro*\exJ &= \{\text{infinite "words@@ord" whose
		longest finite suffix has even length}\},\\
		\reintro*\exK &= \{\text{infinite "words@@ord" whose
		longest finite suffix has odd length}\}, \\
		\text{and}\qquad\reintro*\exL &=  \{\text{\hphantom{infinite} "words@@ord" that do not have a last letter}\}.
 	\end{align*}
	It is classical that~$\exJ$ and~$\exK$ are not "\FO-definable@@lang", while $\exL$ is defined by the formula~$\forall x.\,\exists y.\, y>x$.	We can build a finite "ordinal monoid" $\ordmonoid$ "recognising@@OM" all 
	three languages: it has six
	elements, $1$, $a$, $aa$,
	$a^\omega$, $a^\omega a$ and $a^\omega aa$.
	Its "presentation@@OM" its described \Cref{fig:tiptop-monoid}.
	Naturally, the letter~$a$ is mapped to~$\lettermap(a)=a$.
	Then $\exJ$, $\exK$ and $\exL$ are recognised by
	$F_{\exJ} := \{a^\omega, a^\omega aa\}$,
	by $F_{\exK} := \{a^\omega a\}$
	and by $F_{\exL} := \{1,a^\omega\}$, respectively.

	The languages $\exK$ and $\exL$ are "\FO-separable":
	in fact $\exL$ is an "\FO-separator" of $\exK$ and $\exL$.
	On the other hand, $\exJ$ and $\exK$ are not
	"\FO-separable", as witnessed by the "saturation" algorithm.
	Indeed, the "saturation"
	$\Closgord{\{\{\lettermap(a)\} \mid a \in \alphabet\}}$
	contains all singletons, and furthermore $\{a,aa\} = \{a\}\Pmerge$. As a consequence, it also contains
	$\{a^\omega a, a^\omega aa\} = \{a\}\Pomega \Pcdot \{a,aa\}$.
	This last set intersects both $F_{\exJ}$ and $F_{\exK}$.
\end{example}

The rest of the paper is dedicated to establishing the validity of this approach.
In \Cref{section:no}, we prove \Cref{proposition:no} stating that if the "algorithm" answers `"no"', then the languages cannot be separated, as described in \Cref{theorem:main}.
In \Cref{section:yes}, we prove \Cref{corollary:yes} stating that if the "algorithm" answers `"yes"', then it is possible to construct an "\FO-separator sentence" as described in \Cref{theorem:main}.
In \Cref{section:related}, we shall package the results of \Cref{section:no,section:yes} differently, concluding that we have in fact computed the "pointlike sets", and that we can also decide the more general "covering problem".

\section{When the algorithm says `"no"'}
\label{section:no}

In this section, we establish the correctness of the "algorithm", i.e., when the "algorithm" answers `"no"',
we have to prove that the two input languages cannot be separated by an "\FO-definable language", and that we can produce a "witness function".
This is established in
\Cref{proposition:no}.
The proof follows standard arguments.

\AP The ""quantifier depth"", a.k.a. quantifier rank, of an "\FO-formula" is the maximal number of
nested quantifiers in the formula.
\AP Two words~$u,v\in \alphabet\ord$ are said to be ""\FOk-equivalent"",
denoted by $u \intro*\FOkeq v$, if every 
"\FO-sentence" of "quantifier depth" at most~$k$ "accepts"~$u$ if and only if it "accepts"~$v$.

\begin{proposition}%
	\AP\label{prop:compositionality-FO}%
	\label{prop:rosenstein-main-text}%
	Let $k\in\Nats$.
	\begin{itemize}
		\item For $u,u',v,v' \in \alphabet\ord$,
			if $u \FOkeq u'$ and $v \FOkeq v'$ then $uv \FOkeq u'v'$,
		\item for all $\alphabet\ord$-valued sequences $(u_n)_{n\in\Nats}$
		and $(v_n)_{n\in\Nats}$, if $u_n \FOkeq v_n$ for all $n\in\Nats$,
		then $\flatten(u_n \mid n\in\Nats) \FOkeq
		\flatten(v_n \mid n\in\Nats)$, and
		\item for all $n\geqslant 2^k-1$, for all $u\in \alphabet\ord$,
			$u^n \FOkeq u^{n+1}$.
	\end{itemize}
\end{proposition}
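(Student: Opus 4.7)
The plan is to prove all three compositionality statements through the standard Ehrenfeucht-Fraïssé (EF) game characterization: $u \FOkeq v$ if and only if Duplicator has a winning strategy in the $k$-round EF game $G_k(u,v)$ on the two words (viewed as relational structures with order and unary letter predicates). All three claims then reduce to exhibiting combined winning strategies for Duplicator from given winning strategies on pieces.

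\textbf{Part 1 (concatenation).} Fix winning strategies $\sigma$ for $G_k(u,u')$ and $\tau$ for $G_k(v,v')$. Build a strategy in $G_k(uv,u'v')$ by bookkeeping: at each round, whenever Spoiler picks a position in the ``$u$'' (resp. ``$v$'') part of $uv$, Duplicator responds in the ``$u'$'' (resp. ``$v'$'') part using $\sigma$ (resp. $\tau$); symmetrically for moves on the right structure. Order preservation within each half is guaranteed by $\sigma, \tau$, and order preservation between halves holds for free because every position in the left half lies below every position in the right half in both $uv$ and $u'v'$. Letter predicates are preserved because the two sub-strategies preserve them.

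\textbf{Part 2 ($\omega$-indexed flattening).} Fix, for each $n \in \Nats$, a winning strategy $\sigma_n$ for $G_k(u_n, v_n)$. In $G_k(\flatten(u_n\mid n\in \Nats), \flatten(v_n\mid n\in\Nats))$, a Spoiler move lies in exactly one block; Duplicator answers in the corresponding block on the other side using $\sigma_n$. The key point, as in Part 1, is that inter-block order is determined by the block index, which is identical on both sides since the two flattened words share the same block-indexing structure $\Nats$; intra-block order is handled by each $\sigma_n$. Hence the composed strategy wins $G_k$.

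\textbf{Part 3 (idempotent power).} Show by induction on $k$ that if $n \geq 2^k - 1$, then Duplicator wins $G_k(u^n, u^{n+1})$. For $k=0$ there is nothing to check beyond letter agreement, which is vacuous. For the inductive step, maintain the following invariant after $j$ rounds: the pebbled positions partition each side into consecutive groups of copies of $u$ (with possibly an initial or trailing partial copy paired with its counterpart via the sub-game on $u$), and each pair of corresponding groups has copy-counts that are either equal, or both at least $2^{k-j}-1$. When Spoiler picks a position in some group, Duplicator answers in the corresponding group, choosing the copy so that the new left/right sub-group counts continue to satisfy the invariant with parameter $2^{k-j-1}-1$; within the chosen copy of $u$, Duplicator mimics Spoiler's position exactly (the one-round EF game on $u$ versus $u$ is trivially won by the identity). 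At the last round ($j=k$), the invariant degenerates to equality or both-nonzero, which suffices.

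The main delicate point is Part 3, where the combinatorial invariant on copy counts must be chosen to survive the doubling inherent in an EF induction; this is the classical argument behind the bound $2^k - 1$. Parts 1 and 2 are then routine consequences of the ``composition of strategies'' pattern, exploiting that the two sides share the same block skeleton, so inter-block comparisons are automatic.
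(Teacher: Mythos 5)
Your proof follows exactly the route the paper takes: the paper establishes this proposition by appeal to Ehrenfeucht-Fraïssé games (citing Rosenstein, Lemma 6.5 and Corollary 6.9, for the first and third items and noting the second is similar), and your strategy-composition arguments for items 1 and 2 together with the $2^k-1$ interval-halving invariant for item 3 are precisely those standard arguments, spelled out. The proposal is correct.
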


This can be proved, for example,
by using Ehrenfeucht-Fraïssé games---see e.g.
\cite[Lemma 6.5 \& Corollary 6.9]{rosenstein1982linear} for a proof of
the first and third items; the proof of the second item is similar\footnote{Moreover, note that the first item can be deduced from the second item
by taking $u_n = v_n = \varepsilon$ for $n \geq 2$.}.
Note that the first two items are also immediate corollaries of the 
Feferman-Vaught theorem \cite[Theorem 1.3]{makowsky2004algorithmic}.
Note that the third property can
be used to prove that every "\FO-definable language"
is recognised by an "aperiodic" finite "ordinal monoid"---this is the 
easy direction of "Bedon's theorem" \cite{bedon2001logic}.

\AP
Throughout the rest of this section, we fix~$K$ and $L$, two "regular languages of countable ordinal words" over an alphabet~$\alphabet$. Recall that the "algorithm" computes the subset $\reintro*\Sat := \Closgord{\{\{\lettermap(a)\}\mid a\in\alphabet\}}$ of $\Pset(\monoid)$, where~$\monoid$ is a finite "ordinal monoid" recognizing both $K$ and $L$.

We begin with a lemma which states that to all sets that belong to $\Sat$ can be effectively associated witnesses of indisinguishability
(we shall see in \Cref{proposition:pointlikes-computable} that what we have proved is that the elements in~$\Sat$ are "pointlike sets").
\begin{lemma}\AP\labelwithproof{lemma:saturation-sets-are-pointlike}
	There exists a computable function which takes as input a number $k \in \Nats$ and an element $X \in \Sat$,
	and produces an $X$-indexed sequence of ordinal words $(u_x)_{x \in X} \in (\alphabet\ord)^X$
	such that, 
	\begin{itemize}
	\item $\product(\ordmap(u_x)) = x$ for all~$x\in X$, and
	\item $u_{x} \FOkeq u_{x'}$ for all~$x,x'\in X$.
	\end{itemize}
\end{lemma}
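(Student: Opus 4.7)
The plan is to proceed by induction on the derivation of $X$ in \Cref{definition:sat}, where the generators are $\{\Aunit\}$ and the singletons $\{\lettermap(a)\}$ for $a\in\alphabet$, and the operations are $\Pcdot$, $\Pomega$, and $\Pmerge$. Such a derivation is produced effectively by the iterative computation of $\Sat$, and each step of the construction below is itself effective (finite enumerations and products inside a finite ordinal monoid). For the two base cases I set $u_{\Aunit}:=\varepsilon$ and $u_{\lettermap(a)}:=a$; the equalities $\product(\ordmap(u_x))=x$ are immediate, and a one-element family is trivially $\FOk$-equivalent to itself.

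For the inductive step $X=Y\Pcdot Z$, using the families $(v_y)_{y\in Y}$ and $(w_z)_{z\in Z}$ given by the induction hypothesis, for each $x\in X$ I pick some decomposition $x=y\Acdot z$ with $y\in Y$, $z\in Z$ and define $u_x:=v_y w_z$. The "generalised associativity" axiom gives the correct evaluation, and item~1 of \Cref{prop:compositionality-FO} gives the $\FOk$-equivalences. For $X=Y\Pomega$, every $x\in X$ has the form $u\Acdot v\Aomega$ with $u,v\in Y^+$; factoring $u=y_1\cdots y_m$ and $v=y'_1\cdots y'_n$ in $Y$, and setting $u_x:=v_{y_1}\cdots v_{y_m}\,(v_{y'_1}\cdots v_{y'_n})\Womega$ yields a valid representative. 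Iterating item~1 of \Cref{prop:compositionality-FO} inside each block and using item~2 on the $\omega$-iteration, each $u_x$ turns out to be $\FOk$-equivalent to $v_{y_0}\Womega$ for an arbitrary but fixed $y_0\in Y$, hence all $u_x$ are mutually $\FOk$-equivalent.

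The main case, and the only place where the new operation $\Pmerge$ plays a role, is $X=Y\Pmerge$. Expanding the definition gives $Y\Pmerge=\bigcap_n\bigcup_{m\geq n}Y^m$, so each $x\in X$ admits factorisations $x=y_1\cdots y_m$ in $Y$ for arbitrarily large~$m$. The key move is to choose, for each $x$, such a factorisation with $m_x\geq 2^k-1$, and to set $u_x:=v_{y_1}\cdots v_{y_{m_x}}$. Correctness of the evaluation is again by "generalised associativity". To prove $\FOk$-equivalence, fix any $y_0\in Y$ (every element of $\Sat$ is nonempty by an easy induction on the derivation). Iterating item~1 of \Cref{prop:compositionality-FO} together with the induction hypothesis $v_{y_i}\FOkeq v_{y_0}$ yields $u_x\FOkeq v_{y_0}^{m_x}$, and then item~3, applicable since both $m_x$ and $m_{x'}$ are at least $2^k-1$, yields $v_{y_0}^{m_x}\FOkeq v_{y_0}^{m_{x'}}$; combining these gives $u_x\FOkeq u_{x'}$ for every pair $x,x'\in X$. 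This last use of item~3 is the crux of the argument: it is precisely what turns the algebraic closure under $\Pmerge$ into indistinguishability in first-order logic at quantifier depth $k$, and is the step that would fail if the $\Pmerge$-closure were replaced by a closure under a non-"aperiodic" group operation.
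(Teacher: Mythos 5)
Your proof is correct and follows essentially the same inductive scheme as the paper's: identical base and binary-product cases, and the same treatment of the $-\Pmerge$ case via factorisations of length at least $2^k-1$ combined with the third item of \Cref{prop:compositionality-FO}. The only (harmless) divergence is in the $-\Pomega$ case, where the paper synchronises the lengths of the chosen decompositions across the different elements of $X\Pomega$ before comparing them block by block, whereas you sidestep this bookkeeping by regrouping each $u_x$ as an $\omega$-indexed flattening of blocks that are all $\FOkeq v_{y_0}$ for a fixed $y_0\in Y$ and applying the second item of \Cref{prop:compositionality-FO} directly; both arguments are valid.
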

The proof is by structural induction on the definition of~$\Sat$, making use of the two first items of \Cref{prop:compositionality-FO}
for composing witnesses, and of furthermore the third item for treating the $-\Pmerge$ operation.

From the above lemma, one can easily deduce that when the "algorithm" answers `"no"', there is indeed an obstruction to the fact that~$K$ and~$L$ can be "\FO-separated".
\begin{proposition}\AP\label{proposition:no}
	Assume that the "algorithm" answers `"no"' when run with input languages $K$ and $L$. Then there is a "witness function" which computes, for any "\FO-sentence" $\varphi$, a pair of words $(u,u') \in K \times L$ such that $u \models \varphi$ if and only if $u' \models \varphi$. In particular, $K$ and $L$ cannot be "\FO-separated".
\end{proposition}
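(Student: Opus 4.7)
The plan is to apply Lemma~\ref{lemma:saturation-sets-are-pointlike} essentially as a black box. Given an "\FO-sentence"~$\varphi$, the witness function will first compute $k$, the quantifier depth of~$\varphi$. Because the "algorithm" answered `"no"', there exists some $X \in \Sat$ with $F_K \cap X \neq \varnothing$ and $F_L \cap X \neq \varnothing$, and such an~$X$ can be found effectively by scanning the (already computed, finite) "saturation" $\Sat$. Pick $x \in F_K \cap X$ and $x' \in F_L \cap X$.

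I would then feed the pair $(k, X)$ to the computable function provided by Lemma~\ref{lemma:saturation-sets-are-pointlike}, obtaining a family $(u_y)_{y \in X}$ of "countable ordinal words" such that $\product(\ordmap(u_y)) = y$ for every $y \in X$ and $u_y \FOkeq u_{y'}$ for all $y, y' \in X$. The pair $(u_x, u_{x'})$ then lies in $K \times L$ by the first property (since $x \in F_K$ and $x' \in F_L$ and $K$, $L$ are recognised via $F_K$, $F_L$), and this is the value the witness function returns on input~$\varphi$. All of the steps above are computable, so the construction yields a genuine "witness function".

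Correctness is then immediate from the definition of "\FOk-equivalent": since $u_x \FOkeq u_{x'}$ and $\varphi$ has quantifier depth at most $k$, we get $u_x \models \varphi$ iff $u_{x'} \models \varphi$. The ``in particular'' clause of the statement follows by contradiction: any hypothetical "\FO-separator"~$\psi$ of $K$ and~$L$ would, by definition of "separates", satisfy $u_x \models \psi$ and $u_{x'} \models \neg \psi$, which directly contradicts the equivalence we just established applied to $\varphi := \psi$.

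I do not anticipate any genuine obstacle here; the whole combinatorial and logical content of the claim has been absorbed into Lemma~\ref{lemma:saturation-sets-are-pointlike}, and Proposition~\ref{proposition:no} amounts to repackaging that lemma into the format promised by the `"no"' branch of \Cref{theorem:main}. The only mild point of care is to notice that the quantifier depth $k$ of~$\varphi$ really is the only parameter one needs to pass to the lemma, so that a single call to the lemma suffices for each input sentence.
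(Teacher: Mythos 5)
Your proposal is correct and follows essentially the same route as the paper's own proof: pick $x \in F_K \cap X$ and $x' \in F_L \cap X$ for the offending $X \in \Sat$, invoke \Cref{lemma:saturation-sets-are-pointlike} with $k$ the quantifier depth of the input sentence, and return $(u_x, u_{x'})$. The only (harmless) addition is your explicit spelling-out of the ``in particular'' clause, which the paper leaves implicit.
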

\begin{proof}
	Since the algorithm answered `"no"', pick a pair $(x, x') \in F_K \times F_L$ such that $x, x' \in X$ for some $X \in \Sat$. Now, for any \FO-sentence $\varphi$, using the function of \Cref{lemma:saturation-sets-are-pointlike} with $k$ the quantifier depth of $\varphi$, we can compute a sequence $(u_x)_{x \in X}$ of ordinal words. Now define $u := u_x$ and $u' := u_{x'}$. Then $u \FOkeq u'$, so that $u \models \varphi$ if and only if $u' \models \varphi$. Also, $\product(\ordmap(u)) = x \in F_K$ and $\product(\ordmap(u')) = x' \in F_L$, so $u \in K$ and $u' \in L$.
\end{proof}

\begin{example}[Continuing \Cref{example:saturation}]\AP\label{example:no}
	Recall that \(\exJ\) and \(\exK\) are not "\FO-separable". Because
	of the set \(\{a^\omega a, a^\omega aa\} \in
	\Closgord{\{\lettermap(a)\ \mid a \in \alphabet\}}\),
	the "algorithm" outputs `"no"', and can return, to "witness@witness function" the
	"\FO-inseparability" of the two languages the computable map
	\[\varphi \mapsto (a^\omega a^{2^k+1},\, a^\omega a^{2^k+2}) \in \exJ \times \exK,\]
	where $k$ denoted the "quantifier depth" of $\varphi$.
	To prove that $a^\omega a^{2^k+1} \FOkeq a^\omega a^{2^k+2}$, one
	can simply use the first and third items of \Cref{prop:compositionality-FO}.
\end{example}

\section{When the algorithm says `"yes"'}
\label{section:yes}

We now establish the completeness part of the proof of the main theorem, \Cref{theorem:main}.
The goal of this proof is to establish that if the "algorithm" answers `"yes"', it is indeed possible to produce an "\FO-separator" (\Cref{corollary:yes}).

This is the part of the proof that differs most substantially from previous works on separation.
In \Cref{subsection:merge-and-approximants}, we abstract the question with the notion of "ordinal monoids with merge",
and we introduce the notion of "\FO-approximants" which are "\FO-definable@@map"
over-approximations of the product. The key result, \Cref{lemma:completude-core}, states their existence for all finite "ordinal monoid with merge".  \Cref{corollary:yes} follows immediately.
The proof of \Cref{lemma:completude-core} is then established in  \Cref{section:FO-approximant-finite-and-omega} for words of "finite@finite words" or  "$\omega$ length@$\omega$-words". Building on these simpler cases, the general case is the subject \Cref{section:FO-approximant-ordinal}.

\subsection{"Merge operators" and "{\FO}-approximants"}
\label{subsection:merge-and-approximants}

\AP We abstract in this section the ordinal $\Pset(M)$ equipped with the $-\Pmerge$ operator into a new algebraic structure.
A finite ""ordinal monoid with merge"" $\monoid=(M,\Aunit,\leqslant,\Acdot,\Aomega,\Amerge)$ consists of:
\begin{itemize}
\item a "presentation@@OM" of an "ordered ordinal monoid" $(M,\Aunit,\leqslant,\Acdot,\Aomega)$, together with
\item a monotonic ""merge operator""~$-\intro*\Amerge\colon M\to M$ such that for all~$a,b\in M$, and all integers~$k$,
	\begin{align*}
		a\idem[+k]&\leqslant a\Amerge, &
		(a\idem)\Amerge &= a\idem,\\
		a\Amerge\Acdot a\Amerge&=(a\Amerge)\Amerge=a\Amerge,&
		\text{and}\qquad (a\Acdot b)\Amerge &= a\Acdot(b\Acdot a)\Amerge \Acdot b\ .
	\end{align*}
\end{itemize}
The following lemma is an immediate consequence of  \Cref{lemma:properties-Pmerge}.
\begin{lemma}\AP\label{lemma:powerset-is-merge}
	$(\Pset(\ordmonoid),\{\Aunit\},\subseteq,\Pcdot,\Pomega,\Pmerge)$ and $(\Sat,\{\Aunit\},\subseteq,\Pcdot,\Pomega,\Pmerge)$ are "ordinal monoids
	with merge".
\end{lemma}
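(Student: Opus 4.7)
The plan is to verify each clause of the definition of an ordinal monoid with merge, first for $\Pset(\ordmonoid)$ and then for $\Sat$. For $\Pset(\ordmonoid)$, most of the work has already been done: the paper previously notes that it is an ordered ordinal monoid under inclusion, with presentation $(\{\Aunit\}, \Pcdot, \Pomega)$, and \Cref{lemma:properties-Pmerge} directly provides both the monotonicity of $\Pmerge$ and three of the four required merge axioms, namely $A\idem[+k] \subseteq A\Pmerge$, the identity $(A \Pcdot B)\Pmerge = A \Pcdot (B \Pcdot A)\Pmerge \Pcdot B$, and the double-merge equations $A\Pmerge \Pcdot A\Pmerge = (A\Pmerge)\Pmerge = A\Pmerge$.

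The only axiom not immediately supplied by \Cref{lemma:properties-Pmerge} is $(A\idem)\Pmerge = A\idem$. I would derive it from the observation that $A\idem$ is an idempotent in the finite semigroup $(\Pset(\ordmonoid), \Pcdot)$, so $(A\idem)^n = A\idem$ for every $n \geqslant 1$; in particular $(A\idem)\idem[+k] = A\idem$ for every $k \in \Nats$, so the defining union $\bigcup_k (A\idem)\idem[+k]$ collapses to $A\idem$.

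For $\Sat$, \Cref{definition:sat} gives that it contains $\{\Aunit\}$ and is closed under $\Pcdot$, $\Pomega$, and $\Pmerge$, so the constant and all three operations restrict to $\Sat$, and the order $\subseteq$ restricts too. Every axiom in the definition of an ordinal monoid with merge is a universally quantified equation or inequality between terms built from these operations, so it transfers from $\Pset(\ordmonoid)$ to $\Sat$ for free. The one point warranting a brief check is that $\Sat$ inherits the full generalised product $\Pproduct$ of $\Pset(\ordmonoid)$, not merely its finitary fragments $\Pcdot$ and $\Pomega$: this follows from the proposition that the generalised product of a finite ordinal monoid is uniquely determined by $\Aunit$, $\Acdot$, and $-\Aomega$, so the value $\Pproduct$ takes on any countable ordinal word of $\Sat$-elements is expressible as a term in $\Pcdot$ and $\Pomega$ and therefore lies in $\Sat$. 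I expect the idempotence argument for $(A\idem)\Pmerge = A\idem$ and this last presentation-level check to be the only mildly subtle steps; everything else is direct inheritance from \Cref{lemma:properties-Pmerge}.
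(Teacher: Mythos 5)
Your proof is correct and follows the same route as the paper, which simply declares the lemma an immediate consequence of \Cref{lemma:properties-Pmerge}; you rightly observe that this citation does not literally cover the axiom $(A\idem)\Pmerge = A\idem$, and your one-line derivation from idempotency of $A\idem$ in $(\Pset(\ordmonoid),\Pcdot)$ fills that gap correctly. Your remark that $\Sat$ inherits the full generalised product because $\Pproduct$ on a finite ordinal monoid is determined by (indeed, computable from) the presentation $(\Punit,\Pcdot,\Pomega)$ is also the right justification for the point the paper leaves implicit.
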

The idea behind "ordinal monoids with merge" is that not only there is a product operation as for every "ordinal monoid", but also an "\FO-definable@@map"
over-approximation for it. This is the concept of "\FO-approximant" that we introduce now.
\AP Given a an "\FO-definable language"~$L\subseteq M\ord$, an ""\FO-approximant"" of~$\product$ over~$L$ is an "\FO-definable map"~$\rho\colon L\to M$ such that:
\begin{align*}
	\product(u)\leqslant \rho(u),\qquad\text{for all~$u\in L$.}
\end{align*}
The key result concerning "ordinal monoids with merge" is the existence of a total "\FO-approximant":
\begin{lemma}\AP\label{lemma:completude-core}
	There is an "\FO-approximant" $\intro*\PFOproduct$ over
	 $M\ord$ for all "ordinal monoids with merge"~$\monoid$.
\end{lemma}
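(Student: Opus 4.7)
The plan is to construct the FO-approximant $\PFOproduct$ in three stages of increasing generality: finite words, $\omega$-words, and finally arbitrary countable ordinal words. The first two cases serve as the base, and the general case is assembled on top using the finite condensation $\hat\varphi_{\finite}$ together with \Cref{lemma:composition-fo-condensation}.

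For finite words, I would construct an FO-approximant $M^*\to M$ via an FO-detectable factorisation of the input. The crucial input is the third item of \Cref{prop:rosenstein-main-text}: once $n\geqslant 2^k-1$, FO of quantifier depth $k$ cannot distinguish $u^n$ from $u^{n+1}$. Hence, at any fixed depth budget $k$, the logic can detect that a factor of the form $v^m$ is ``long'', but cannot count $m$ precisely. Such a factor may safely be collapsed to $v\Amerge$ in the computed image. The axiom $a\idem[+j]\leqslant a\Amerge$ guarantees that this is an over-approximation of $\product$; the axiom $a\Amerge\Acdot a\Amerge = a\Amerge$ ensures consistency between adjacent collapses; and $(a\Acdot b)\Amerge = a\Acdot (b\Acdot a)\Amerge\Acdot b$ handles overlapping factorisations.

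For $\omega$-words, Ramsey's theorem yields, for every $w\in M\Womega$, a linked factorisation $w = w_0 w_1 w_2\cdots$ in which $\product(w_iw_{i+1}\cdots w_j)=e$ for some fixed idempotent $e\in M$ and all $0<i\leqslant j$. Then $\product(w) = \product(w_0\cdots w_{n-1})\Acdot e\Aomega$ for an appropriate $n$. An FO-approximant on $M\Womega$ is then obtained by applying the finite-word approximant locally on each $w_i$ and then the operator $-\Aomega$ globally, both of which fit the composition pattern of \Cref{lemma:composition-fo-condensation}.

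For arbitrary ordinal words $u\in M\ord$, apply $\hat\varphi_{\finite}$ to decompose $u$ into an ordinal-length word whose letters are $\omega$-words (or a final finite word). By \Cref{lemma:composition-fo-condensation}, it then suffices to build an FO-approximant on the outer ordinal-length skeleton, using the previous stages for the local letters. I would then iterate this condensation-and-reduction procedure: each successive layer collapses further structure via $-\Aomega$ and $-\Amerge$. The main obstacle is to verify that a single uniform FO-definable function captures all levels of this iteration, for arbitrary countable ordinals. This should follow from the finiteness of $M$ combined with the stabilisation laws of the merge operator, notably $(a\Amerge)\Amerge=a\Amerge$: after boundedly many layers the monoid value at every position is fixed, the word collapses to a single element, and the outer approximant outputs that element, giving $\PFOproduct(u)\geqslant \product(u)$.
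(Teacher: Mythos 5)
Your three-stage plan (finite words, then $\omega$-words, then general ordinal words) matches the paper's architecture, but each stage has a gap, and the termination argument in the last one fails outright.

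For finite and $\omega$-words, the difficulty is not to recognise that a factor of the form $v^m$ is long, but to produce, \emph{in first-order logic}, a factorisation of an arbitrary word over $M$ into pieces to which the merge can be applied. The paper does this by a structural induction on the size of the closure $\Closgp{A}$, driven by a trichotomy (\Cref{lemma:finite-trichotomy}): either left multiplication by some letter $a$ strictly shrinks $\Closgp{A}$, in which case one condenses on maximal blocks in $a^*B^*$ (with $B=A\smallsetminus\{a\}$) and recurses on the strictly smaller algebra generated by $a\Acdot\Closgp{A}$; or symmetrically on the right; or $\Closgp{A}$ has a maximum and the constant map is an approximant. Your sketch supplies none of this factorisation machinery. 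The Ramsey-based factorisation you propose for $\omega$-words does not repair it: Ramsey guarantees that a linked factorisation with idempotent colour \emph{exists} for each word, but its cut points are not \FO-definable, so it cannot be converted into an \FO-definable map. In the paper, Ramsey only serves to show that $X\Pomega$ is computable in the power monoid, not to define the approximant; the $\omega$-word approximant (\Cref{lemma:fo-approximant-omega}) reuses the same trichotomy-driven induction.

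For the general case, the claim that ``after boundedly many layers the word collapses to a single element'' is false: the quotient of $\omega^{\omega}$ by the finite condensation is again $\omega^{\omega}$, so iterating $\hat{\varphi}_{\finite}$ never terminates on the word side, and the finiteness of $M$ together with $(a\Amerge)\Amerge=a\Amerge$ does not rescue this. The paper's induction instead progresses on the algebra side, via the trichotomy principle (\Cref{lemma:trichotomy-ordinal-words}): either $a\Acdot\Closgordp{A}\subsetneq\Closgordp{A}$ for some $a\in A$ (handled as in the finite case); or $\Closgordp{\Closgo{A}}\subsetneq\Closgordp{A}$, in which case a \emph{single} application of the finite condensation reduces the problem to an approximant over the strictly smaller algebra generated by $\Closgo{A}$; or neither holds, and a Green's-relations analysis shows that $x\Acdot y=y$ and $x\Aomega=y\Aomega$ for all $x,y\in\Closgordp{A}$, so that $\product$ depends only on the last letter (or its absence) and is directly \FO-definable. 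This third case is precisely what catches the words, such as those of domain $\omega^{\omega}$, on which your layered condensation would loop forever; without it, and without the second case's measure of progress, your construction does not terminate.
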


An example of an "\FO-approximant" can be found in \Cref{example:approximant}. Before establishing  \Cref{lemma:completude-core}, let us explain why it is sufficient for  concluding the proof of \Cref{theorem:main} in the case the "algorithm" answers `"yes"'.
\begin{corollary}\AP\label{corollary:yes}
	If the "algorithm" answers `"yes"', there exists an "\FO-separator".
\end{corollary}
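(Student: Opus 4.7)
The plan is to apply \Cref{lemma:completude-core} to the finite ordinal monoid with merge $\Sat$ (which is indeed such a structure by \Cref{lemma:powerset-is-merge}), and then to lift the resulting "\FO-approximant" along the letter-to-letter map $a \mapsto \{\lettermap(a)\}$ to obtain an "\FO-definable@@map" function $\bar\rho \colon \alphabet\ord \to \Sat$ that over-approximates $\product \circ \ordmap$ in the sense that $\product(\ordmap(u)) \in \bar\rho(u)$ for every $u\in \alphabet\ord$.

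More precisely, first I would apply \Cref{lemma:completude-core} to $\Sat$, obtaining an "\FO-approximant" $\PFOproduct \colon \Sat\ord \to \Sat$ such that $\Pproduct(v) \subseteq \PFOproduct(v)$ for all $v \in \Sat\ord$. Given any $u\in \alphabet\ord$, write $\tilde u \in \Sat\ord$ for the word obtained by replacing each letter $a$ by the singleton $\{\lettermap(a)\}$; then $\Pproduct(\tilde u) = \{\product(\ordmap(u))\}$ by the definition of the "power ordinal monoid", and hence $\product(\ordmap(u)) \in \PFOproduct(\tilde u)$. The lifted map $\bar\rho(u) := \PFOproduct(\tilde u)$ is "\FO-definable@@map", since each preimage $\PFOproduct^{-1}[X]$ is definable by an "\FO-sentence" over the alphabet~$\Sat$, in which each atomic predicate $X(x)$ can be syntactically replaced by $\bigvee_{a\,:\,\{\lettermap(a)\} = X} a(x)$ to yield an "\FO-sentence" over~$\alphabet$.

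Next, I would use the `"yes"' assumption to build the separator. By hypothesis, for every $X \in \Sat$ we have $X\cap F_K = \varnothing$ or $X\cap F_L = \varnothing$. Let
\[
	\mathcal X_K := \{X \in \Sat \ :\ X\cap F_K \neq \varnothing\},
\]
and define $\varphi := \bigvee_{X \in \mathcal X_K} \varphi_X$, where $\varphi_X$ is an "\FO-sentence" defining $\bar\rho^{-1}[X]$. This is a finite disjunction of "\FO-sentences", hence an "\FO-sentence". For $u \in K$ we have $\product(\ordmap(u)) \in F_K$, and since $\product(\ordmap(u)) \in \bar\rho(u)$, we get $\bar\rho(u) \cap F_K \neq \varnothing$, i.e.\ $\bar\rho(u) \in \mathcal X_K$, so $u \models \varphi$. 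Conversely, if $v \in L$ then $\product(\ordmap(v)) \in F_L \cap \bar\rho(v)$, so $\bar\rho(v)\cap F_L \neq \varnothing$, which by the `"yes"' assumption forces $\bar\rho(v)\cap F_K = \varnothing$, hence $\bar\rho(v) \notin \mathcal X_K$ and $v \not\models \varphi$. Thus $\varphi$ "\FO-separates" $K$ from $L$.

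The conceptual heart of the argument has been entirely offloaded to \Cref{lemma:completude-core}; the present step is little more than bookkeeping. The only point that demands care is the lifting from $\Sat\ord$ to $\alphabet\ord$, where one must check that the substitution of atomic predicates indeed preserves "\FO-definability@@map"~— but this is immediate because the relabeling $a \mapsto \{\lettermap(a)\}$ is a fixed letter-to-letter map, so substitution at the syntactic level yields the desired "\FO-sentences".
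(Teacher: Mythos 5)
Your proposal is correct and follows essentially the same route as the paper: both invoke \Cref{lemma:powerset-is-merge,lemma:completude-core} to obtain an "\FO-approximant", lift it along $a \mapsto \{\lettermap(a)\}$ (the paper's $\singordmap$), and take as separator the set of words whose approximant value meets $F_K$, which is exactly your $\bigvee_{X\in\mathcal X_K}\varphi_X$. The only cosmetic difference is that you apply the key lemma to $\Sat$ directly rather than to $\Pset(\ordmonoid)$ with the observation that the approximant restricted to $A\ord$ lands in $\Closgord A$.
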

\begin{proof}
	By \Cref{lemma:powerset-is-merge,lemma:completude-core}, there exists an "\FO-approximant" \[\PFOproduct: A\ord \to \Closgord A\] over the "power 
	ordinal monoid" $\Pset(\ordmonoid)$,
	where $A = \{\{\lettermap(a)\} \mid a \in \alphabet\}$.
	\AP Now define the language
	\begin{align*}
		S &:= \{u\in\alphabet\ord\mid \PFOproduct(\singordmap(u))\cap F_K\neq\varnothing\} \\
		\text{ where } {\intro*\singordmap}(u) &:= (\{\lettermap(u_i)\})_{i\in \dom(u)} \in A\ord \text{ for all }u\in\alphabet\ord.
	\end{align*}
	Note first that since~$\PFOproduct$ is "\FO-definable@@map", this language is "\FO-definable@@lang". Let us show that it "separates"~$K$ from~$L$.

	For every $u \in K$, $F_K\ni\product(\ordmap(u)) \subseteq \PFOproduct(\singordmap(u))$, and 
	as a consequence $\PFOproduct(\singordmap(u))\cap F_K \neq \varnothing$. We have proved $K \subseteq S$.
	
	Conversely, consider some $u \in L$.
	We have
	\[ F_L\ni\product(\ordmap(u))\in\PFOproduct(\singordmap(u))\in \Closgord A,\]
	and thus $\PFOproduct(\singordmap(u))\cap F_L\neq\varnothing$.
	Since the "algorithm" returns `"yes"', this means that there is no set in~$\Closgord A$ that intersects both~$F_K$ and $F_L$.
	In our case, this means that $\PFOproduct(\singordmap(u))\cap F_K=\varnothing$, proving that $u\not\in S$.
	We have proved $L\cap S=\varnothing$.
	
	Overall, $S$ is an "\FO-separator" for~$K$ and~$L$.
\end{proof}

\begin{remark}
Notice how the ``difficult'' implication of "Bedon's theorem" (\Cref{prop:bedon-thm}) can be easily deduced
from \Cref{lemma:completude-core}\footnote{Similarly, for finite words, Schützenberger-McNaughton-Papert's theorem
is a consequence of Henckell's algorithm for aperiodic pointlikes---see e.g. \cite[Corollary 4.8]{zeitoun2016separating}}:
recall that this implication consists in showing that a
"regular language@@cow" $L \subseteq \alphabet\ord$, "recognised@@OM" by some triplet 
$(\ordmonoid,\lettermap,F)$ with $\ordmonoid$ is "aperiodic" is definable in "first-order logic".
Indeed, by "aperiodicity" of $\ordmonoid$, the operation~$\Pmerge$ applied to a singleton~$\{a\}$
yields the singleton~$\{a\idem\}$. Hence, the set
$\Closgord{\{\{\lettermap(a)\}\
\mid a \in \alphabet\}} = \{\{\product\circ\ordmap(u)\} \mid u\in 
\alphabet\ord\}$ consists only of singletons, and as a consequence, all "\FO-approximants" $\rho$ (and in particular the one constructed in \Cref{lemma:completude-core})
maps a word~$u$ to~$\product(u)$. 
Hence, $\product$ is an "\FO-definable map", and thus~$L$ is an "\FO-definable language". 
\end{remark}

The rest of this section is devoted to establishing
\Cref{lemma:completude-core}. The construction is based on subresults showing 
the existence of "\FO-approximants" over subsets of~$M\ord$; first for "finite@finite words" 
and "$\omega$-words" in \Cref{section:FO-approximant-finite-and-omega}, and 
finally for "words of any countable ordinal length@words@ord" in \Cref
{section:FO-approximant-ordinal}. But beforehand, we shall introduce some more 
definitions and elementary results.


\AP
In what follows we use the notation $\Closgord -$ from \Cref{definition:merge}, interpreted in a generic "ordinal monoid with merge", as well as some variants. Let~$A\subseteq M$.
We define $\intro*\Closp A$ as the closure of~$A$ under~$\Acdot$, $\intro*\Closgp A$ as the closure of~$A$ under $\Acdot$ and~$-\Amerge$, and~$\intro*\Closgs A$ as~$\Closgp A\cup \{\Aunit\}$. 
We define~$\intro*\Closgordp A$ as the closure of~$A$ under~$\Acdot$, $\Amerge$ and $\Aomega$.
Note that thanks to the identities of "ordinal monoids with merge", we have $\Closgord A=\Closgordp A\cup\{\Aunit\}$.
Moreover, we have the following identities\footnote{Notice the similarity with the (trivial) identities $A^+ = AA^* = A^*A$ and
$A\ordp = A A\ord$.}:

\begin{proposition}\AP\label{prop:saturation-take-letter-out}
	Let $\ordmonoid$ be an "ordinal monoid with merge".
	For every $A \subseteq \ordmonoid$, 
	\begin{align*}
		\Closgp A &= A \Closgs A = \Closgs A A&
		\text{and}\qquad\Closgordp A &= A \Closgord A\ .
	\end{align*}
\end{proposition}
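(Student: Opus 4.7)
The plan is to handle the three claimed equalities in a uniform way, by structural induction on the inductive definitions of $\Closgp A$ and $\Closgordp A$. The inclusions $A \Closgs A \subseteq \Closgp A$, $\Closgs A A \subseteq \Closgp A$ and $A \Closgord A \subseteq \Closgordp A$ are immediate: we have $A \subseteq \Closgp A \subseteq \Closgordp A$, and the multiplication by an element of $\Closgs A$ (resp.\ $\Closgord A$) either is a multiplication by $\Aunit$ (which is neutral) or stays inside $\Closgp A$ (resp.\ $\Closgordp A$) by closure under~$\Acdot$.

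For the reverse inclusions, the idea is to show that every element of $\Closgp A$ (resp.\ $\Closgordp A$) can be ``factored with an $A$-letter on the left'', i.e.\ written as $a \Acdot x$ with $a \in A$ and $x \in \Closgs A$ (resp.\ $x \in \Closgord A$). I would proceed by structural induction on the element built from $A$ via the allowed operations. The base case $a \in A$ is handled by writing $a = a \Acdot \Aunit$. For a product $x \Acdot y$ of two elements $x = a \Acdot x'$, $y = \ldots$, I simply associate to get $a \Acdot (x' \Acdot y)$ and observe that $x' \Acdot y$ remains in $\Closgs A$ (resp.\ $\Closgord A$) — if $x' = \Aunit$ because then $x' \Acdot y = y$, and otherwise by closure under $\Acdot$.

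The interesting cases are the two non-linear operations. For $\Amerge$, apply the defining identity $(a \Acdot x')\Amerge = a \Acdot (x' \Acdot a)\Amerge \Acdot x'$ of "ordinal monoids with merge" to the induction hypothesis $x = a \Acdot x'$, and note that $(x' \Acdot a)\Amerge \Acdot x'$ lies in $\Closgs A$ (resp.\ $\Closgord A$), treating the degenerate case $x' = \Aunit$ separately (where the expression simplifies to $a\Amerge \in \Closgp A$). For $\Aomega$, the analogous identity $(a \Acdot x')\Aomega = a \Acdot (x' \Acdot a)\Aomega$, which follows from "generalised associativity" by regrouping the $\omega$-iterate, takes care of the $\Aomega$ case of the second equality. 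The right-handed version $\Closgp A = \Closgs A A$ is entirely symmetric, pulling a letter out on the right instead of the left.

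The main (mild) obstacle is a purely bookkeeping one: ensuring that the algebraic identities are applied cleanly when the ``tail'' $x'$ happens to equal $\Aunit$, since $\Aunit$ belongs to $\Closgs A$ and $\Closgord A$ but not to $\Closgp A$ and $\Closgordp A$. This is why the statement uses $\Closgs A$ and $\Closgord A$ on one side while $\Closgp A$ and $\Closgordp A$ appear on the other. Once this distinction is tracked carefully through the induction, the proof is routine.
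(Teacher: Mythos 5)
Your proposal is correct and follows essentially the same route as the paper: both prove the easy inclusions by noting $\Closgs A = \Closgp A \cup \{\Aunit\}$, and both obtain the converse by structural induction, pulling a first letter out of the product, merge, and $\omega$-iteration cases via the identities $(a\Acdot x')\Amerge = a\Acdot(x'\Acdot a)\Amerge\Acdot x'$ and $c\Aomega = c\Acdot c\Aomega$ (the paper uses the specialization $c\Amerge = c\Acdot c\Amerge$ where you apply the merge axiom directly, an immaterial difference). Your explicit attention to the degenerate case $x'=\Aunit$ is exactly the bookkeeping the paper handles implicitly through the $\Closgs A$ versus $\Closgp A$ distinction.
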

\begin{proof}
	Note, by definition, that~$\Closgs A = \Closgp A \cup \{\unit\}$, 
	so $A \Closgs A = A \Closgp A \cup A \subseteq \Closgp A$.
	The converse inclusion $\Closgp A \subseteq A \Closgs A$ is obtained by induction. Let~$b\in\Closgp A$. If~$b\in A$,
	then $b\in A\Closgs A$ since~$\unit\in \Closgs A$.
	If~$c=cd$ with~$c,d\in \Closgp A$, then, by induction,
	$c=ac'$ for some~$a\in A$ and~$c'\in\Closgs A$, thus~$b=a(c'd)\in A\Closgs A$
	since~$a\in A$ and~$c'd\in\Closgs A$. Finally, if~$b=c\Amerge$,
	then, again by induction, $c=ac'$ for some~$a\in A$ and~$c'\in\Closgs A$, and
	thus~$b=c\Amerge=cc\Amerge = a(c'c\Amerge)\in A\Closgs A$.
	
	The equality 	$\Closgp A = \Closgs A A$ is symmetric.
	
	The identity $\Closgordp A = A \Closgord A$ is similar. The new case in the induction is if some~$b\in\Closgordp A$
	is of the form~$c^\omega$, then, by induction hypothesis, $c=ac'$ for some~$a\in A$ and~$c'\in\Closgord A$, and
	thus~$b=c^\omega=cc^\omega = a(c'c^\omega)\in A\Closgord A$.
\end{proof}

\begin{proposition}\AP\label{prop:fo-union}\label{prop:fo-concat}\labelwithproof{prop:fo-concat-union}
	If there are "\FO-approximants" over~$K$ and~$L$ respectively, then there exist effectively "\FO-approximants" over~$K\cup L$ and~$KL$.
\end{proposition}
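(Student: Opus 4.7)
The plan is to handle both cases by exploiting that each "\FO-approximant" has "\FO-definable languages" as its preimages, and that "\FO-definable languages" of "countable ordinal words" are closed under Boolean combinations and under concatenation at a cut point---which in turn follows by the standard relativization of "\FO-sentences" to \FO-definable prefixes and suffixes.

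For the union, observe that $K$ and $L$ are "\FO-definable languages" as the domains of the "\FO-definable maps" $\rho_K$ and $\rho_L$. Define $\rho\colon K\cup L\to M$ by $\rho(u):=\rho_K(u)$ if $u\in K$, and $\rho(u):=\rho_L(u)$ otherwise. Each preimage $\rho^{-1}(m)$ is then a Boolean combination of the \FO-definable sets $\rho_K^{-1}(m)$, $\rho_L^{-1}(m)$, $K$, and $L$, so $\rho$ is "\FO-definable@@map"; the inequality $\product(u)\leqslant\rho(u)$ holds case-by-case.

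For the concatenation, the key observation is that for each pair $(m_1,m_2)\in M\times M$, the language $Q_{m_1,m_2} := \rho_K^{-1}(m_1)\cdot\rho_L^{-1}(m_2)\subseteq M\ord$ is \FO-definable. Indeed, a word $u\in M\ord$ lies in $Q_{m_1,m_2}$ iff either (i) there is a position $x$ in $u$ (possibly the minimum position, to cover $u_1=\varepsilon$) such that the prefix on positions $<x$ lies in $\rho_K^{-1}(m_1)$ and the suffix on positions $\geqslant x$ lies in $\rho_L^{-1}(m_2)$---expressed by relativizing the defining "\FO-sentences" to the suborders $\{y:y<x\}$ and $\{y:y\geqslant x\}$; or (ii) $u\in\rho_K^{-1}(m_1)$ and $\varepsilon\in\rho_L^{-1}(m_2)$, where the second conjunct is a fixed Boolean constant, to handle the degenerate case $u_2=\varepsilon$.

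Fixing any total order on the finite set $M\times M$, I would then set $\rho(u):=m_1\Acdot m_2$ for the least pair $(m_1,m_2)$ such that $u\in Q_{m_1,m_2}$. Each preimage $\rho^{-1}(n)$ is then a Boolean combination of the $Q_{m_1,m_2}$'s, making $\rho$ "\FO-definable@@map"; and $\rho$ is total on $KL$ because every $u\in KL$ admits at least one valid decomposition. For the chosen pair, taking any witnessing decomposition $u=u_1 u_2$ with $u_1\in\rho_K^{-1}(m_1)$ and $u_2\in\rho_L^{-1}(m_2)$, monotonicity of $\Acdot$ together with the approximant properties of $\rho_K$ and $\rho_L$ yields $\product(u)=\product(u_1)\Acdot\product(u_2)\leqslant m_1\Acdot m_2=\rho(u)$. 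The only real care is in the relativization step, which is standard once one allows the cut point $x$ to be the minimum position of $u$ and treats the degenerate case $u_2=\varepsilon$ explicitly; there is no substantive obstacle beyond this bookkeeping, and effectiveness is immediate since all constructions are Boolean combinations of finitely many FO-sentences computed from those defining the preimages of $\rho_K$ and $\rho_L$.
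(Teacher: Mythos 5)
Your proof is correct and follows essentially the same route as the paper: the union case is identical, and the concatenation case rests on the same relativization argument showing that ``there is a cut whose prefix/suffix lie in given \FO-definable languages'' is \FO-expressible, with the empty-factor cases handled separately. The only (immaterial) difference is the disambiguation of multiple decompositions --- you take the least value pair $(m_1,m_2)$ in a fixed order on $M\times M$, whereas the paper takes the least cut position $x$ and outputs $\rho(u|_{<x})\Acdot\tau(u|_{\geqslant x})$; both are well-defined and yield an "\FO-approximant".
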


\subsection{Construction of "{\FO}-approximants" for words of "finite@finite word" and "$\omega$-length@$\omega$-word"}
\label{section:FO-approximant-finite-and-omega}

First, we show how to construct "\FO-approximants" for "finite words".
It serves at the same time as a building block for more complex cases, as a way to show the proof mechanisms in simpler cases, as well as to comment on differences with previous works.
\begin{lemma}\AP\label{lemma:finite-trichotomy}
	Let~$A\subseteq M$, then either
	\begin{itemize}
	\item $a\Acdot \Closgp{A}\subsetneq \Closgp A$, for some~$a\in A$,
	\item $\Closgp{A}\Acdot a\subsetneq \Closgp A$, for some~$a\in A$, or
	\item $\Closgp{A}$ has a maximum.
	\end{itemize}
\end{lemma}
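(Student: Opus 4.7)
The plan is to prove the contrapositive. Assume $a \Acdot P = P \Acdot a = P$ for every $a \in A$, where $P := \Closgp A$, and construct an explicit maximum of $P$. First, since $P$ is finite and each $\lambda_a \colon p \mapsto a \Acdot p$ is surjective onto $P$, every $\lambda_a$ is a bijection. The submonoid of $\mathrm{Sym}(P)$ generated by $\{\lambda_a : a \in A\}$ is a finite subsemigroup of a group, hence a subgroup; therefore for every $w \in \Closp A$ the element $\lambda_{w\idem}$ is the identity of this subgroup, so $w\idem$ acts as a two-sided identity on $P$ (the symmetric argument via $\rho_a$ handles the right action, and the two identities are seen to agree). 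A routine uniqueness argument---any two idempotents $e_1, e_2 \in \Closp A$ acting as identity on $P$ satisfy $e_1 = e_1 \Acdot e_2 = e_2$---then shows all such idempotents coincide; call this common element $\hat e$, and note that $\hat e = a\idem$ for every $a \in A$.

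Next, I would build a candidate maximum from the merge axioms. For each $a \in A$, $a\Amerge \geq a\idem = \hat e$; and setting $v := \Aunit$ in $(u \Acdot v)\Amerge = u \Acdot (v \Acdot u)\Amerge \Acdot v$ yields $a\Amerge = a \Acdot a\Amerge$, whence $a = a \Acdot \hat e \leq a \Acdot a\Amerge = a\Amerge$ by monotonicity. Let $M := \prod_{a \in A} a\Amerge$ in any fixed order: using $a_j\Amerge \geq \hat e$ together with monotonicity of $\Acdot$, one replaces all factors but one by $\hat e$ to obtain $M \geq a_i\Amerge \geq a_i$ for each $i$, and $M \geq \hat e$. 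Set $N := M\Amerge$; it is idempotent, and since $M \geq \hat e$ gives $M\idem \geq \hat e$, we get $N \geq M\idem \geq \hat e$, and therefore $M = M \Acdot \hat e \leq M \Acdot N = N$. Hence $N \geq M \geq a$ for every $a \in A$.

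I would then conclude by a closure argument. Define $D := \{p \in P : p \leq N\}$. Then $A \subseteq D$ by construction; $D$ is closed under $\Acdot$ since for $p, q \in D$ we have $p \Acdot q \leq N \Acdot q \leq N \Acdot N = N$ by monotonicity and idempotence of $N$; and $D$ is closed under $\Amerge$ since $p \leq N$ gives $p\Amerge \leq N\Amerge = N$ by monotonicity of $\Amerge$ together with the axiom $(N)\Amerge = N$. Hence $\Closgp A \subseteq D$, so every $p \in P$ satisfies $p \leq N$; since $N \in P$, it is the desired maximum.

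The hard part will be the first step: identifying the idempotent $\hat e$ and establishing that it acts as a two-sided identity on $P$. This rests on the classical fact that a finite subsemigroup of a permutation group is a subgroup, together with the standard uniqueness argument for identity idempotents in a monoid. Once $\hat e$ is in hand, the rest is a fairly direct exploitation of monotonicity and the merge identities, with the closure argument on $D$ elegantly avoiding a direct structural induction on the elements of $P$.
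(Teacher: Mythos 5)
Your proof is correct and follows essentially the same route as the paper's: negating the first two items makes all translations by letters bijective on $\Closgp A$, which yields an identity idempotent $I=a\idem$ below every $x\Amerge$, and the maximum is then exhibited as $(a_1\Amerge\Acdot\cdots\Acdot a_n\Amerge)\Amerge$ followed by the same closure argument. Your write-up is in fact slightly more explicit than the paper's on two small points it glosses over (that $a\leqslant a\Amerge$ via $a\Amerge=a\Acdot a\Amerge$, and that the product of the $a_i\Amerge$ lies below its own merge).
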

\begin{proof}
	Assume the two first items do not hold.
	Because of the non-first-one, the map $x\mapsto a\Acdot x$ is surjective on~$\Closgp{A}$, for all~$a\in A$.
	Since~$\Closgp{A}$ is finite, this means that it is bijective on $\Closgp{A}$. Hence it is also bijective on~$\Closp{A}$.
	The negation of the second item has a symmetric consequence. Together we get that $\Closp{A}$ is a group.
	Let~$I$ be its neutral element. Note first that for all~$x\in\Closp{A}$, $I=x^k$ for some~$k$, and hence, $I\leqslant x\Amerge$.
	Set now~$a_1,\dots,a_n$ to be the elements in~$A$, and define:
	\begin{align*}
		M&= (a_1\Amerge\Acdot a_2\Amerge\cdots a_n\Amerge)\Amerge\ .
	\end{align*}
	By the above remark~$a_i= I^{i-1}\Acdot a_i\Acdot I^{n-i}\leqslant a_1\Amerge\Acdot a_2\Amerge\cdots a_n\Amerge\leqslant M$ for all~$i$.
	Since furthermore for all $x,y\leqslant M$,  $x\Acdot y\leqslant M$ and $x\Amerge\leqslant M$, it follows that~$z\leqslant M$ for all~$z\in\Closgp{A}$.
\end{proof}
A similar lemma is used in \cite{zeitoun2016separating}, but concludes with the existence of a pseudo-group as the third item.
\begin{lemma}\AP\label{lemma:fo-approximant-astar}
	\AP For all~$a\in M$ there exists an "\FO-approximant"
	from~$a^+$ to~$\Closgp{\{a\}}$.
\end{lemma}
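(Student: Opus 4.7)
The plan is to exploit the eventual periodicity of the powers $(a^n)_{n\geqslant 1}$ in the finite monoid $M$ to write down the approximant by hand, without invoking \Cref{lemma:finite-trichotomy} inductively. By pigeonhole, one can choose constants $n_0,p\geqslant 1$ (both bounded by $|M|$) such that $a^{n+p}=a^n$ for every $n\geqslant n_0$. A short divisibility computation---using that $p$ divides~$N!$ for all sufficiently large $N$---then yields the identification $a^n=a\idem[+n]$ for every $n\geqslant n_0$, and in particular the key inequality $a^n\leqslant a\Amerge$ via the axiom $a\idem[+k]\leqslant a\Amerge$ of an "ordinal monoid with merge".

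Armed with this, the candidate approximant $\rho\colon a^+\to\Closgp{\{a\}}$ is defined by
\[
	\rho(a^n)\;:=\;\begin{cases} a^n & \text{if } 1\leqslant n\leqslant n_0,\\ a\Amerge & \text{if } n>n_0.\end{cases}
\]
Its image lies in $\Closgp{\{a\}}$ since each $a^n$ is a finite $\Acdot$-product of $a$'s and $a\Amerge$ is obtained by applying $\Amerge$ to $a$. The required inequality $\product(u)\leqslant\rho(u)$ is then an equality when $|u|\leqslant n_0$, and follows from the estimate above when $|u|>n_0$, by monotonicity and the axioms of the merge operator.

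Finally, to verify that $\rho$ is \FO-definable as a map, observe that each fiber $\rho^{-1}(v)$ is a finite union of fixed-length subsets $\{u\in a^+:|u|=k\}$ for $k\leqslant n_0$ (those $k$ for which $a^k=v$), together with $\{u\in a^+:|u|>n_0\}$ whenever $v=a\Amerge$. Each of these is cut out by a standard length formula, and $a^+$ is itself \FO-definable inside $M\ord$ by ``the domain is finite and non-empty, and every letter is~$a$''---finiteness being expressed with the help of the formula $\finite$ introduced earlier. There is no real obstacle; the only mildly delicate point is the factorial-divisibility computation that identifies $a^n$ with $a\idem[+n]$ on the periodic tail of the sequence.
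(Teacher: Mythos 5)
Your proposal is correct and is essentially the paper's own proof: the paper likewise picks a threshold $k$ with $a\idem=a^k$ and sets $\rho(a^n)=a^n$ for $n<k$ and $\rho(a^n)=a\Amerge$ otherwise, leaving the verification (that $a^n=a\idem[+j]$ on the periodic tail, hence $a^n\leqslant a\Amerge$, and that the fibers are \FO-definable) implicit. Your write-up just makes those routine checks explicit.
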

\begin{proof}[Construction] Let~$k$ be such that~$a\idem=a^k$.
	Define
	\begin{align*}
		\rho(\overbrace{a\cdots a}^{\text{length $n$}})&=
			\begin{cases}
			a^n&\text{if $n<k$,}\\
			a\Amerge&\text{otherwise.}
			\end{cases}
			\qedhere
	\end{align*}
\end{proof}
We can now use this for proving the finite word case.
\begin{lemma}\AP\label{fo-approximant-finite}
	\label{lemma:fo-approximant-finite}
	\AP For all~$A\subseteq M$ there exists an "\FO-approximant" from~$A^+$ to~$\Closgp{A}$.
\end{lemma}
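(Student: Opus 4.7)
I will prove this by induction on the pair $(|\Closgp A|, |A|)$ ordered lexicographically. The base cases $|A| \leq 1$ are immediate: $A = \varnothing$ gives $A^+ = \varnothing$, and $|A| = 1$ is \Cref{lemma:fo-approximant-astar}. For the inductive step I apply \Cref{lemma:finite-trichotomy}. If $\Closgp A$ has a maximum $M_0$, the constant map $w \mapsto M_0$ is a trivial \FO-approximant. Otherwise, by symmetry between the first two cases of the trichotomy, I may assume Case~1: some $a \in A$ satisfies $B := a \Acdot \Closgp A \subsetneq \Closgp A$.

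Under Case~1, I decompose $A^+$ by the position of the first $a$:
\[ A^+ \;=\; (A \setminus \{a\})^+ \;\sqcup\; (A \setminus \{a\})^* \cdot (\{a\} \cdot (A \setminus \{a\})^*)^+\,. \]
Via \Cref{prop:fo-concat-union}, building \FO-approximants for these two disjuncts reduces to three sub-languages. The first disjunct and the factor $(A \setminus \{a\})^*$ of the second (extended by mapping the empty word to $\unit$, yielding $\rho'$) come directly from the induction hypothesis applied to $A \setminus \{a\}$: this is lex-smaller since $|A \setminus \{a\}| < |A|$ while $|\Closgp{A \setminus \{a\}}| \leq |\Closgp A|$. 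It remains to handle the \emph{block language} $L_{\mathrm{block}} := (\{a\} \cdot (A \setminus \{a\})^*)^+$.

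For $L_{\mathrm{block}}$ I apply \Cref{lemma:composition-fo-condensation}. The \FO-condensation formula
\[ \varphi(x,y) \;:=\; \neg \exists z.\,\bigl(a(z) \wedge \bigl((x < z \leq y)\vee(y < z \leq x)\bigr)\bigr) \]
partitions a word of $L_{\mathrm{block}}$ into convex blocks, each a word in $\{a\}\cdot(A \setminus \{a\})^*$. The local block map $\rho_{\mathrm{block}}(a u) := a \Acdot \rho'(u)$ is \FO-definable and lands in $D := a \Acdot \Closgs{A \setminus \{a\}}$. As outer map I use an \FO-approximant $\rho_D \colon D^+ \to \Closgp D$ furnished by the induction hypothesis applied to $D$. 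This application is legitimate because $B$ is closed under $\Acdot$ (as $B \Acdot \Closgp A \subseteq B$) and under $-\Amerge$ (thanks to the axiom $(a \Acdot x)\Amerge = a \Acdot (x \Acdot a)\Amerge \Acdot x$), which forces $\Closgp D \subseteq B \subsetneq \Closgp A$ and hence $|\Closgp D| < |\Closgp A|$, a strict lex decrease.

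The main point to verify is that the \FO-definable map produced by \Cref{lemma:composition-fo-condensation} is in fact an \emph{approximant}: this will rely on monotonicity of the product in the ordered ordinal monoid to propagate the block-level bounds $\product(a u) \leq \rho_{\mathrm{block}}(a u)$ through $\rho_D$. Finally, combining the pieces via \Cref{prop:fo-concat-union} keeps the overall image inside $\Closgp A$ since $\Closgs{A \setminus \{a\}} \Acdot \Closgp D \subseteq \Closgp A$ (using $\unit \Acdot \Closgp A = \Closgp A$ together with closure of $\Closgp A$ under $\Acdot$). Case~2 of \Cref{lemma:finite-trichotomy} is handled dually, by decomposing on the last occurrence of $a$ instead of the first.
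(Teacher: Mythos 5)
Your overall strategy is the paper's: a double induction guided by \Cref{lemma:finite-trichotomy}, splitting off a letter $a$ with $a\Acdot\Closgp{A}\subsetneq\Closgp{A}$, condensing into blocks, approximating each block locally, and then applying an inductively obtained approximant to the resulting word of block-values. The constant-map case, the symmetry between the first two cases of the trichotomy, and the monotonicity argument showing the composed map is still an approximant all match the paper. However, there is a gap in the termination argument for your recursive call on $D$.

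Your blocks $\{a\}\cdot(A\setminus\{a\})^*$ may consist of a bare $a$ (this happens whenever the word contains two consecutive $a$'s, or ends in $a$), so $\unit\in\Closgs{A\setminus\{a\}}$ forces $a=a\Acdot\unit\in D$. Consequently $\Closgp{D}\subseteq a\Acdot\Closgs{A}=B\cup\{a\}$ rather than $B$ (here $B=a\Acdot\Closgp{A}$ in your notation; beware that the paper uses $B$ for $A\setminus\{a\}$). Your claimed inclusion $\Closgp{D}\subseteq B$ would require $a\in a\Acdot\Closgp{A}$, i.e., that some element of $\Closgp{A}$ acts as a right identity on $a$, which nothing guarantees. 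In the corner case where $a\notin B$ and $\Closgp{A}=B\cup\{a\}$, one gets $|\Closgp{D}|=|\Closgp{A}|$, and the tiebreaker $|D|<|A|$ has no reason to hold, so the lexicographic induction is not well-founded. The paper avoids exactly this pitfall by decomposing $A^+=A(A\setminus\{a\})^*\bigl(a^+(A\setminus\{a\})^+\bigr)^*a^*$ and condensing only the middle part, whose blocks $a^+(A\setminus\{a\})^+$ each contain at least one non-$a$ letter: the block approximant then lands in $a\Acdot\Closgs{A}\Acdot\Closgp{A}\subseteq a\Acdot\Closgp{A}=B$, which, as you correctly observe, is closed under $\Acdot$ and $-\Amerge$, giving the strict decrease $|\Closgp{B}|\leq|B|<|\Closgp{A}|$; the stray leading segment and trailing run of $a$'s are absorbed afterwards via \Cref{prop:fo-concat-union}, \Cref{lemma:fo-approximant-astar} and the induction hypothesis on $A\setminus\{a\}$. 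With that repair---ensuring every condensation block contains a letter of $A\setminus\{a\}$ and factoring out the residual runs of $a$---your argument goes through.
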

\begin{proof}
	We use a double induction on~$|\Closgp A|$ and~$|A|$.
	The induction is guided by \Cref{lemma:finite-trichotomy}.
	The base case is~$A=\varnothing$, and the nowhere defined "\FO-approximant" proves it.
	
	\emph{First case:}~$a\Acdot \Closgp{A}\subsetneq \Closgp A$ for some~$a\in A$.
	This part of the proof is similar to \cite[Lemma 6.7]{zeitoun2016separating}.
	Let~$B ::= A\smallsetminus \{a\}$.

	We first construct an "\FO-approximant" from~$a^+B^+$ to~$a\Acdot \Closgp A$.
	Indeed, we know by  \Cref{lemma:fo-approximant-astar} that there is an "\FO-approximant" from~$a^+$ to  $\Closgp{\{a\}} \subseteq a\Acdot \Closgs A$.
	We also know by induction\footnote{Indeed, $|B| < |A|$.} that there is an "\FO-approximant" from~$B^+$ to~$\Closgp B\subseteq\Closgp A$.
	Thus by \Cref{prop:fo-concat}, there exists effectively an "\FO-approximant"~$\tau$ from~$a^+B^+$ to $a\Acdot \Closgs A\Acdot \Closgp A\subseteq a\Acdot \Closgp{A}$.

	We now provide an "\FO-approximant" for $(a^+B^+)^+$
	(which is "\FO-definable@@lang"), and for this,
	define the "condensation \FO-formula" $\varphi(x,y)$ that
	expresses that ``two positions~$x$ and~$y$ are equivalent
	if the subword on the interval~$[x,y]$ belongs to $a^*B^*$'' (this can be expressed in first-order logic).
	Over a word~$u\in (a^+B^+)^+$, each of the "condensation" classes belong to~$a^+B^+$ and its image under~$\tau$
	belongs to~$a\Acdot \Closgp{A}$. Furthermore, still by induction hypothesis\footnote{This time, we can use the induction
	hypothesis because $|\Closgp{(a\Acdot \Closgp{A})^+}| <
	|\Closgp{A}|$. Indeed, by \Cref{prop:saturation-take-letter-out},
	$\Closgp{(a\Acdot \Closgp{A})^+} \subseteq
	(a\Acdot \Closgp{A})^+ \Closgs{(a\Acdot \Closgp{A})^+}
	\subseteq a\Acdot \Closgp{A} \subsetneq \Closgp{A}$.}, there is an "\FO-approximant"
	from~$(a\Acdot \Closgp{A})^+$ to~$\Closgp{A}$. By \Cref{lemma:composition-fo-condensation},
	we thus obtain an "\FO-definable map" from~$ (a^+B^+)^+$ to~$\Closgp{A}$. It is an "\FO-approximant" by construction.

	Using the above case and \Cref{prop:fo-concat-union}, it can be easily extended to an "\FO-approximant" from~$A^+= AB^*(a^+B^+)^*a^*$ to~$\Closgp{A}$.

	\emph{Second case:}~$\Closgp{A}\Acdot a \subsetneq \Closgp A$.
	This case is symmetric to the first case.

	\emph{Third case:}~$\Closgp A$ has a maximum~$M$.
	Then the constant map that sends every "word"~$u\in A^*$ to~$M$
	is an "\FO-approximant" over~$A^*$.
\end{proof}

\AP Following similar ideas, we can treat the case of "$\omega$-words".
We define here $\intro*\Closgo{A}$ as the elements of the form~$\{a\Acdot b\Aomega\mid a,b\in\Closgp A\}$---or, equivalently,
$\Closgo{A} = (\Closgp A)\Pomega$.

\begin{lemma}\AP\labelwithproof{lemma:fo-approximant-omega}
	Let $M$ be an "ordinal monoid with merge".
	For all $A\subseteq M$, there exists an "\FO-approximant" from~$A^\omega$ to $\Closgo{A}$. 
\end{lemma}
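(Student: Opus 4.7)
The plan is to mimic the proof of \Cref{lemma:fo-approximant-finite}: we induct on $(|\Closgp A|, |A|)$ lexicographically, with the case split guided by the trichotomy of \Cref{lemma:finite-trichotomy} applied to $A$. The base case $A=\varnothing$ is trivial since $A^\omega = \varnothing$.

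In the easy case where $\Closgp A$ has a maximum~$M$, monotonicity gives $\product(w)\leqslant \product((M)_{\iota<\omega})= M\Aomega = M\Acdot M\Aomega$ for every $w\in A^\omega$, where the last equality follows from generalised associativity. Since $M\in\Closgp A$, the element $M\Acdot M\Aomega$ lies in $\Closgo A$, and the constant function $w\mapsto M\Acdot M\Aomega$ is the desired "\FO-approximant" (its unique preimage $A^\omega$ being "\FO-definable@@lang").

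The main work is in the other two cases; by symmetry I treat only the first, where $a\Acdot \Closgp A\subsetneq \Closgp A$ for some $a\in A$, and set $B := A\smallsetminus\{a\}$. I split $A^\omega$ into two "\FO-definable languages" and handle each separately, gluing the two using \Cref{prop:fo-concat-union}. First, the set of $\omega$-words containing only finitely many $a$'s is exactly $A^*B^\omega$, which is "\FO-definable@@lang" via the sentence ``there exists a position past which no $a$ occurs''; combining the "\FO-approximant" on $A^*$ provided by \Cref{lemma:fo-approximant-finite} with an "\FO-approximant" on $B^\omega$ provided by the induction hypothesis (since $|B|<|A|$) and applying \Cref{prop:fo-concat-union} yields an "\FO-approximant" into $\Closgs A\Acdot \Closgo B\subseteq \Closgo A$. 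Second, for $\omega$-words with cofinally many $a$'s, I use a "condensation \FO-formula" $\varphi(x,y)$ declaring $x$ and $y$ equivalent iff the open interval between them contains no $a$; the resulting classes form an $\omega$-sequence of finite blocks in $B^*\cup aB^*$. Applying locally the "\FO-approximant" from \Cref{lemma:fo-approximant-finite} maps each block into $a\Acdot \Closgs A$ (and the possible initial block into $\Closgs A$), and by \Cref{lemma:composition-fo-condensation} this composes with an "\FO-approximant" from $(a\Acdot \Closgs A)^\omega$ into $\Closgo{a\Acdot \Closgs A}\subseteq \Closgo A$ provided by the induction hypothesis, valid because, as in the finite case and via \Cref{prop:saturation-take-letter-out}, we have $\Closgp{a\Acdot\Closgs A}\subseteq a\Acdot \Closgp A \subsetneq \Closgp A$.

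The main obstacle is designing the condensation in the cofinal subcase so that (i) the induced blocks are finite (so that \Cref{lemma:fo-approximant-finite} applies), (ii) the condensation is definable in "first-order logic", and (iii) the local images land in a strict subsemigroup of $\Closgp A$, so that the induction makes progress. All three are guaranteed precisely by the structural dichotomy of \Cref{lemma:finite-trichotomy}, which is what makes the same combinatorial scheme used for "finite words" carry over to "$\omega$-words"; the "\(\omega\)-iteration" itself is absorbed once and for all in the maximum case, and never needs to be ``unfolded'' by an FO formula.
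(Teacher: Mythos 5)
Your overall strategy is the right one and matches the paper's: a double induction on $(|\Closgp A|,|A|)$ guided by \Cref{lemma:finite-trichotomy}, a constant map in the maximum case, and in the first case a condensation into finite blocks composed via \Cref{lemma:composition-fo-condensation}. The maximum case and the $A^*B^\omega$ part are fine. The gap is in your cofinal subcase: condensing into blocks of the form $aB^*$ allows blocks consisting of a lone $a$ (whenever two $a$'s are adjacent), so the condensed alphabet is $a\Acdot\Closgs A=\{a\}\cup a\Acdot\Closgp A$, which still contains $a$. Your claimed inclusion $\Closgp{a\Acdot\Closgs A}\subseteq a\Acdot\Closgp A$ is therefore false in general: $a$ itself belongs to the left-hand side, but it lies in $a\Acdot\Closgp A$ only if $a=a\Acdot x$ for some $x\in\Closgp A$, which the case hypothesis $a\Acdot\Closgp A\subsetneq\Closgp A$ does not give you. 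As a result neither $|\Closgp{\cdot}|$ nor the alphabet size need decrease. Concretely, for $A=\{a\}$ with $a\neq a\Acdot a$ the first case of the trichotomy can hold, the word $a\Womega$ has cofinally many $a$'s, every block is a lone $a$, and your recursion calls itself on an instance whose measure has not decreased --- the induction does not terminate.

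The paper's proof avoids this in two ways: it records separately that a one-letter alphabet is trivial (there is only one word in $A\Womega$), and it uses the decomposition $A^\omega=A^*\bigl((a^+B^+)^\omega\cup B^\omega\cup a^\omega\bigr)$, so that in the condensation step every block lies in $a^+B^+$ and hence contains at least one letter of $B$; the local image then lands in $a\Acdot\Closgp A\subsetneq\Closgp A$ and the induction genuinely progresses. Your argument is repaired by adopting this three-way split, routing the eventually-all-$a$ words through the one-letter case rather than through the condensation. Two smaller points: the relation ``no $a$ in the open interval between $x$ and $y$'' is not transitive (take the middle position to carry an $a$), so you need a half-open interval or, as in the paper, the condition that the factor on $[x,y]$ belongs to $a^*B^*$; and the second case of the trichotomy is not literally symmetric for words of length $\omega$, since they cannot be reversed, although the analogous argument with blocks ending at an $a$ does go through.
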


\subsection{Construction of "\FO-approximants" for "countable ordinal words"}
\label{section:FO-approximant-ordinal}

As for the finite case, the proof revolves around a carefully designed case distinction.
This one is more complex to establish, and makes use of "Green's relations" and a precise understanding of the properties 
of "ordinal monoids with merge".
\begin{lemma}[Trichotomy principle]\AP\labelwithproof{lemma:trichotomy-ordinal-words}
	Let~$M$ be a finite "ordinal monoid with merge" 
	and~$A\subseteq M$, then either
	\begin{itemize}
	\item $a\Acdot \Closgordp{A}\subsetneq \Closgordp A$, for some~$a\in A$,
	\item $\Closgordp{\Closgo{A}}\subsetneq\Closgordp A$, or
	\item $x\Acdot y=y$ and $x\Aomega = y\Aomega$, for all $x,y\in\Closgordp A$.
	\end{itemize}
\end{lemma}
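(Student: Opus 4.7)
The plan is to argue by contrapositive: assume neither of the first two alternatives holds and derive the third.

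From the failure of the first case, left-multiplication by each $a \in A$ is a surjection, and hence a bijection, of the finite set $\Closgordp A$. By a standard finite-semigroup argument, $a\idem$ then acts as a left identity on $\Closgordp A$: taking $n!$ divisible by the order of the permutation $y \mapsto a \Acdot y$, one gets $a^{n!} = a\idem$ and $a\idem \Acdot y = y$ for all $a \in A$ and $y \in \Closgordp A$. In Green's-relation terms, every $y \in \Closgordp A$ satisfies $y \Rleq a$, and all elements of $A$ — and, via the merge identities $(a \Acdot b)\Amerge = a \Acdot (b \Acdot a)\Amerge \Acdot b$ together with $(a\idem)\Amerge = a\idem$, of $\Closgp A$ — lie in a single $\Req$-class of $\Closgordp A$.

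From the failure of the second case, $\Closgordp{\Closgo A} = \Closgordp A$, so every element of $\Closgordp A$ is built from terms $u \Acdot v\Aomega$ (with $u, v \in \Closgp A$) via $\Acdot$, $\Amerge$ and $\Aomega$. Combining with the Green's-relation structure of the previous step, I would show by induction on this construction that every $y \in \Closgordp A$ itself admits an explicit ``$\omega$-absorbing'' factorisation $y = u \Acdot v\Aomega$; the inductive cases for $\Amerge$ and $\Aomega$ are treated using the merge identities and the ordinal identity $w \Acdot w\Aomega = w\Aomega$ together with the left-identity property of $a\idem$ to rewrite the head of the product. Given such a factorisation, $x \Acdot y = (x \Acdot u) \Acdot v\Aomega$, and the absorbing behaviour of $v\Aomega$ together with the $\Req$-equivalence of $x \Acdot u$ and $u$ yields $x \Acdot y = y$. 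The identity $x\Aomega = y\Aomega$ then follows from $z \Acdot w = w$ applied to $w \in \{x\Aomega, y\Aomega\}$, combined with $w \Acdot w\Aomega = w\Aomega$: all $\omega$-iterations in $\Closgordp A$ collapse to a common value.

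The main obstacle is the second step: turning the purely syntactic equality $\Closgordp{\Closgo A} = \Closgordp A$ into an explicit factorisation of each $y \in \Closgordp A$ as $u \Acdot v\Aomega$. This is precisely where the interaction between the merge axioms and Green's relations — highlighted as characteristic of the proof — becomes indispensable, and it is also where the ordinal case departs from the finite-word trichotomy of \Cref{lemma:finite-trichotomy}.
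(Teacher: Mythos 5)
Your opening move coincides with the paper's: argue by contrapositive, and use the failure of the first alternative (via \Cref{prop:saturation-take-letter-out}) to see that each left translation $x\mapsto a\Acdot x$ permutes the finite set $\Closgordp{A}$, so that $A$ sits in the unique maximal $\Jeq$-class of $\Closgordp{A}$, which is an $\Req$-class. But the second half of your argument has a genuine gap, and it is not where you locate it. The factorisation $y=u\Acdot v\Aomega$ for every $y\in\Closgordp{A}=\Closgordp{\Closgo{A}}$ can in fact be pushed through the closure by induction, using $(a\Acdot b)\Aomega=a\Acdot(b\Acdot a)\Aomega$ and $(a\Acdot b)\Amerge=a\Acdot(b\Acdot a)\Amerge\Acdot b$. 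The real problem is the next step: from $x\Acdot y=(x\Acdot u)\Acdot v\Aomega$ and $x\Acdot u\Req u$, you get $x\Acdot u=u\Acdot c$ for some $c$, hence $x\Acdot y=u\Acdot c\Acdot v\Aomega$; to conclude $x\Acdot y=y$ you need $c\Acdot v\Aomega=v\Aomega$ for an \emph{arbitrary} $c$, whereas the only absorption available is $v^n\Acdot v\Aomega=v\Aomega$ for powers of $v$. Asserting that $v\Aomega$ absorbs everything on its left in the relevant class is essentially the conclusion $x\Acdot y=y$ itself. The same issue undermines your earlier claim that merges of elements of $\Closgp{A}$ stay in the top $\Req$-class: $b\Amerge=b\Acdot b\Amerge$ gives $b\Amerge\Rleq b$, but the converse inequality already needs the group-triviality you have not established.

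The missing ingredient is precisely the interaction between $\Aomega$ and Green's relations that the paper isolates: in a finite ordinal monoid, $x\Req x\Aomega$ forces $x$ to be idempotent (\Cref{prop:la-proposition-qui-fait-le-cafe}), whence $\omega$-stable $\Jeq$-classes are $\Heq$-trivial (\Cref{prop:omega-stable-H-trivial}); this is what kills the putative groups (note that a nontrivial group can never carry an ordinal monoid structure on its top class, since $g\Acdot g\Aomega=g\Aomega$ would fail). The paper's proof then runs: the negation of the second item makes the maximal class $J$ $\omega$-stable, hence $\Heq$-trivial; $J$ is closed under $\Acdot$, $\Aomega$ and $\Amerge$ (regularity, $\omega$-stability and group-triviality respectively), so $\Closgordp{A}=J$ is a single $\Req$-class which is $\Leq$-trivial; then $x\Acdot y\Lleq y$ together with $x\Acdot y\Jeq y$ and stability give $x\Acdot y\Leq y$, hence $x\Acdot y=y$, and finally $y\Aomega=(x\Acdot y)\Aomega=x\Acdot(y\Acdot x)\Aomega=x\Acdot x\Aomega=x\Aomega$. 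Without some form of this group-triviality argument, the third alternative does not follow from your two reductions.
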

The above lemma is key in the proof of the  existence of an "\FO-approximant".
\begin{lemma}\AP\labelwithproof{lemma:fo-approximant-aord}
	For all $a \in \ordmonoid$, there exists an "\FO-approximant"
	over $a\ord$.
\end{lemma}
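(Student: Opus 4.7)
The plan is to prove \Cref{lemma:fo-approximant-aord} by induction on the cardinality of $\Closgordp{\{a\}}$, following the three branches of the trichotomy in \Cref{lemma:trichotomy-ordinal-words} applied to the singleton set $A = \{a\}$. The "\FO-approximants" already built for "finite words" (\Cref{lemma:fo-approximant-finite}) and for "$\omega$-words" (\Cref{lemma:fo-approximant-omega}) will serve as the main building blocks, combined by \Cref{prop:fo-concat-union} and glued through condensations via \Cref{lemma:composition-fo-condensation}.

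First I would dispatch the third, degenerate branch of the trichotomy, where $x \Acdot y = y$ and $x\Aomega = y\Aomega$ for all $x, y \in \Closgordp{\{a\}}$. In this situation $\product(a^\alpha)$ ranges over only a handful of values, determined by coarse FO-visible features of $\alpha$, namely whether $\alpha$ is empty, finite, a "successor ordinal", or a "limit ordinal"; a case-by-case definition of $\rho$ using the predicates $\isSuccessor$, $\isLimit$, and $\finite$ then suffices.

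In the first branch, $a \Acdot \Closgordp{\{a\}} \subsetneq \Closgordp{\{a\}}$, one can verify---using the axioms of "ordinal monoids with merge", in particular $(a \Acdot b)\Amerge = a \Acdot (b \Acdot a)\Amerge \Acdot b$, together with $a\Aomega = a \Acdot a\Aomega$---that $a$ itself is the unique element of $\Closgordp{\{a\}}$ potentially lying outside $a \Acdot \Closgordp{\{a\}}$, and that it is realised only by the one-letter input. I would therefore hard-code $\rho$ on the empty word and on $a$, and treat the remaining inputs by peeling off an initial $a$: the remainder has product lying in $a \Acdot \Closgordp{\{a\}}$, a strictly smaller sub-saturation over which the induction hypothesis provides an approximant, and \Cref{prop:fo-concat-union} then combines the two pieces.

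The second branch, $\Closgordp{\Closgo{\{a\}}} \subsetneq \Closgordp{\{a\}}$, is the heart of the argument and the main obstacle. The plan is to apply the "finite condensation" formula $\finite(x,y)$ to $a^\alpha$: each condensation class is then either an $\omega$-word consisting only of $a$'s or a single finite tail. Locally, \Cref{lemma:fo-approximant-omega} approximates every such $\omega$-block by a common value $x \in \Closgo{\{a\}}$, and \Cref{lemma:fo-approximant-finite} approximates the finite tail by an element of $\Closgp{\{a\}}$. After condensation the outer word is essentially of the form $x^\beta$ for some countable ordinal $\beta$, possibly followed by a single trailing letter; this is again a single-letter ordinal word, now over the new generator $x$, to which the induction hypothesis applies because $\Closgordp{\{x\}} \subseteq \Closgordp{\Closgo{\{a\}}}$ is strictly contained in $\Closgordp{\{a\}}$ by the assumption of this branch. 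Assembling the pieces through \Cref{lemma:composition-fo-condensation} and \Cref{prop:fo-concat-union} yields the desired global "\FO-approximant". The delicate point is to verify that every $\omega$-block is approximated to the same value $x$, so that the outer word is genuinely a single-letter ordinal word over $x$ and the induction terminates properly.
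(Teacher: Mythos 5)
Your third branch is fine, and your second branch is essentially sound: for the singleton alphabet there is only one $\omega$-block, namely $a^\omega$, so it is assigned a single value $x\in\Closgo{\{a\}}$, the condensed word is again a one-letter ordinal word (up to a finite tail), and the recursion is justified by $\Closgordp{\{x\}}\subseteq\Closgordp{\Closgo{\{a\}}}\subsetneq\Closgordp{\{a\}}$. The gap is in your first branch. After you peel off the initial letter, the remainder is still a word over the \emph{same} one-letter alphabet, so what you must approximate is again an arbitrary element of $a\ord$: no new generator with a strictly smaller saturation has appeared, and your induction on $|\Closgordp{\{a\}}|$ makes no progress. Knowing that the product of the remainder lands in the strictly smaller set $a\Acdot\Closgordp{\{a\}}$ does not by itself provide an approximant --- the induction hypothesis is indexed by generators, not by where products happen to land. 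This is exactly the point where the general argument (\Cref{lemma:fo-appsoximant-ord}, first case) can recurse, because condensing maximal $a^+B^+$-blocks produces a word over the \emph{new} alphabet $a\Acdot\Closgordp{A}$; for a singleton alphabet there is no second letter to condense against, and the obvious substitute --- grouping consecutive $a$'s in pairs so as to recurse on the generator $a\Acdot a$ --- is not FO-definable, since it amounts to computing parity. Concretely, take the aperiodic monoid $\{1,a,a^2,\dots,a^n\}$ with $a^n=a^{n+1}$ and $a\notin a\Acdot\Closgordp{\{a\}}$: the first branch applies, the approximant must separate $a^k$ for $k<n$ from longer words, and peeling one letter at a time never terminates.

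For comparison, the paper does not prove this lemma by induction on the saturation at all; this lemma is the \emph{base case} fed into the induction of \Cref{lemma:fo-appsoximant-ord}. It first shows that iterated $\omega$-powers stabilise after some level $\ell$ (\Cref{lemma:omegas-stabilise}), writes $\kappa$ in Cantor normal form $\omega^\ell\timesord\kappa_\ell\plusord\omega^{\ell-1}\timesord k_{\ell-1}\plusord\cdots\plusord k_0$, and defines $\rho(a^\kappa)$ as the explicit product of one value per term, using $(a^{\omega^m})\Amerge$ to absorb coefficients $k_m$ beyond the idempotent threshold; FO-definability then reduces to the FO-definability of the threshold languages $\{a^{\kappa}\mid\kappa\geqord\omega^m\timesord p\}$ established in \Cref{lemma:fo-def-languages}. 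To repair your proof you would need to replace the first branch by such a direct construction (at which point the trichotomy is no longer doing any work for this lemma).
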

\noindent The proof follows a similar structure as the one for \Cref{fo-approximant-finite} for the finite case.
This time, \Cref{lemma:trichotomy-ordinal-words} is the key argument that makes the induction progress,
playing the same role as  \Cref{lemma:finite-trichotomy} in the finite case. Note, however, that the second items in \Cref{lemma:finite-trichotomy,lemma:trichotomy-ordinal-words} are very different in structure. And indeed, this entails a different argument for constructing the "\FO-approximant". It is based on performing in one step the condensation of all the maximal factors of order-type $\omega$.

\begin{example}[Continuing \Cref{example:no}]\AP
	\label{example:approximant}
	An "\FO-approximant"
	$\rho\colon a\ord \to \Closgord{\{\lettermap(a)\}}$ 
	of $\product$ over $a\ord$
	in the "ordinal monoid" defined in \Cref{example:saturation}
	can be defined for all~$u\in\{a\}\ord$ as:
	\[
		\rho(u) := \begin{cases}
			\{1\} & \text{if $\dom(u)$ is empty,} \\
			\{a,aa\} & \text{if $\dom(u)$ is finite and non-empty,} \\
			\{a^\omega\} & \text{if $\dom(u)$ is a non-zero
				limit ordinal,} \\
			\{a^\omega a, a^\omega aa\} & \text{if $\dom(u)$
				is an infinite successor ordinal}.
		\end{cases}
	\]
\end{example}

\begin{lemma}\AP\label{lemma:fo-appsoximant-ord}
	For all $A \subseteq \ordmonoid$, there exists
	an "\FO-approximant" from $A\ordp$ to $\Closgordp A$.
\end{lemma}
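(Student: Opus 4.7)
My plan is to proceed by induction on the pair $(|\Closgordp A|, |A|)$ ordered lexicographically, following the template of the singleton proof \Cref{lemma:fo-approximant-aord} and invoking the trichotomy principle \Cref{lemma:trichotomy-ordinal-words} at each inductive step. The base case $A = \emptyset$ is vacuous, since $A\ordp$ is empty. In each of the three cases of the trichotomy I combine the already-available approximants of \Cref{lemma:fo-approximant-finite}, \Cref{lemma:fo-approximant-omega} and \Cref{lemma:fo-approximant-aord} using \Cref{prop:fo-concat-union} and \Cref{lemma:composition-fo-condensation}.

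If some $a \in A$ satisfies $a \Acdot \Closgordp A \subsetneq \Closgordp A$, I set $B := A \setminus \{a\}$, obtain an approximant $\rho_B \colon B\ordp \to \Closgordp B$ by the secondary induction (extending it to $B\ord$ by $\varepsilon \mapsto \Aunit$), and invoke \Cref{lemma:fo-approximant-aord} on $a\ord$. Mimicking case~1 of \Cref{lemma:fo-approximant-finite}, I use an FO-definable condensation that pairs each $a$-position with the maximal $B$-block following it, so that every class is a factor in $a \Acdot B\ord$ mapping via \Cref{prop:fo-concat} into $\{a\} \Pcdot \Closgord B \subseteq a \Acdot \Closgord A$. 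A direct check using the identities of ordinal monoids with merge---in particular $(a \Acdot u)\Amerge = a \Acdot (u \Acdot a)\Amerge \Acdot u$ and $(a \Acdot u)\Aomega = a \Acdot (u \Acdot a)\Aomega$---shows that $a \Acdot \Closgord A$ is closed under $\Acdot$, $\Amerge$ and $\Aomega$, hence coincides with its own saturation and is strictly contained in $\Closgordp A$ by the case hypothesis. The primary induction therefore supplies an approximant on the condensed word, \Cref{lemma:composition-fo-condensation} assembles the layers, and \Cref{prop:fo-concat-union} together with $\rho_B$ glue on the $B$-prefix, the $B$-tail that may appear when the $a$-positions are cofinal in a limit ordinal, and the $B\ordp$ case where no $a$ occurs.

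The second case, $\Closgordp{\Closgo A} \subsetneq \Closgordp A$, is where the argument departs most substantially from the finite-word setting: following the singleton proof, it is handled by condensing in one step all maximal sub-intervals of order type exactly $\omega$ (FO-definable via $\isLimit$ and $\finite$) and applying \Cref{lemma:fo-approximant-omega} locally to each such factor. The third case, where $x \Acdot y = y$ and $x\Aomega = y\Aomega$ for all $x, y \in \Closgordp A$, admits a direct FO definition: send $u$ to its last letter (an element of $A \subseteq \Closgordp A$) when $\dom(u)$ has a maximum, and to the common value $M := x\Aomega$ otherwise. The main obstacle I anticipate is case~2---ensuring that after the $\omega$-condensation the resulting word genuinely lives over an alphabet whose saturation sits strictly inside $\Closgordp A$ (since the positions left untouched by the condensation still carry labels from $A$ and could in principle re-enlarge the saturation back to $\Closgordp A$); navigating this is precisely where the redesigned notion of saturation of \Cref{definition:sat} pulls its weight.
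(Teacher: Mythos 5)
Your proposal follows the paper's proof of \Cref{lemma:fo-appsoximant-ord} essentially step for step: the same double induction driven by the trichotomy of \Cref{lemma:trichotomy-ordinal-words}, with the three cases dispatched by the same auxiliary lemmas. Two points of emphasis differ, in opposite directions. The obstacle you single out in case~2 is not where the difficulty lies: the condensation used there is $\finite(x,y)$ itself, which leaves nothing untouched except a finite suffix (every class has order type $\omega$ apart from a possible finite tail, glued on via \Cref{prop:fo-concat-union} and \Cref{lemma:fo-approximant-finite}), and the condensed word lives over the alphabet $\Closgo A$, whose closure is strictly smaller \emph{exactly by the case hypothesis} $\Closgordp{\Closgo{A}}\subsetneq\Closgordp A$; the trichotomy is designed precisely so that this worry dissolves. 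Conversely, your case~1 gluing (``$B$-prefix, $B$-tail, no-$a$ case'') is incomplete as stated: maximal $B$-blocks also occur immediately after limit points of the set of $a$-positions, as in $\bigl((ab)\Womega b\bigr)\Womega$, and since such interstitial blocks may occur cofinally often they cannot be absorbed by finitely many applications of \Cref{prop:fo-concat-union}; one needs either a coarser condensation whose classes absorb each such block into the $a$-headed factors accumulating below it, or a second condensation pass. The paper is equally terse on this point (its case~1 reads ``as in the proof for finite words''), so this is a subtlety you share with the source rather than a defect of your route; everything else, including your direct closure computation for $a\Acdot\Closgord A$ (which the paper routes through \Cref{prop:saturation-take-letter-out}) and the treatment of case~3, matches the paper.
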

\begin{proof}
	We prove the result by induction on $|\Closgordp A|$ and $|A\ordp|$.
	The base case $A = \varnothing$ is trivial.
	If $A$ is non-empty, following \Cref{lemma:trichotomy-ordinal-words},
	there are three cases to treat.

	\emph{First case: There exists $a\in A$ such that $a\Acdot \Closgordp{A}\subsetneq \Closgordp A$.}
	This case is as in the proof for "finite words", \Cref{lemma:fo-approximant-finite}, using \Cref{lemma:fo-approximant-aord} in place of \Cref{lemma:fo-approximant-astar}.
	The key reason why the proof remains valid is because the
	hypothesis $a\Acdot \Closgordp{A}\subsetneq \Closgordp A$ implies
	$|\Closgordp{(a\Acdot \Closgordp{A})\ordp}| < |\Closgordp A|$ by
	\Cref{prop:saturation-take-letter-out}\footnote{More precisely, we
	are using the property $\Closgordp{B} = B\Closgord{B}$
	of \Cref{prop:saturation-take-letter-out}. By thinking of elements of
	$\Closgordp{B}$ as ``"countable ordinal words" with merge'', this
	property is simply saying that every ``"countable ordinal word" with merge''
	has a first letter. However, "countable ordinal words" need not have
	a last letter: this is what makes an hypothesis
	of the form $ \Closgordp{A}\Acdot a \subsetneq \Closgordp A$
	unusable---and this is the motivation behind the trichotomy principle
	\Cref{lemma:trichotomy-ordinal-words}.}.

	\emph{Second case}\footnote{Note here that it is different from the second case in the proof of \Cref{lemma:fo-approximant-finite}.}:
	$\Closgordp{\Closgo{A}}\subsetneq\Closgordp A$.
	By \Cref{lemma:fo-approximant-omega}, there is an
	"\FO-approximant" from $A\Womega$ to $\Closgo A$.
	By induction hypothesis\footnote{Indeed, $\Closgordp{\Closgo{A}}\subsetneq\Closgordp A$.}, we have an "\FO-approximant"
	from $(\Closgo A)\ordp$ to $\Closgordp {\Closgo A}
	\subseteq \Closgordp A$.
	Since the formula $\finite(x,y)$ is a "condensation \FO-formula",
	we obtain by \Cref{lemma:composition-fo-condensation}
	an "\FO-approximant" from $(A\Womega)\ordp \to \Closgordp A$.
	Using \Cref{prop:fo-concat-union} and \Cref{lemma:fo-approximant-finite},
	we easily extend it to an "\FO-approximant" from $A\ordp =
	A (A\Womega)\ord A^*$ to $\Closgordp A$.

	\emph{Third case: $x\Acdot y=y$ and $x\Aomega = y\Aomega$, for all $x,y\in\Closgordp A$.}
	Then the product over $A$ sends
	a "countable ordinal word" $u \in A\ordp$ to its last letter
	if the word has a last letter, and to the unique omega power
	of $\Closgordp A$ if the word
	has no last letter. Since the languages of the form
	$A\ordp a$ where $a\in A$ and $\{u \in A\ordp \mid \dom(u) \text{ is a "limit ordinal"}\}$ all are "\FO-definable@@lang",
	it follows that the product over $A$ is "\FO-definable@@map".
\end{proof}

\section{Related problems}
\label{section:related}

In this section, we solve two related problems:
the decidability of the "covering problem"
(\Cref{proposition:covering-decidable}), and the computability of "pointlike 
sets"  (\Cref{proposition:pointlikes-computable}). Both are direct applications of the key lemmas presented above.

\AP The ""\FO-covering problem""
asks, given "regular languages@@cow", in our case of "countable ordinal words", $L,K_1,\dots,K_n$, to determine if there exist "\FO-definable languages" $C_1,\dots,C_n$ such that~$L\subseteq \cup_{i}C_i$ and~$C_i\cap K_i=\varnothing$ for all~$i$---see \cite{pz2018covering} for more details. In general, 
 "separation problems" trivially reduce to "covering problems", since $L$
and $K$ are "separable" if and only if there is
a solution to the "covering problem" for the instance $(L,K)$. In the other direction, there is no known example of a variety with decidable separation problem but undecidable covering problem. We show that a further consequence of the above results is that the "\FO-pointlike sets" in a finite "ordinal monoid" (see \Cref{definition:pointlikes}) are computable, from which we deduce:
\begin{proposition}\AP\labelwithproof{proposition:covering-decidable}
	The "\FO-covering problem" for "countable ordinal words" is decidable.
\end{proposition}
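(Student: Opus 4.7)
The plan is to reduce the "\FO-covering problem" to a decidable check on the "saturation" $\Sat$, exactly in the spirit of the separation algorithm of \Cref{theorem:main}. Given regular languages $L, K_1, \dots, K_n$ of "countable ordinal words", I would first pick a finite "ordinal monoid" $\ordmonoid$ together with a morphism $\lettermap$ and accepting sets $F_L, F_{K_1}, \dots, F_{K_n}$ that simultaneously "recognise@@OM" all the input languages, and then compute the finite set $\Sat := \Closgord{\{\{\lettermap(a)\} \mid a \in \alphabet\}}$. My claim is that a covering exists if and only if, for every $X \in \Sat$ with $X \cap F_L \neq \varnothing$, there exists some index $i \in \{1,\dots,n\}$ with $X \cap F_{K_i} = \varnothing$. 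Since $\Sat$ is finite and computable, this condition is decidable, which is what we want.

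For the ``no covering exists'' direction, I would essentially reproduce the argument of \Cref{proposition:no}. Assume some $X \in \Sat$ meets $F_L$ and every $F_{K_i}$, and pick witnesses $x \in X \cap F_L$ and $x_i \in X \cap F_{K_i}$ for each $i$. For any candidate covering $(C_i)_i$ by "\FO-definable languages", fix $k$ large enough that every $C_i$ is defined by an "\FO-sentence" of "quantifier depth" at most $k$. Then \Cref{lemma:saturation-sets-are-pointlike} supplies pairwise "\FOk-equivalent" words $u_x, u_{x_1}, \dots, u_{x_n} \in \alphabet\ord$ realising the chosen elements. Since $u_x \in L \subseteq \bigcup_i C_i$, some $C_{i_0}$ contains $u_x$; by $\FOk$-equivalence it also contains $u_{x_{i_0}} \in K_{i_0}$, contradicting $C_{i_0} \cap K_{i_0} = \varnothing$.

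For the ``covering exists'' direction, I would reuse the "\FO-approximant" $\PFOproduct$ constructed in \Cref{corollary:yes}. For each $X \in \Sat$ intersecting $F_L$, the hypothesis supplies at least one index $i(X)$ such that $X \cap F_{K_{i(X)}} = \varnothing$; fix such a choice once and for all. I would then define, for each $i \in \{1, \dots, n\}$,
\[
  C_i := \bigl\{\, u \in \alphabet\ord \ \colon\  X := \PFOproduct(\singordmap(u)) \text{ satisfies } X \cap F_L \neq \varnothing \text{ and } i(X) = i \,\bigr\},
\]
which is a finite union of preimages of singletons under the "\FO-definable map" $\PFOproduct \circ \singordmap$ and is therefore "\FO-definable@@lang". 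Verifying that $L \subseteq \bigcup_i C_i$ and that $C_i \cap K_i = \varnothing$ for every $i$ is then a direct adaptation of the corresponding verification in the proof of \Cref{corollary:yes}, relying only on the defining containment $\product(\ordmap(u)) \in \PFOproduct(\singordmap(u))$ of an "\FO-approximant".

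I do not expect a genuine obstacle: all the heavy machinery (computability of $\Sat$, the pointlike lemma, existence of a total "\FO-approximant") has already been set up. The only new conceptual observation is that the witness index $i(X)$ can be selected uniformly in $X \in \Sat$ rather than in the input word $u$, which is legitimate precisely because $\PFOproduct(\singordmap(u))$ is a single element of $\Sat$ packaging the behaviour of $u$ with respect to every accepting set simultaneously. This mirrors exactly the way \FO-separation was lifted to \FO-covering for finite and $\omega$-words.
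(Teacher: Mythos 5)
Your proposal is correct and follows essentially the same route as the paper: the same decidable criterion on $\Sat$, the same use of \Cref{lemma:saturation-sets-are-pointlike} to refute any candidate covering in the negative case, and the same use of the "\FO-approximant" $\PFOproduct$ to build the "\FO-definable@@lang" cover in the positive case. The only (immaterial) difference is that you route each word to a single $C_{i(X)}$ via a choice function on $\Sat$, whereas the paper simply sets $C_i := \{u \mid \PFOproduct(\singordmap(u)) \cap \product(\ordmap(K_i)) = \varnothing\}$, allowing a word to land in several $C_i$'s.
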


Let us now introduce, and explain, the relation with "pointlike sets".
\AP The~""\FOk-closure"" of a word~$u$ is the set $\intro*\closureFOk
{u}$ which contains all words that are "\FOk-equivalent" to~$u$.

\begin{definition}\AP\label{definition:pointlikes}
	\AP Given a finite "ordinal monoid" $\ordmonoid$
	the ""\FO-pointlike sets"" of a map $\lettermap\colon \alphabet \to M$
	are defined by
	\[
		\intro*\PL(\lettermap) ::=
		\bigcap_{k\in\Nats} \down \left\{
		\product(\ordmap(\closureFOk{u})) \mid u \in \alphabet\ord\right\},
	\]
	where $\intro*\down X$ denotes the downward closure of $X$.
\end{definition}

The definition of "pointlike sets" is in fact more general\footnote{In the following discussion, we focus on finite words,
but the notion of variety---of algebras, or of languages---can be extended to
"countable ordinal words" \cite{bedon1998eilenberg} and many other settings \cite[\S 4]{bojanczyk2015recognisable}.}:
given a variety of finite "semigroups" $\mathbb{V}$
one can define a notion of 
"pointlike sets" with respect to this variety. Almeida
observed that the "separation problem"
for the variety $\mathbb{V}$---given two regular languages, can 
they be separated by a $\mathbb{V}$-recognisable language?---is 
decidable if and only if the $\mathbb{V}$-"pointlikes"
of size 2 of every morphism are computable
\cite[Prop. 3.4]{almeida1999some}.
The "covering problem" also has an algebraic counterpart:
it is decidable for the variety $\mathbb{V}$ if and only if,
for every morphism, the collection of all
$\mathbb{V}$-"pointlike sets" of this morphism is computable
\cite[Prop. 3.6]{almeida1999some}\footnote{Beware: there is
a typo in the statement of the first item of the proposition.}.
Hence, the fact that "\FO-covering" and "\FO-separation" are
decidable for finite words is simply a corollary of
Henckell's theorem on "aperiodic pointlikes"
\cite[Fact 3.7 \& Fact 5.31]{henckell1988}, stating that  
they are computable.
Place \& Zeitoun's simpler proof of the decidability of
"\FO-covering" for "finite words" and for "$\omega$-words"
\cite{zeitoun2016separating} relies on the same principle.\footnote{There is a difference in terminology: they refer to the $\PL
(\varphi)$ as ``optimal imprint with respect to \FO{} on
$\varphi$''.} Unsurprisingly, our result can be interpreted in
the same way: we are implicitly showing the following property,
from which one can immediately deduce the computability of $\PL(\lettermap)$.

\begin{proposition}\AP\labelwithproof{proposition:pointlikes-computable}
	\AP Given a finite "ordinal monoid" $\ordmonoid$
	and $\lettermap\colon \alphabet \to M$,
	\begin{align*}
		\PL(\lettermap) = \down\Closgord{\{\{\lettermap(a)\} \mid a\in \alphabet\}}.
	\end{align*}
\end{proposition}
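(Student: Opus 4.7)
The plan is to prove both inclusions using the two main technical results already established: the witness construction of Lemma \ref{lemma:saturation-sets-are-pointlike} for the inclusion $\down\Sat \subseteq \PL(\lettermap)$, and the existence of the \FO-approximant from Lemma \ref{lemma:completude-core} for the reverse inclusion. Throughout, write $\Sat := \Closgord{\{\{\lettermap(a)\} \mid a \in \alphabet\}}$.

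For the inclusion $\down\Sat \subseteq \PL(\lettermap)$: since $\PL(\lettermap)$ is an intersection of downward-closed sets, it suffices to show $\Sat \subseteq \PL(\lettermap)$. Fix $X \in \Sat$ and an integer~$k$. By Lemma \ref{lemma:saturation-sets-are-pointlike}, there exists a family $(u_x)_{x \in X}$ of words in $\alphabet\ord$ such that $\product(\ordmap(u_x)) = x$ for every $x \in X$ and all the $u_x$ are \FOk-equivalent to each other. Picking any $x_0 \in X$, we get $X \subseteq \product(\ordmap(\closureFOk{u_{x_0}}))$, which shows that $X$ belongs to $\down\{\product(\ordmap(\closureFOk{u})) \mid u \in \alphabet\ord\}$. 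Since $k$ was arbitrary, $X \in \PL(\lettermap)$.

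For the converse $\PL(\lettermap) \subseteq \down\Sat$: apply Lemma \ref{lemma:completude-core} to the finite "ordinal monoid with merge" $(\Pset(\ordmonoid), \{\Aunit\}, \subseteq, \Pcdot, \Pomega, \Pmerge)$ (using Lemma \ref{lemma:powerset-is-merge}) to obtain an "\FO-approximant"~$\PFOproduct$ on $(\Pset(\ordmonoid))\ord$. Composing with the letter-to-singleton map $\singordmap$, we obtain a map $u \mapsto \PFOproduct(\singordmap(u))$ from $\alphabet\ord$ to $\Pset(\ordmonoid)$, which is "\FO-definable@@map" (one rewrites each predicate $\{\lettermap(a)\}(x)$ in the defining formula for $\PFOproduct$ as $a(x)$). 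Since there are only finitely many possible outputs, there is an integer~$k_0$ such that this map is determined by the "\FOk-closure" of its input: whenever $u \FOkeq[_{k_0}] v$, one has $\PFOproduct(\singordmap(u)) = \PFOproduct(\singordmap(v))$.

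Now fix $Y \in \PL(\lettermap)$. By definition of $\PL(\lettermap)$, applied with $k = k_0$, there exists $u \in \alphabet\ord$ such that $Y \subseteq \product(\ordmap(\closureFOk[_{k_0}]{u}))$. For every $v \in \closureFOk[_{k_0}]{u}$, we have $\product(\ordmap(v)) \in \PFOproduct(\singordmap(v)) = \PFOproduct(\singordmap(u))$, where the first containment uses that $\PFOproduct$ is an \FO-approximant. Therefore $Y \subseteq \PFOproduct(\singordmap(u))$, and since $\PFOproduct(\singordmap(u)) \in \Sat$, we conclude $Y \in \down\Sat$. No step is really an obstacle here; the proposition is essentially a repackaging of the two key lemmas of Sections \ref{section:no} and \ref{section:yes}, with the only subtle point being the transfer of FO-definability from $\PFOproduct$ on $(\Pset(\ordmonoid))\ord$ to the composite map $u \mapsto \PFOproduct(\singordmap(u))$ on $\alphabet\ord$.
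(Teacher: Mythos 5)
Your proof is correct and follows essentially the same route as the paper: the inclusion $\down\Sat \subseteq \PL(\lettermap)$ via the witness family of \Cref{lemma:saturation-sets-are-pointlike}, and the converse via the "\FO-approximant" of \Cref{lemma:completude-core}, choosing $k_0$ large enough that the map $u \mapsto \PFOproduct(\singordmap(u))$ factors through $\FOk$-equivalence. The only (immaterial) difference is that the paper picks explicit witnesses $u_x \FOkeq u$ with $\product(\ordmap(u_x)) = x$ for each $x$ in the pointlike set, whereas you quantify directly over all words in the $\FOk$-closure of $u$.
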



\section{Conclusion}
\label{section:conclusion}

In this paper, we have studied the problem of "\FO-separation" over "words of countable ordinal length". Our proof is based on the work of Place and Zeitoun over "words of length~$\omega$" \cite{zeitoun2016separating}. 
  We build an "\FO-approximant" using
essentially the same technique as Place and Zeitoun.
However a key difference is that for "finite words" and "\(\omega\)-words",
the proof relies on a case distinction (\Cref{lemma:finite-trichotomy}) which is
conceptually similar to the characterisation of groups as semigroups
whose translations are bijective. This was no longer sufficient for "countable ordinal words" because
of "\(\omega\)-iterations". In this situation, our new case distinction (\Cref{lemma:trichotomy-ordinal-words}) captures the subtle
interaction of "\(\omega\)-iteration" with groups in finite "ordinal
monoids".
In particular, a difference with previously known algorithms is that we do not close the "saturation" under subset. This a priori innocuous difference has significant consequences on the proof of completeness, yielding some simplifications in the "finite@finite word" and "$\omega$-case@$\omega$-word", and necessary for the proof to be extendable to all "ordinals".

Of course, the next step is to go to longer words, in particular "scattered@@word" "countable words", or even better to all "countable words". Here, there are conceptual difficulties, and let us stress also that, starting from "scattered@@word" "countable words", first-order logic and first-order logic with access to Dedekind cuts begin to have a different expressiveness. Thus several notions of separation have to be studied.

\bibliographystyle{alphaurl}
\bibliography{biblio}

\clearpage

\appendix

\section{Proof of preliminary section}
\label{section:proof-prelim}

\subsection{Proof of \Cref{lemma:composition-fo-condensation}}

\label{proof-lemma:composition-fo-condensation}
\begin{proof}
	Since $g$ is "\FO-definable@@map", for all $c\in C$,
	there exists an "\FO-formula" $\psi_c$ over the alphabet $B$
	that defines $g^{-1}[c] \subseteq Y\ord$.
	Moreover, since $\varphi(x,y)$ is an "\FO-formula",
	and since $f: A\ord \to B$ is "\FO-definable@@map",
	we can build for every $b\in B$ a formula
	$\chi_b(x,y)$ over $X$ that is satisfied if and only if
	the subword indexed by $[x,y\mathclose{[}$ is sent,
	via $f$, to $b\in B$.

	We want to transform the formula $\psi_c$ into a formula
	$\tilde{\psi}_c$ over the alphabet $A$ so that $\tilde{\psi}_c$
	recognises $(g \focomp{\varphi} f)^{-1}[c]$.
	We perform this transformation by structural induction on the
	formula, as follows:
	\begin{itemize}
	\item We replace every existential (resp. universal) quantification 
		over a variable $x$---representing a position $x$
		in a $B$-"word@@ord"---by a pair of existential (resp. 
		universal) quantifiers over
		two new variables $x_\ell$ and $x_r$---representing a convex
		subset $[x_\ell, x_r\mathclose{[}$ of the "domain" of
		an $A$-"word@@ord"---followed by a constraint asking
		the pair of variables $(x_\ell,x_r)$ 
		to define an equivalence class of the "condensation" defined
		by $\varphi$.
	\item We replace every proposition of the form $b(x)$
		by $\chi_b(x_\ell,x_r)$.
	\end{itemize}
	Hence, we have built, for each $c\in C$, a "first-order formula"
	$\tilde{\psi}_c$ that "accepts" $(g \focomp{\varphi} f)^{-1}[c]$:
	it follows that the function  $g \focomp{\varphi} f$ is 
	"\FO-definable@@map".
\end{proof}

\section{Proofs with regard to the algorithm}
\label{section:proof-algo}

\subsection{Proof of \Cref{lemma:properties-Pmerge}}
\label{proof-lemma:properties-Pmerge}
\begin{proof}
	Let~$k$ be such that~$A^k=A\idem$ for all~$A\subseteq\monoid$.
	In particular, the sequence~$A^n$ is periodic of period~$k$ after position~$k$.
	It follows that~$A\Pmerge=\cup_{n\geq\ell}A^n$ for all~$\ell\geq k$, and in particular, $A^n\Pcdot A\Pmerge=A\Pmerge$,
	and $A\Pmerge=A\Pmerge\Pcdot A^n$ for all naturals~$n$.
	Hence $A\Pmerge=\cup_{n\geq k}(A\Pmerge\Pcdot A^n)=A\Pmerge\Pcdot A\Pmerge$. One also easily gets
	\begin{align*}
		(A\Pcdot B)\Pmerge\Pcdot A
			&=(\cup_{n\geq\ell}(A\Pcdot B)^n)\Pcdot A
			=A\Pcdot (\cup_{n\geq\ell}(B\Pcdot A)^n)
			=A\Pcdot (B\Pcdot A)\Pmerge\ .
	\end{align*}
	Combining with above facts, we get
	$(A\Pcdot B)\Pmerge=A\Pcdot B\Pcdot (A\Pcdot B)\Pmerge=A\Pcdot(B\Pcdot A)\Pmerge\Pcdot B$.
\end{proof}

 \section{Proofs of when the algorithm says `"no"'}
 \label{section:proof-no}

\subsection{Proof of \Cref{lemma:saturation-sets-are-pointlike}}
\label{proof-lemma:saturation-sets-are-pointlike}

\begin{proof}
	Fix $k \in \Nats$. The computable function is defined by structural induction on $X$.	

	\smallskip\noindent\emph{Base case.}
	If $X = \{\lettermap(a)\}$ then we may take $u_{\lettermap(a)} := a$.
		
	\smallskip\noindent\emph{Binary product case.}
	Assume that both $X$ and
	$Y \in \Sat$ satisfy the induction hypothesis: let $(u_x)_{x\in X}$ denote the sequence for $X$ and $(v_y)_{y \in Y}$ the sequence for $Y$. For any $z \in X \cdot Y$, choose $x \in X$ and $y \in Y$ such that $z = x \cdot y$, and define $w_z := u_x v_y$. Clearly, $\product(\ordmap(w_z)) = z$. Also for any $z = xy, z' = x'y' \in X \cdot Y$, we have $u_x \FOkeq u_x'$ and $v_y \FOkeq v_y'$, so it follows by the first item of \Cref{prop:rosenstein-main-text} that $w_z = u_x v_y \FOkeq u_x' v_y' = w_z'$.

	\begin{figure}
		\centering
		\begin{tikzpicture}
    \draw[rounded corners=.5mm] (0,0) rectangle (2,.5);
    \foreach \x in {0,...,7} {
        \draw[rounded corners=.5mm] ($(2,0)+(\x,0)$) rectangle
            ($(3,.5)+(\x,0)$);
    };

    \foreach \x in {1,...,4} {
        \node[font=\footnotesize] at ($\x*(.5,0)+(-.25,.25)$) {$a_{\x}$};
    };
    \foreach \x in {0,...,7} {
        \node[font=\footnotesize] at ($\x*(1,0)+(2.25,.25)$) {$b_{1}$};
        \node[font=\footnotesize] at ($\x*(1,0)+(2.75,.25)$) {$b_{2}$};
    };

    \draw[rounded corners=.5mm] (0,1) rectangle (3.5,1.5);
    \foreach \x in {0,...,3} {
        \draw[rounded corners=.5mm] ($(3.5,1)+1.5*(\x,0)$) rectangle
            ($(5,1.5)+1.5*(\x,0)$);
    };

    \foreach \x in {1,...,7} {
        \node[font=\footnotesize] at ($\x*(.5,0)+(-.25,1.25)$) {$c_{\x}$};
    };
    \foreach \x in {0,...,3} {
        \node[font=\footnotesize] at ($\x*(1.5,0)+(3.75,1.25)$) {$d_{1}$};
        \node[font=\footnotesize] at ($\x*(1.5,0)+(4.25,1.25)$) {$d_{2}$};
        \node[font=\footnotesize] at ($\x*(1.5,0)+(4.75,1.25)$) {$d_{3}$};
    };

    \node[right] at (10,.25) {$\cdots$};
    \node[right] at (9.5,1.25) {$\cdots$};

    \draw (4.75,-.25) edge[->] (4.75,-.75);

    \begin{scope}[yshift=-2.5cm]
        \draw[rounded corners=.5mm] (0,0) rectangle (3.5,0.5);
        \draw[rounded corners=.5mm] (0,1) rectangle (3.5,1.5);
        \foreach \x in {0,1} {
            \draw[rounded corners=.5mm] ($(3.5,1)+3*(\x,0)$) rectangle
                ($(6.5,1.5)+3*(\x,0)$);
        };
        \foreach \x in {0,1} {
            \draw[rounded corners=.5mm] ($(3.5,0)+3*(\x,0)$) rectangle
                ($(6.5,.5)+3*(\x,0)$);
        };
        \draw[rounded corners=.5mm] (10,0) -- (9.5,0) -- (9.5,.5) -- (10,.5);

        \foreach \x in {1,...,4} {
            \node[font=\footnotesize] at ($\x*(.5,0)+(-.25,.25)$) {$a_{\x}$};
        };
        \foreach \x in {0,...,7} {
            \node[font=\footnotesize] at ($\x*(1,0)+(2.25,.25)$) {$b_{1}$};
            \node[font=\footnotesize] at ($\x*(1,0)+(2.75,.25)$) {$b_{2}$};
        };
        \foreach \x in {1,...,7} {
            \node[font=\footnotesize] at ($\x*(.5,0)+(-.25,1.25)$) {$c_{\x}$};
        };
        \foreach \x in {0,...,3} {
            \node[font=\footnotesize] at ($\x*(1.5,0)+(3.75,1.25)$) {$d_{1}$};
            \node[font=\footnotesize] at ($\x*(1.5,0)+(4.25,1.25)$) {$d_{2}$};
            \node[font=\footnotesize] at ($\x*(1.5,0)+(4.75,1.25)$) {$d_{3}$};
        };

        \node[right] at (10,.25) {$\cdots$};
        \node[right] at (9.5,1.25) {$\cdots$};
    \end{scope}

    \begin{scope}[yshift=-2.7cm]
        \draw[decorate, decoration={brace, mirror}] (0,0) --
            node[midway, below, font=\footnotesize] {prefix of length $\ell$} (3.5,0);
        \draw[decorate, decoration={brace, mirror}] (6.5,0) --
            node[midway, below, font=\footnotesize] {period of length $k$} (9.5,0);
    \end{scope}
\end{tikzpicture}
		\caption{\label{fig:synchronising_decompositions} Two ultimately periodic "$\omega$-words" can always be written as
		ultimately periodic "words@$\omega$-words" whose prefix (resp. period)
		have the same length.}
	\end{figure}
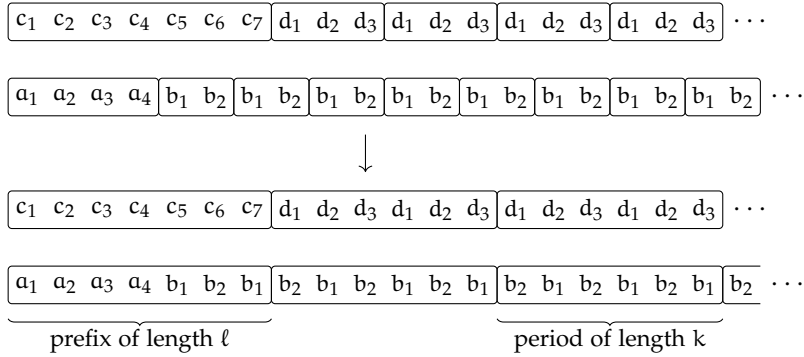

	\smallskip\noindent\emph{$\omega$-iteration case.}
	Assume that $X \in \Sat$ satisfies the induction hypothesis. We now show that $X\Pomega$ does so, as well. Let $(u_x)_{x \in X}$ denote the sequence computed for $X$. We will construct a sequence $(w_z)_{z \in X\Pomega}$. For each $z$ in $X\Pomega$, there exist finite words $x(z) \in X^+$ and $y(z) \in X^+$ such that $z = x(z) \cdot y(z)\Aomega$. Moreover, since $X\Pomega$ is finite, we may choose the words $x(z) \in X^+$ (resp.
	$y(z) \in X^+$) such that they all have the same length $\ell$
	(resp. $k$), as depicted in \Cref{fig:synchronising_decompositions}. After having made such a choice of $x(z)$ and $y(z)$ for each $z \in X\Pomega$, now let $z \in X\Pomega$ be arbitrary. Write $x(z) = x_1(z) \cdots x_\ell(z)$ and $y(z) = y_1(z) \cdots y_{k}(z)$, where $k$ may depend on $z$, but $\ell$ does not. Now define the ordinal word $w_z := u_{x_1(z)} \cdots u_{x_\ell(z)} \cdot (u_{y_1(z)} \cdots u_{y_{k}(z)})^\omega$. We get that
	\[ \product(\ordmap(w_z)) = x(z) \cdot y(z)\Aomega = z.\]
	Now let $z, z' \in X\Pomega$.  We need to show that $w_{z} \FOkeq w_{z'}$.
	Since, using the induction hypothesis, $u_{x_i(z)} \FOkeq u_{x_i(z')}$,
	we get that 
	\begin{equation}\label{eq:prefix-equiv}
		u_{x_1(z)} \cdots u_{x_\ell(z)} \FOkeq u_{x_1(z')} \cdots u_{x_\ell(z')},
	\end{equation} 
	using the first item of \Cref{prop:rosenstein-main-text}. 
	Similarly, as a consequence of the second item of \Cref{prop:rosenstein-main-text}, we obtain
	\begin{equation}\label{eq:suffix-equiv}
		(u_{y_1(z)} \cdots u_{y_{k}(z)})^\omega \FOkeq (u_{y_1(z')} \cdots u_{y_{k}(z')})^\omega.
	\end{equation}
	Combining \Cref{eq:prefix-equiv} and \Cref{eq:suffix-equiv}, we conclude that $w_z \FOkeq w_{z'}$.

	\smallskip\noindent\emph{Merging operator.}
	Finally, assume that $X \in \Sat$ satisfies the induction hypothesis. We now show that $X\Pmerge$ does so, too. Let $(u_x)_{x \in X}$ denote the sequence computed for $X$. We will construct a sequence $(w_y)_{y \in X\Pmerge}$. By \Cref{prop:rosenstein-main-text}, pick $n$ large enough such that $u^n \FOkeq u^{n+1}$ for all $u \in \alphabet\ordp$. 
	For each $y \in X\Pmerge$, pick $m(y) \geq n$ and $x_1(y), \dots, x_{m(y)}(y) \in X$ such that $y = x_1(y) \cdots x_{m(y)}(y)$. Define $w_y := u_{x_1(y)} \cdots u_{x_{m(y)}(y)}$. Using the fact that $\product(\ordmap(u_x)) = x$ by the induction hypothesis, it is clear that $\product(\ordmap(w_y)) = y$. Now, for any $y,y' \in X\Pmerge$, note that, for any $1 \leq i \leq m(y)$ and $1 \leq j \leq m(y')$, we have $u_{x_{i}(y)} \FOkeq u_{x_j(y')}$, using the induction hypothesis. Writing $u_0 := u_{x_1(y)}$, it follows in particular from the first item of \Cref{prop:rosenstein-main-text} that
	\[w_{y} \FOkeq u_0^{m(y)} \FOkeq u_0^{m(y')} \FOkeq w_{y'}.\qedhere\]
	%
	%
\end{proof}

\section{Proofs of when the algorithm says `"yes"'}
\label{section:proof-yes}

\subsection{Proof of \Cref{prop:fo-concat-union}}
\label{proof-prop:fo-concat-union}
\begin{proof}
	Let~$\rho\colon K\to M$, $\tau\colon L\to M$ be the "\FO-approximants".

	For the union: We define $\sigma(u)$ to be $\rho(u)$ if~$u\in K$, and~$\tau(u)$ otherwise.
	This is clearly an "\FO-approximant" over~$K\cup L$.

	For the concatenation:
	We first establish the result when~$K,L$ do not contain the empty word.
	One defines the map~$\sigma$ wich, given a word~$u$ as input, defines the least~$x$ such that~$u|_{<x}\in K$ and~$u|_{\geqslant x}\in L$, and outputs~$\rho(u|_{<x})\Acdot\tau(u|_{\geqslant x})$. It is easy to check that~$\sigma$ is an "\FO-approximant": the definability comes from the fact that $K$ and~$L$ are "\FO-definable@@lang", and since~$\product(u|_{<x})\leqslant \rho(u|_{<x})$ and $\product(u|_{\geqslant x})\leqslant \tau(u|_{\geqslant x})$, we get
	that~$\product(u)=\product(u|_{<x})\Acdot\product(u|_{\geqslant x})\leqslant \rho(u|_{<x})\Acdot \tau(u|_{\geqslant x})=\sigma(u)$.
	Adding the case of~$\varepsilon$ is then easy done by case distinction.
\end{proof}

\subsection{Proof of \Cref{lemma:fo-approximant-omega}}
\label{proof-lemma:fo-approximant-omega}
\begin{proof}
	The proof is similar to the finite word case.
	We use a double induction on~$|\Closgp A|$ and~$|A|$.
	The induction is guided by \Cref{lemma:finite-trichotomy}.
	The base case is~$A=\varnothing$, and the nowhere defined 
	"\FO-approximant" proves it.
	Moreover, note that if~$A$ has only one letter,
	then there is only one word in $A\Womega$,
	so the property trivially holds.

	\emph{First case:}~$a\Acdot \Closgp{A}\subsetneq \Closgp A$ for some~$a\in A$.
	Let~$B ::= A\smallsetminus \{a\}$.
	We provide first an "\FO-approximant" for $(a^+B^+)^\omega$.
	We have from the finite case an "\FO-approximant"~$\tau$ from~$a^+B^+$ to $a\Acdot \Closgp A$,
	We use again the "condensation \FO-formula" $\varphi(x,y)$ that
	expresses that ``two positions~$x$ and~$y$ are equivalent
	if the subword on the interval~$[x,y]$ belongs to $a^*B^*$''.
	By induction hypothesis\footnote{Indeed,
	$|\Closgp{(a\Acdot \Closgp{A})^+}| <
	|\Closgp{A}|$ by \Cref{prop:saturation-take-letter-out}.} on~$A'=a\Acdot \Closgp A$, there is an "\FO-approximant"
	from~$(a\Acdot \Closgp{A})^\omega$ to~$\Closgp{A}$. By \Cref{lemma:composition-fo-condensation},
	we thus obtain an "\FO-definable map" from~$ (a^+B^+)^\omega$ to~$\Closgo{A}$. It is an "\FO-approximant" by construction.
	Now, using the finite case, the case for
	a one-letter alphabet, the induction hypothesis on smaller
	alphabets and \Cref{prop:fo-concat-union,lemma:fo-approximant-omega}, it can be easily extended
	to an "\FO-approximant" from~$A^\omega= A^*((a^+B^+)^\omega\cup B^\omega \cup a^\omega)$ to~$\Closgo{A}$. 
		
	\emph{Second case:}~$\Closgp{A}\Acdot a \subsetneq \Closgp A$.
	This case is similar to the first one.

	\emph{Third case:}~$\Closgp A$ has a maximum~$M$.
	Then the constant map that sends every "word" in~$A^\omega$ to~$M\Aomega$
	is an "\FO-approximant".
\end{proof}

\subsection{Proof of \Cref{lemma:trichotomy-ordinal-words}}
\label{proof-lemma:trichotomy-ordinal-words}

Our proof of \Cref{lemma:trichotomy-ordinal-words}, requires some
basic properties of "Green's relations" on "ordinal monoids".

\AP Recall that a ""preorder"" on a set $X$ is a binary relation $\preceq$
over $X$ that is reflexive ($x \preceq x$ for all $x\in X$) and
transitive ($x \preceq y$ and $y \preceq z$ imply $x \preceq z$).
The ""equivalence relation associated with""
$\preceq$ is the binary relation
$\sim$ defined by $x \sim y$ if and only if $x \preceq y$ and
$y \preceq x$: it is always an equivalence relation.

 \noindent
\textbf{\AP""Green's relations""}: Define in a "monoid" $\monoid$ the following preorders:
\phantomintro\Lleq\phantomintro\Jleq\phantomintro\Rleq
\begin{align*}
  x \reintro*&\Lleq y \text{ if  $x = ay$ for some~$a\in\monoid$},&
  x \reintro*&\Rleq y \text{ if  $x = yb$ for some~$b\in\monoid$},\\
  x \reintro*&\Jleq y \text{ if  $x = ayb$ for some~$a,b\in\monoid$}.
\end{align*}
We denote by $\intro*\Jeq$, $\intro*\Leq$ and $\intro*\Req$ the "corresponding equivalence relations",
\AP and define~${\intro*\Heq} = {\Leq}\cap{\Req}$.
\AP Given a relation $\mathcal{K} \in
\{{\Jeq}, {\Leq}, {\Req}, {\Heq}\}$,
a~\reintro*"$\mathcal{K}$-class@$\Jeq$-class" 
is one of its equivalence classes, 
\phantomintro{$\Jeq$-class}\phantomintro{$\Heq$-class}\phantomintro{$\Leq$-class}\phantomintro{$\Req$-class}\phantomintro{$\Deq$-class}%
and $\monoid$ is ""$\mathcal{K}$-trivial@greentrivial""
if all "$\mathcal{K}$-classes@$\Jeq$-class" are singletons.
\AP We assume, in the following paragraphs, that the monoid
$\monoid$ is finite.
\AP A $\Jeq$-class $J$ is ""regular@@Jclass"" when one of the following
equivalent property hold---see e.g. \cite[\S V.2.2]{pin2020mpri} for a proof:
\begin{itemize}
  \item there exists two elements of $J$ whose product stays in $J$,
  \item $J$ contains an idempotent, or
  \item all "$\Leq$@$\Leq$-classes" and "$\Req$-classes" of $J$ contains  an "idempotent".
\end{itemize}
There is an equivalent notion that expresses the fact that $J$ behaves
nicely with "$\omega$-iteration" if we assume $\ordmonoid$
to be an "ordinal monoid":
\AP we say that $J$ is ""$\omega$-stable""
when one of the following equivalent properties hold:
\begin{itemize}
  \item there exists a sequence $(x_n)_{n<\omega}$ in $J$ such that
  $\product((x_n)_{n<\omega}) \in J$,
  \item there exists an element $x \in J$ such that $x\Aomega \in J$.
\end{itemize}

\begin{proposition}
  \label{prop:la-proposition-qui-fait-le-cafe}
  In a finite "ordinal monoid", all $x$ such that 
  $x \Req x\Aomega$ are "idempotent".
\end{proposition}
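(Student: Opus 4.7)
The plan is to show directly that $x = x^2$, using only the relation $x \Req x\Aomega$ together with the fundamental identity $x\Aomega = x \Acdot x\Aomega$. The hypothesis $x \Req x\Aomega$ gives in particular $x \Rleq x\Aomega$, so I pick $b \in \ordmonoid$ (possibly $\unit$) with $x = x\Aomega \Acdot b$.

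Next, I would justify the identity $x\Aomega = x \Acdot x\Aomega$ from "generalised associativity". Let $w \in M\ord$ be the $\omega$-indexed word all of whose letters equal $x$, so that $\product(w) = x\Aomega$. Then $x \Acdot x\Aomega = \product(x, \product(w))$. Applying generalised associativity to the two-letter word over $M\ord$ whose first letter is the singleton word $x$ and whose second letter is $w$, this equals $\product(\flatten(x, w))$. But $\flatten(x, w)$ is the word all of whose letters equal $x$, indexed by the ordinal $1 \plusord \omega = \omega$; hence it is again $w$ (up to isomorphism of domains), and $\product(\flatten(x, w)) = x\Aomega$.

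From here the proof finishes in one line: substituting the identity into $x = x\Aomega \Acdot b$ yields
\[
x \; = \; x\Aomega \Acdot b \; = \; (x \Acdot x\Aomega) \Acdot b \; = \; x \Acdot (x\Aomega \Acdot b) \; = \; x \Acdot x,
\]
where the middle equality uses associativity of $\Acdot$ (the binary case of generalised associativity). Hence $x^2 = x$, so $x$ is idempotent.

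There is no real obstacle: the argument never uses finiteness, nor the full strength of $x \Req x\Aomega$ (only $x \Rleq x\Aomega$ is needed). The only point that requires care is the invocation of generalised associativity to obtain $x\Aomega = x \Acdot x\Aomega$, which hinges on the ordinal arithmetic identity $1 \plusord \omega = \omega$; everything else is a direct substitution.
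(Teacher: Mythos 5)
Your proof is correct and follows exactly the same route as the paper's one-line argument: extract $x = x\Aomega \Acdot b$ from $x \Rleq x\Aomega$, then use $x \Acdot x\Aomega = x\Aomega$ (which the paper leaves implicit and you justify via $1 \plusord \omega = \omega$ and generalised associativity) to conclude $x \Acdot x = x$. Your closing observations that neither finiteness nor the full $\Req$-equivalence is needed are also accurate.
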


\begin{proof}
  If $x \Req x\Aomega$, then
  $x = x\Aomega a$ for some $a\in\ordmonoid$, and thus $xx = x x\Aomega a = x\Aomega a = x$.
\end{proof}

The statement of following proposition comes from
\cite[Lemma 7]{colcombet2015limited}.

\begin{proposition}
  \label{prop:omega-stable-H-trivial}
  In a finite "ordinal monoid" $\monoid$, "$\omega$-stable" "$\Jeq$-classes" are "$\Heq$-trivial".
\end{proposition}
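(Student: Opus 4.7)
The plan is to locate an "idempotent" inside the "$\omega$-stable" $\Jeq$-class $J$, then show that the $\Heq$-class of this "idempotent" must be a trivial group by applying the same argument to each of its elements, and finally transport this triviality to every $\Heq$-class of $J$ via Green's lemma.

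For the first step, pick $x \in J$ such that $x\Aomega \in J$ by "$\omega$-stability". By "generalised associativity" and the ordinal identity $1 + \omega = \omega$, we have $x \cdot x\Aomega = x\Aomega$, which yields $x\Aomega \Rleq x$. Since $x \Jeq x\Aomega$ and $\monoid$ is finite, the standard stability of "Green's relations" upgrades this to $x \Req x\Aomega$, and \Cref{prop:la-proposition-qui-fait-le-cafe} forces $x$ to be "idempotent"; call it $e$.

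For the second step, the $\Heq$-class $H_e$ is a group with identity $e$ by classical Green's theory. For any $g \in H_e$, there is some $n \geq 1$ with $g^n = e$. Partitioning the word of length $\omega$ consisting of $\omega$ copies of $g$ into $\omega$ consecutive blocks of length $n$---which is legitimate since $n \cdot \omega = \omega$ in ordinal arithmetic---and invoking "generalised associativity", we get $g\Aomega = (g^n)\Aomega = e\Aomega$. In particular $g\Aomega \in J$, so repeating the argument of the previous paragraph with $g$ in place of $x$ shows that $g$ itself is "idempotent". But the only "idempotent" in the group $H_e$ is $e$, whence $g = e$ and $H_e = \{e\}$.

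Finally, Green's lemma guarantees that all $\Heq$-classes in a common $\Deq$-class have the same cardinality, and in a finite monoid $\Deq$ and $\Jeq$ coincide; since $|H_e| = 1$, every $\Heq$-class of $J$ is a singleton, i.e., $J$ is "$\Heq$-trivial". I expect the main subtlety to be the identity $g\Aomega = e\Aomega$ for $g \in H_e$: it relies on carefully partitioning the length-$\omega$ word of $g$'s into $\omega$ blocks of size $n$ via "generalised associativity", and although this rests on elementary ordinal arithmetic, it is the step where the interplay between the group structure of $H_e$ and the $\omega$-iteration is really used. The remaining ingredients are routine applications of finite semigroup theory.
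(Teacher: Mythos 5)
Your proof is correct and follows essentially the same route as the paper's: locate an idempotent $e$ in $J$ via stability together with \Cref{prop:la-proposition-qui-fait-le-cafe}, use the generalised-associativity identity $g\Aomega=(g^n)\Aomega=e\Aomega$ to show that every element of the $\Heq$-class of $e$ is idempotent and hence equal to $e$, and propagate the triviality to all of $J$ by Green's lemma. The only cosmetic difference is that you invoke the group structure of $H_e$ to get $g^n=e$ directly, whereas the paper derives $y\idem=x$ by hand from $y\Heq x$ and the triviality of $\Heq$ on idempotents; both arguments rest on the same two key facts.
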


\begin{proof}
  Since $J$ is "$\omega$-stable", we have $x \Jeq x\Aomega$ for some
  $x \in J$. In particular, since $x \Rleq x\Aomega$, we obtain
  by stability---see e.g. \cite[Theorem V.1.9]{pin2020mpri}---$x \Req x\Aomega$,
  and thus $x$ is "idempotent" by \Cref{prop:la-proposition-qui-fait-le-cafe}.

  We will now show that, in fact, every element of ${\Heq}(x)$ is "idempotent".
  Let $y$ such that $x \Heq y$. First, we claim that $y \Jeq y\idem$.
  Indeed, since $x \Heq y$, we have $y \Heq x = x^2 \Jeq y^2$. So $y \Jeq y^2$, and by stability, $y^2 \Heq y$, from which it follows,
  by trivial induction on $n\in\Nats_{>0}$, that $y \Heq y^n$.
  In particular, $y\idem \Heq y \Heq x$.
  Since the $\Heq$ relation is trivial
  on idempotents---see e.g.
  \cite[Corollary V.1.5]{pin2020mpri})---,
  it follows that $x = y\idem$.
  Then $y\Aomega = (y\idem)\Aomega = x\Aomega \Jeq x \Heq y$, so
  by \Cref{prop:la-proposition-qui-fait-le-cafe}, 
  $y$ is idempotent.

  Hence, every element of ${\Heq}(x)$ is "idempotent". Since $\Heq$ is trivial
  on "idempotent" elements, it means that ${\Heq}(x)$ is trivial.
  Finally, by Green's
  lemma---see \cite[Proposition V.1.10]{pin2020mpri}---$\Heq$-classes inside
  a $\Jeq$-class are equipotent, from which it follows that the whole class
  $J$ is "$\Heq$-trivial".
\end{proof}

We are now ready to prove \Cref{lemma:trichotomy-ordinal-words}.

\begin{proof}[Proof of \Cref{lemma:trichotomy-ordinal-words}]
	Assume that the first two items do not hold. 
	
	Since the first item does not hold, by~\Cref{prop:saturation-take-letter-out}, we have  $a\Acdot \Closgordp{A}= \Closgordp A$ for all~$a\in A$.
	This implies that for all $a\in A$ and~$b\in\Closgordp{A}$, $b=a\Acdot c$ for some~$c$, ie $b\Rleq a$.
	It follows that there is a unique maximal "$\Jeq$-class"~$J$ in~$\Closgordp{A}$, which is an "$\Req$-class", and which contains~$A$.
	Since furthermore~$a\Acdot a \Acdot \Closgordp{A}= \Closgordp A$, we get that~$a\Acdot a$ belongs to~$J$. Hence~$J$ is "regular@@Jclass".
	
	Since the second item does not hold and by \Cref{prop:saturation-take-letter-out},
	\[\Closgo{A}\Closgordp A=\Closgo{A}\Closgordp{\Closgo{A}}=\Closgordp A.\]
	Hence, all~$a\in\Closgordp A$ is of the form~$b\Acdot c$ where~$b\in\Closgo{A}$, and $c\in\Closgordp A$.
	Hence $J$ is "$\omega$-stable", and thus hence "$\Heq$-trivial"
	by \Cref{prop:omega-stable-H-trivial}.
	
	It follows that the product, the $\omega$-exponent
	and the $-\Amerge$ operation of elements of $J$ stay in $J$, respectively because $J$ is "regular@@Jclass",
	because it is "$\omega$-stable" and because it is "$\Heq$-trivial" and thus "group-trivial".
	Hence $\Closgordp A \subseteq J$.

	Overall~$\Closgordp A = J$ consists of a single "$\Req$-class" which is~"$\Heq$-trivial", and hence "$\Leq$-trivial".
	Hence, for all~$x,y\in\Closgordp A$, $x\Acdot y\Jeq y$ and $x\Acdot y \Lleq y$, thus $x\Acdot y\Leq x$. By "$\Leq$-triviality",
	$x\Acdot y = y$. Finally, $y^\omega=(x\Acdot y)^\omega=x\Acdot(y\Acdot x)^\omega= x\Acdot x^\omega=x^\omega$ (using $y=x\Acdot y$ and $x = y\Acdot x$ from the previous item).
\end{proof}

\subsection{Proof of \Cref{lemma:fo-approximant-aord}}
\label{proof-lemma:fo-approximant-aord}

The proof of \Cref{lemma:fo-approximant-aord} requires a few additional properties on "Green's relations" on "ordinal monoids" and "first-order logic".

\begin{lemma}
	\label{lemma:fo-def-languages}
	For every $m,p \in \Nats$,
	the following languages over $\{a\}$
	are "\FO-definable@@lang":
	\begin{multline*}
		\{a^{\kappa} \mid \kappa \geqord \omega^m\timesord p\},\qquad
		\{a^{\kappa} \mid \kappa \greaterord \omega^m\timesord p\},\\
		\{a^{\kappa} \mid \kappa \leqord \omega^m\timesord p\} \quad\text{and}\quad
		\{a^{\kappa} \mid \kappa \lessord \omega^m\timesord p\}.
	\end{multline*}
\end{lemma}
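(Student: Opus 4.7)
The plan is an induction on $m$, in which I construct auxiliary "\FO-formulae" $R_m(x,y)$ with two free variables expressing ``the length $|[x,y)|$ is a (possibly zero) multiple of $\omega^m$''. Concretely, set $R_0(x,y) := x\leqslant y$ and, inductively,
\[R_{m+1}(x,y) := R_m(x,y) \wedge \forall z.\,\bigl(x\leqslant z<y \wedge R_m(x,z)\bigr) \rightarrow \exists w.\, z<w<y \wedge R_m(x,w).\]
The inductive clause says that the $\omega^m$-multiples strictly below $y$ have no maximum, which pins the length down to a multiple of $\omega^m\cdot\omega=\omega^{m+1}$; a direct induction on $m$ confirms this reading of $R_m$. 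From $R_m$ I then define the bounded-interval formula $L_m(x,y):=\exists z.\, x<z\leqslant y \wedge R_m(x,z)$ expressing $|[x,y)|\geqord\omega^m$: if $|[x,y)|\geqord\omega^m$ one may take $z$ to be the position at index $x+\omega^m$, and conversely any $z>x$ with $R_m(x,z)$ gives $|[x,z)|$ as a positive multiple of $\omega^m$.

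To lift these bounded-interval formulas to sentences about the whole word---which may have no maximum position---I will introduce a ``suffix'' formula $S_m(z)$ expressing ``the suffix $\{w:w\geqslant z\}$ has length $\geqord\omega^m$''. Take $S_0(z):=\top$; for $m\geqslant 1$ define $S_m(z)$ as the disjunction of (i) $\exists w.\, w>z \wedge L_m(z,w)$, covering the case where the suffix length strictly exceeds $\omega^m$, and (ii) $\forall w.\, R_{m-1}(z,w)\rightarrow\exists w'.\, w<w' \wedge R_{m-1}(z,w')$, covering the case where the suffix length equals $\omega^m$ exactly (so that the $\omega^{m-1}$-multiples above $z$ are cofinal yet no ``right endpoint'' position exists in the word).

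The four sentences of the lemma then follow by composition. The case $p=0$ is trivial, as the four languages are respectively $\{a\}\ord$, the non-empty words, $\{\varepsilon\}$, and $\emptyset$. For $p\geqslant 1$, the sentence ``$\kappa\geqord\omega^m\cdot p$'' asserts that the word has a minimum position $x_0$ and that there exist $x_0\leqslant z_1\leqslant\cdots\leqslant z_{p-1}$ satisfying
\[L_m(x_0,z_1)\wedge\bigwedge_{i=1}^{p-2}L_m(z_i,z_{i+1})\wedge S_m(z_{p-1}).\]
The strict variant ``$\kappa\greaterord\omega^m\cdot p$'' is obtained by replacing the last conjunct $S_m(z_{p-1})$ with $\exists z_p.\, z_{p-1}<z_p \wedge L_m(z_{p-1},z_p)$, which forces one more bounded chunk of length $\geqord\omega^m$ and hence that the total length strictly exceeds $\omega^m\cdot p$. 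The $\leqord$ and $\lessord$ variants follow by negation. The hardest part will be verifying the correctness of $S_m$ in the boundary case where the suffix length equals $\omega^m$ exactly: since $\omega^m$ is a limit ordinal for $m\geqslant 1$, no position at index $z+\omega^m$ lies in the word, and clause (ii) is precisely the device that detects this case via the cofinality of the $\omega^{m-1}$-multiples above $z$.
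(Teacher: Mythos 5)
Your proof is correct, but it takes a genuinely more syntactic route than the paper's. Both arguments are inductions on $m$ that peel off one factor of $\omega$ per step; the paper, however, runs the induction at the level of languages: it observes that $a^\kappa$ has length at least $\omega^{m+1}\timesord p$ exactly when the image of $a^\kappa$ under the finite condensation has length at least $\omega^m\timesord p$, and then invokes \Cref{lemma:composition-fo-condensation} (with the constant map as the local function $f$ and the level-$m$ indicator as the global function $g$) to transfer first-order definability up one level. You instead run the induction at the level of formulas with free variables, constructing $R_m(x,y)$ explicitly; your inductive clause (the $\omega^m$-multiples below $y$ have no maximum) is a sound substitute for the condensation step, because the supremum of an increasing $\omega$-indexed sequence of $\omega^m$-multiples is an $\omega^{m+1}$-multiple, and that supremum must equal $|[x,y)|$, since otherwise it would itself be a maximal multiple below $y$. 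The price of bypassing the composition lemma is exactly the complication you single out: since an ordinal word need not have a last position, you need the separate suffix formula $S_m$, and I confirm that its two clauses jointly characterise suffix length at least $\omega^m$ --- clause (i) handles strict excess over $\omega^m$, clause (ii) handles equality via cofinality of the $\omega^{m-1}$-multiples, and both fail when the suffix is shorter, the relevant multiples then forming a finite set with a maximum. The only blemish is notational: for $p=1$ the chain $z_1\leqslant\dots\leqslant z_{p-1}$ is empty and $S_m(z_{p-1})$ should be read as $S_m(x_0)$; with that convention the assembly, the strict variant, and the reduction of the two remaining cases to negation all go through. What the paper's route buys is brevity and reuse of machinery already needed elsewhere; what yours buys is a self-contained construction that additionally produces the two-variable interval formulas $R_m$ and $L_m$, which could be of independent use.
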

\begin{proof}
	Since "first-order definable languages" are closed under
	complementation, we only have to prove
	that $\{a^{\kappa} \mid \kappa \geqord \omega^m\timesord p\}$
	and $\{a^{\kappa} \mid \kappa \leqord \omega^m\timesord p\}$
	are "definable in first-order logic".
	We focus on the first class of languages---the second
	can be handled using similar techniques.
	The main idea is to build a formula recognising
	$\{a^{\kappa} \mid \kappa \geqord \omega^m\timesord p\}$
	by induction on $m\in\Nats$,
	using the "finite condensation".
	\begin{itemize}
	\item For $m = 0$, we need to define ordinals greater or
		equal to $p \in \Nats$, which is trivial.
	\item Then, observe that the set of "countable ordinals"
		greater or equal
		to $\omega^{m+1} \timesord p$ is condensed by
		$\condensation_{\finite}$ to the set of "countable ordinals"
		that is greater or equal to $\omega^m\timesord p$.
		If we have a formula defining the latter set,
		we can see as an "\FO-definable function"
		$G : a\ord \to \{\top,\bot\}$.
		By letting $F: a\ordp \to \{a\}$ be the constant map,
		we obtain by \Cref{lemma:composition-fo-condensation}
		that the function
		\[G \focomp{\finite} F: a\ord \to \{\top,\bot\}\]
		is "\FO-definable@@map"---and hence, the preimage of $\top$,
		that is the set of countable ordinals greater or
		equal to $\omega^p \timesord p$ is "\FO-definable@@lang".
		\qedhere
	\end{itemize}
\end{proof}

\begin{lemma}\label{lemma:omegas-L-equivalent}
	If a finite "ordinal monoid",
	$x\Jeq y\Jeq x\Aomega\Jeq y\Aomega$ implies~$x\Aomega\Leq y\Aomega$.
\end{lemma}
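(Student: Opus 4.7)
My plan is to proceed in three main steps. Let $J$ denote the common $\Jeq$-class of $x$, $y$, $x\Aomega$, and $y\Aomega$.

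First, since $x\Aomega \in J$, the class $J$ is $\omega$-stable by definition, so by \Cref{prop:omega-stable-H-trivial} it is $\Heq$-trivial.

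Second, I would show that both $x$ and $y$ are idempotent. Generalised associativity together with $1+\omega = \omega$ gives $x \cdot x\Aomega = \product(x^{1+\omega}) = \product(x^\omega) = x\Aomega$. Combined with $x\Aomega \Jeq x$, stability of finite semigroups yields $x\Aomega \Req x$, and hence $x$ is idempotent by \Cref{prop:la-proposition-qui-fait-le-cafe}; symmetrically $y$ is idempotent.

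For the main claim, I would use the hypothesis $x \Jeq y$ to write $x = u y v$ for some $u, v \in M$, and then apply the shift identity for $\omega$-iteration, $(ab)\Aomega = a \cdot (ba)\Aomega$, which is an immediate consequence of generalised associativity applied to the equality of ordinal words $(ab)^\omega = a \cdot (ba)^\omega$. This gives $x\Aomega = u \cdot (yvu)\Aomega$ and, by a second shift together with $x = uyv$, $(yvu)\Aomega = yv \cdot x\Aomega$. Hence $x\Aomega \Lleq (yvu)\Aomega$ and $(yvu)\Aomega \Lleq x\Aomega$, so $x\Aomega \Leq (yvu)\Aomega$ in $J$. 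A symmetric analysis from $y = u'xv'$ yields $y\Aomega \Leq (xv'u')\Aomega$ with $(xv'u')\Aomega = xv'\cdot y\Aomega$. The $\Heq$-triviality of $J$, together with the auxiliary identity $y\Aomega\cdot y = y$---which follows by observing that $y\Aomega y$ is trivially $\Rleq y\Aomega$ and $\Lleq y$, lies in $J$ (using $y \Req y\Aomega$ and the idempotency of $y$), and hence sits in the necessarily unique cell $\Req(y)\cap \Leq(y) = \{y\}$---and its analogue for $x$, then forces $(yvu)\Aomega$ and $y\Aomega$ to belong to the same $\Leq$-class of $J$, giving $x\Aomega \Leq y\Aomega$ as required.

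The main obstacle I expect is this final identification step: naive applications of the shift identity produce only tautological relations such as $x\Aomega = x \cdot x\Aomega$, so extracting genuine $\Leq$-information about $x\Aomega$ versus $y\Aomega$ requires delicate chaining of the shift identities with the rigidity provided by $\Heq$-triviality (each $\Req\cap\Leq$ intersection in $J$ is a singleton) and the idempotency of $x$ and $y$.
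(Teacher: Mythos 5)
Your first two steps are correct --- and your step 2 is in fact a slight sharpening of the paper's argument, which only passes to the idempotent powers $x^m,y^m$ rather than showing that $x$ and $y$ are themselves idempotent --- and the shift identities in step 3 are valid. The proof breaks at the auxiliary identity $y\Aomega\Acdot y=y$: it is \emph{false} under the hypotheses of the lemma. Take the six-element Brandt monoid $M=\{1,a,b,ab,ba,0\}$ with $a\Acdot a=b\Acdot b=0$, $a\Acdot b\Acdot a=a$, $b\Acdot a\Acdot b=b$, and equip it with the omega-iteration $1\Aomega=1$, $0\Aomega=a\Aomega=b\Aomega=0$, $(ab)\Aomega:=a$, $(ba)\Aomega:=ba$; the presentation axioms $(st)\Aomega=s\Acdot(ts)\Aomega$ and $(s^n)\Aomega=s\Aomega$ hold (the only nontrivial instances being $(ab)\Aomega=a=a\Acdot(ba)\Aomega$ and $(ba)\Aomega=ba=b\Acdot(ab)\Aomega$), so this is a finite ordinal monoid. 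With $x=y=ab$ the hypothesis holds, since $ab$ and $(ab)\Aomega=a$ lie in the same $\Jeq$-class $J=\{a,b,ab,ba\}$; moreover $y=ab$ is idempotent and $y\Req y\Aomega$, exactly as your step 2 predicts. Yet $y\Aomega\Acdot y=a\Acdot ab=(a\Acdot a)\Acdot b=0\notin J$. The precise gap in your justification is the assertion that $y\Aomega\Acdot y$ lies in $J$: from $y\Aomega\Acdot y\Rleq y\Aomega$ and $y\Aomega\Acdot y\Lleq y$ you only get $y\Aomega\Acdot y\Jleq y$, and neither $y\Req y\Aomega$ nor the idempotency of $y$ prevents a strict drop --- indeed $y\Aomega\Acdot y\in J$ is equivalent to $y\Aomega$ being idempotent, which is exactly what fails here. (Independently, even granting the identity, the final sentence does not explain the mechanism by which $(yvu)\Aomega$ and $y\Aomega$ would land in the same $\Leq$-class.)

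The paper avoids this by conjugating idempotents: since $x^m$ and $y^m$ are $\Jeq$-equivalent idempotents, there are $a,b$ with $a\Acdot b=x^m$ and $b\Acdot a=y^m$, whence $x\Aomega=(a\Acdot b)\Aomega=a\Acdot(b\Acdot a)\Aomega=a\Acdot y\Aomega\Lleq y\Aomega$, and symmetry concludes. Your decomposition can be salvaged along the same lines without invoking the false identity: set $s:=y\Acdot v\Acdot u$. From $x\Aomega=u\Acdot s\Aomega$ and $s\Aomega\Jleq s\Jleq y$ one gets $y\Jeq x\Aomega\Jleq s\Aomega\Jleq s\Jleq y$, so $s,s\Aomega\in J$; stability then gives $s\Req s\Aomega$ (hence $s$ is idempotent by \Cref{prop:la-proposition-qui-fait-le-cafe}) and $s\Req y$. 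Two $\Req$-equivalent idempotents satisfy $y\Acdot s=s$ and $s\Acdot y=y$, so $y\Aomega=(s\Acdot y)\Aomega=s\Acdot(y\Acdot s)\Aomega=s\Acdot s\Aomega=s\Aomega$, and therefore $x\Aomega=u\Acdot s\Aomega=u\Acdot y\Aomega\Lleq y\Aomega$. This is essentially the paper's conjugacy argument with the conjugating pair exhibited explicitly; the lesson is that the right-multiplication behaviour of $y\Aomega$ must be extracted from an idempotent $\Req$-equivalent to $y$, not from $y\Aomega$ itself.
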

\begin{proof}
	Indeed, since $(x\cdot x)\Aomega = x\Aomega\Jeq x$, we get $x\cdot x\Jeq x$.
	Hence~$x^m\Jeq x$ for all~$m$, and the same holds for~$y$;
	Hence, there is a power~$m$ such that~$x^m$ and~$y^m$ are $\Jeq$-equivalent idempotents. 
	However, we know that two idempotents in the same "$\Jeq$-class" are conjugate: there exists~$a,b$
	such that $a\cdot b=x$ and $b\cdot a=y$.
	We now have~$x\Aomega=(x^m)\Aomega=(a\cdot b)\Aomega=a\cdot(b\cdot a)\Aomega = a \cdot (y^m)\Aomega = a\cdot y\Aomega$.
	Thus, $x\Aomega\Lleq y\Aomega$. Since~$x$ and~$y$ play a symmetric role, we obtain~$x\Aomega\Leq y\Aomega$.
\end{proof}

\begin{lemma}\label{lemma:omegas-stabilise}
	In a finite "ordinal monoid", for every element $x$,
	there exist~$\ell$ such that $x^{\kl[\Aomega]{\underline\omega}^\ell}=x^{\kl[\Aomega]{\underline\omega}^{\ell+1}}$.
\end{lemma}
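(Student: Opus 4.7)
The plan is to analyze the sequence defined by $z_0 := x$ and $z_{k+1} := z_k\Aomega$; the statement amounts to showing that this sequence is eventually constant.

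The first observation is that, for every $z$ in an "ordinal monoid", generalised associativity applied to the decomposition $\omega = 1 + \omega$ of the indexing ordinal yields $z\Aomega = z \cdot z\Aomega$. Hence $z_{k+1} \Rleq z_k$ for every $k$. Since $\ordmonoid$ is finite, the $\Rleq$-decreasing sequence $(z_k)$ must stabilise modulo $\Req$: there exists $k_0$ such that $z_k \Req z_{k_0}$ for all $k \geq k_0$. For such a $k$, we have $z_k \Req z_k\Aomega = z_{k+1}$, and \Cref{prop:la-proposition-qui-fait-le-cafe} gives that $z_k$ is "idempotent".

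Next, I would apply \Cref{lemma:omegas-L-equivalent} with $x := z_k$ and $y := z_{k+1}$ for $k \geq k_0$. The hypothesis $z_k \Jeq z_{k+1} \Jeq z_k\Aomega \Jeq z_{k+1}\Aomega$ holds, since $z_k\Aomega = z_{k+1}$ and $z_{k+1}\Aomega = z_{k+2}$ all lie in the common $\Req$-class of $z_{k_0}$, hence in the same $\Jeq$-class. The conclusion gives $z_{k+1} \Leq z_{k+2}$, which combined with $z_{k+1} \Req z_{k+2}$ yields $z_{k+1} \Heq z_{k+2}$. But both $z_{k+1}$ and $z_{k+2}$ are idempotent by the previous step, and an $\Heq$-class contains at most one idempotent (the $\Heq$-class of an idempotent is a group, with that idempotent as its unique identity). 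Therefore $z_{k+1} = z_{k+2}$, and taking $\ell := k_0 + 1$ completes the proof.

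The main obstacle is that $\Rleq$-decreasingness in a finite monoid only forces the sequence to be eventually periodic, not eventually constant: one must rule out the possibility of a nontrivial cycle entirely inside a single $\Req$-class. The crux is therefore the use of \Cref{lemma:omegas-L-equivalent} to promote $\Req$-equivalence in the stable region to full $\Heq$-equivalence, at which point uniqueness of the idempotent in a group $\Heq$-class collapses the period to~$1$.
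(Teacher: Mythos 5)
Your proof is correct and follows essentially the same route as the paper's: both exploit the $\Rleq$-decreasingness of the sequence $n\mapsto x^{\underline\omega^{n}}$ to reach a region where consecutive terms are $\Req$-equivalent, and then invoke \Cref{lemma:omegas-L-equivalent} to upgrade that to $\Heq$-equivalence and collapse it to equality. The only cosmetic difference is the last step: you conclude via idempotency (\Cref{prop:la-proposition-qui-fait-le-cafe}) and uniqueness of the idempotent in an $\Heq$-class, whereas the paper invokes $\Heq$-triviality of the $\omega$-stable $\Jeq$-class (\Cref{prop:omega-stable-H-trivial}), which rests on the same underlying fact.
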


\begin{proof}
	Consider first some~$y$ such that $y\Req (y\Aomega)\Aomega\Req y\Aomega$.
	By applying \Cref{lemma:omegas-L-equivalent} on~$y$ and $z=y\Aomega$, we get that~$y\Aomega\Leq z\Aomega=y^{\kl[\Aomega]{\underline\omega}^2}$
	Hence~$y\Aomega \Heq y^{\kl[\Aomega]{\underline\omega}^2}$.
	Hence, by \Cref{prop:omega-stable-H-trivial}, the $\Jeq$-class of~$y$ is "$\Heq$-trivial".
	We obtain~$y\Aomega = y^{\kl[\Aomega]{\underline\omega}^2}$.
	
	Consider now the series~$n\mapsto x^{\kl[\Aomega]{\underline\omega}^n}$. It is $\Rleq$-decreasing since~$y\Aomega\Rleq\omega$.	
	In less than $2|M|$ steps, by PH, there is~$\ell$ such that $x^{\kl[\Aomega]{\underline\omega}^{\ell}}\Rgeq x^{\kl[\Aomega]{\underline\omega}^{\ell+1}}\Rgeq x^{\kl[\Aomega]{\underline\omega}^{\ell+2}}$.
	By the previous remark applied on~$y=x^{\kl[\Aomega]{\underline\omega}^{\ell}}$, we obtain that the series $n\mapsto x^{\kl[\Aomega]{\underline\omega}^n}$ is constant starting at~$\ell+1$.
\end{proof}

We are now ready to prove \Cref{lemma:fo-approximant-aord}, which states 
the existence of an "\FO-approximant" over one-letter "words".

\begin{proof}[Proof of \Cref{lemma:fo-approximant-aord}]
	Let $n \in \Nats$ be such that $x^n = x\idem$ for every
	$x \in \ordmonoid$, and following \Cref{lemma:omegas-stabilise},
	let $\ell \in \Nats$ be such that $x^{\omega^\ell} = x^{\omega^{\ell+1}}$ for every $x\in\ordmonoid$.
	We use now ``Cantor's normal form'' for "countable ordinals" (see, e.g., \cite[Thm.~2.26]{jech2006set}), 
	and get that $\kappa$ can be uniquely written as
	\begin{align*}
		\kappa = \omega^\ell \timesord \kappa_\ell \plusord
		\omega^{\ell-1}\timesord k_{\ell-1} \plusord
		\hdots \plusord
		\omega^1 \timesord k_1 \plusord
		k_0
	\end{align*}
	where $k_m$ are natural numbers for $m<\ell$ and $\kappa_\ell$
	is a countable ordinal.
	Then define $\rho\colon a\ord \to \Closgord{a}$ by:
	\begin{align*}
		\rho(a^\kappa) &:= \tau(a^{\omega^\ell\timesord \kappa\ell}) \Acdot
		\tau(a^{\omega^{\ell-1}\timesord k_{\ell-1}}) \cdots
		\tau(a^{\omega^1 \timesord k_1}) \Acdot \tau(k_0)\\
		&\text{in which}\qquad
		\tau(a^{\omega^m \timesord k_m}) := 
			\begin{cases}
				a^{\omega^m\timesord k_m} & \text{ if $k_m < n$,} \\
				(a^{\omega^m})\Amerge & \text{ otherwise,}
			\end{cases}
	\end{align*}
	for every $m < \ell$
	and $\tau(a^{\omega^\ell \timesord \kappa_\ell}) = \Aunit$
	if $\kappa_\ell = 0$ and $a^{\omega^\ell}$ otherwise.
	Note that, by definition of $\ell$, we have
	\[
		a^{\omega^\ell} = (a^{\omega^\ell})\Amerge = a^{\omega^{\ell+1}} =
		(a^{\omega^{\ell+1}})\Amerge.
	\]
	The function $\rho$ satisfies $\pi(u) \leqslant \rho(u)$
	for every $u\in a\ord$, by definition, and its 
	"\FO-definability@@lang" follows from \Cref{lemma:fo-def-languages}.
\end{proof}

\section{Proofs for the related problems}
\label{section:proof-related}

In this appendix, we first prove
the computability of "\FO-pointlikes" (\Cref{proposition:pointlikes-computable})
and then deduce the decidability of the "\FO-covering problem"
for "countable ordinal words".

\subsection{Proof of \Cref{proposition:pointlikes-computable}}
\label{proof-proposition:pointlikes-computable}

We have to prove that a set~$X$ is "pointlike" if and only if it is contained in some~$Y\in\Sat$.

Let us first show that all sets~$X\in\Sat$ are "pointlike".
Let us fix some~$k$, and set~$(u_x)_{x\in X}$ to be the sequence of words which exists from 
\Cref{lemma:saturation-sets-are-pointlike} for the set~$X$ and the value~$k$.
Set also~$u=u_x$ for some arbitrarily chosen $x\in X$.
Since~$u_x\FOkeq u$ for all~$x\in X$, $u_x\in\closureFOk{u}$. Hence 
\begin{align*}
	X&=\{\product(\ordmap(u_x))\colon x\in X\}
		\subseteq \product(\ordmap(\closureFOk{u}))\ .
\end{align*}
Since this holds for all~$k$, we obtain~$X\in\PL(\lettermap)$.

Conversely, let~$X$ be a "pointlike", we aim at proving that $X\subseteq Y$ for some~$Y\in\Sat$.
For this, let~$\PFOproduct$ be the "\FO-approximant" which exists from \Cref{lemma:completude-core}.
Since~$\PFOproduct$ is an "\FO-definable map", there exists some~$k$ such that all~$\PFOproduct^{-1}(x)$ with $x\in X$ is
defined by an "\FO-sentence" of "quantifier depth" at most~$k$. By definition of "pointlike sets", there exists a word~$u\in \alphabet\ord$
such that $X\subseteq \product(\ordmap(\closureFOk{u}))$. In particuliar, this means that for all~$x\in X$, there exists~$u_x\FOkeq u$
such that~$\product(\ordmap(u_x))=x$. Let~$Y=\PFOproduct(\singordmap(u))\in\Sat$. Since~$u_x\FOkeq u$ and~$k$
has been chosen sufficiently large with respect to~$\PFOproduct$, we have~$\PFOproduct(\singordmap(u_x))=Y$ for all~$x\in X$.
We obtain:
\begin{align*}
	X	&\subseteq\{\product(\ordmap(u_x))\colon x\in X\}\\
		&\subseteq\down\{\PFOproduct(\singordmap(u_x))\colon x\in X\}\\
		&=\down Y\in\down\Sat\ .
\end{align*}

\subsection{Proof of \Cref{proposition:covering-decidable}}
\label{proof-proposition:covering-decidable}

	One begins, as usual with a finite "ordinal monoid" \(\ordmonoid\) 
	and a map \(\lettermap: \alphabet \to M\) that recognises the languages~$L,K_1,\dots,K_n$.
	\AP The ""algorithm@@cov"" computes $\Sat$ and answers~`""no@@cov""' if there exists  some $X\in\Sat$ that intersects all of
	\[
		F_L:=\product(\ordmap(L)),\ F_{K_1}:=\product(\ordmap(K_1)),\dots,
		\ F_{K_n}:=\product(\ordmap(K_n))\ .
	\]
	Otherwise, it answers~`""yes@@cov""'.
	
	Let us prove that this algorithm is correct.
	Assume that "it@algorithm@cov" answers `"yes@@cov"'.
	 In this case, let~$\rho\colon A\ord\to \Sat$ be the "\FO-approximant" that exists by \Cref{lemma:completude-core}, where $A = \{\{\lettermap(a)\} \mid a \in \alphabet\}$, and define
		\begin{align*}
			C_i &= \{u\in\alphabet\ord\mid\rho(\singordmap(u))\cap \product(\ordmap(K_i))=\varnothing\}\ .
		\end{align*}
		Let~$u\in L$. We know that~$\rho(\singordmap(u))\in \Sat$
		intersects~$\product(\ordmap(L))$, and thus, there is some~$\product(\ordmap(K_i))$ that it does not intersect. Hence~$u\in C_i$.
		Furthermore, by construction, if~$u\in C_i$, then~$\rho(\singordmap(u))\cap\product(\ordmap(K_i))=\varnothing$ by definition, and thus, since $\product(\ordmap(u))\in\rho(u)$, $\product(\ordmap(u))\not\in\product(\ordmap(K_i))$. Hence~$C_i\cap K_i=\varnothing$. Hence, there is a positive answer to the "covering problem".	
	
	If the "algorithm@@cov" answers `"no@@cov"', this is because there exists~$X\in\Sat$
	that intersects all of $F_L,F_{K_1},\dots,F_{K_n}$.
	Let $x\in X\cap F_L$,
	and~$x_i\in X\cap F_{K_i}$ for all~$i=1\dots n$ be the elements witnessing these intersections.
	Assume for the sake of contradiction that there would exist~$C_1,\dots,C_n$
	witnessing that the answer to the "covering problem" is positive. Let~$k$ be sufficiently large for $C_1,\dots,C_k$
	be all definable by "\FO-sentences" of "quantifier depth" at most~$k$.
	 By \Cref{lemma:saturation-sets-are-pointlike}, there exist $\FOkeq$-equivalent words~$u_x$ for all~$x\in X$ such that~$\product(\ordmap(u_x))=x$ for all~$x\in X$.
	 Since~$L$ is "recognised@@OM" by $(\ordmonoid,\lettermap,F_L)$ and $K_i$ by $(\ordmonoid,\lettermap,F_{K_i})$,
	 it follows that $u_x\in L, u_{x_1}\in K_1,\dots, u_{x_n}\in K_n$.
	 Since~$u_x\in L$, and $L\subseteq\cup_i C_i$, it follows that there exist some~$i$ such that $u_x\in C_i$.
	 But~$C_i$ is defined by a "\FO-sentence" of "quantifier depth" at most~$k$ and $u_x\FOkeq u_{x_i}$, thus $u_{x_i}\in C_i$.
	 This witnesses that~$C_i\cap K_i\neq\varnothing$. A contradiction.

\end{document}